\newtheorem*{rep@theorem}{\rep@title}
\newcommand{\newreptheorem}[2]{%
\newenvironment{rep#1}[1]{%
 \def\rep@title{#2 \ref{##1}}%
 \begin{rep@theorem}}%
 {\end{rep@theorem}}}
\newtheorem{theorem}{Theorem}
\newtheorem{lemma}[theorem]{Lemma}
\newtheorem{observation}[theorem]{Observation}
\newtheorem{cor}[theorem]{Corollary}
\theoremstyle{definition}
\newtheorem{definition}[theorem]{Definition}
\newtheorem{remark}[theorem]{Remark}
\newtheorem{example}[theorem]{Example}
\newcommand{\ket}[1]{\left| #1 \right\rangle}
\newcommand{\bra}[1]{\left\langle #1 \right|}
\newcommand{\tr}{\operatorname{Tr}}
\def\({\left(}
\def\){\right)}
\begin{document}

\title{\bf\LARGE
Minimum Entangling Power is Close to Its Maximum}

\author{
  Jianxin Chen
  \hspace{7mm}\\[3mm]
  {\small\it Alibaba Quantum Laboratory, Alibaba Group, Bellevue, Washington 98004, USA}\\[-1mm]
  \hspace{3mm}\\[1mm]
  Zhengfeng Ji
  \hspace{7mm}\\[3mm]
  {\small\it Centre for Quantum Software and Information, School of Software, Faculty of Engineering and Information Technology, University of Technology Sydney, NSW, Australia
}\\[-1mm]
  {\small\it State Key Laboratory of Computer Science, Institute of Software, Chinese Academy of Sciences, Beijing, China}\\[-1mm]
  \hspace{3mm}\\[1mm]
  David W. Kribs
  \hspace{7mm}\\[3mm]
  {\small\it Department of Mathematics \& Statistics, University of Guelph, Guelph, Ontario, Canada}\\[-1mm]
  {\small\it Institute for Quantum Computing, University of Waterloo, Waterloo, Ontario, Canada}\\[-1mm]
  \hspace{3mm}\\[1mm]
  Bei Zeng
  \hspace{7mm}\\[3mm]
  {\small\it Department of Mathematics \& Statistics, University of Guelph, Guelph, Ontario, Canada}\\[-1mm]
  {\small\it Institute for Quantum Computing, University of Waterloo, Waterloo, Ontario, Canada}\\[-1mm]
  \hspace{3mm}\\[1mm]
  Fang Zhang
  \hspace{7mm}\\[3mm]
  {\small\it Alibaba Quantum Laboratory, Alibaba Group, Bellevue, Washington 98004, USA}\\[-1mm]
}

\begin{abstract}

Given a quantum gate $U$ acting on a bipartite quantum system, its maximum (average, minimum) entangling power is the maximum (average, minimum) entanglement generation with respect to certain entanglement measure when the inputs are restricted to be product states. In this paper, we mainly focus on the ``weakest" one, i.e., the minimum entangling power, among all these entangling powers. We show that, by choosing von Neumann entropy of reduced density operator or Schmidt rank as entanglement measure, even the ``weakest" entangling power is generically very close to the maximum possible value of the entanglement measure. In other words, maximum, average and minimum entangling powers are generically close. We then study minimum entangling power with respect to other Lipschitiz-continuous entanglement measures and generalize our results to multipartite quantum systems.

As a straightforward application, a random quantum gate will almost surely be an intrinsically fault-tolerant entangling device that will always transform every low-entangled state to near-maximally entangled state.
%
%
%
%
%
\end{abstract}
\pacs{03.65.Ud, 03.67.Lx, 03.67.Mn} 

\maketitle
  
\section{Introduction} \label{sec:introduction}

One of the most amazing phenomena in quantum mechanics is entanglement, which can be used to enhance channel capacity~\cite{Bennett:1992zzb}, to defeat quantum noise~\cite{Chen:2011du}, to speed up quantum algorithms~\cite{JOZSA:2003bj}. It lies in the center of many quantum information processing tasks. In addition to its important role for foundation of quantum mechanics, entanglement has also been recognized as a fundamental resource for quantum communication and quantum computation.  Considerable efforts have been devoted to various aspects of entanglement theory~\cite{Horodecki:2009gb}.

In practice, entanglement are generated through quantum evolutions. The investigation of entangling capabilities of quantum evolutions also attracts a lot of attention. Understanding various aspects of the nonlocality of quantum dynamical operations is of not only broad interest, but also fundamental importance. In \cite{Nielsen:2003jz}, Nielsen et al. proposed to develop a theory quantifying the strength of a quantum dynamical operation, as a physical resource. 
In fact, the journey to find certain strength measures capturing some nonlocal attributes of quantum dynamical operations begins even before their proposition~\cite{Zanardi:2000wg}.

A natural nonlocality measure of a quantum operation is its entangling power, which was first introduced by Zanardi et al.~\cite{Zanardi:2000wg}, given by the linear entropy produced with $U$, averaged over a certain distribution of pure product states. By replacing the linear entropy with other entanglement measures, a large family of ``average" entangling powers can be defined as follows.

For a given entanglement measure $f$, let 
\begin{equation}
P_{avg}^{(f)}\left(U_{[A:B]}\right)=\int\limits_{\ket{\alpha}\in \mathbb{C}^{d_A}, \ket{\beta}\in \mathbb{C}^{d_B}}f(U(\ket{\alpha}\otimes \ket{\beta}))
\end{equation}
be the amount of entanglement (measureed by $f$) produced with $U$, averaged over a certain distribution of pure product states. We may omit the subscript ${[A:B]}$ of $U$ if there is no ambiguity.
 
Similarly, one can also define other entangling powers as ``maximum" or ``minimum" entanglement generated by a quantum gate $U$, with respect to a given entanglement measure $f$ when the inputs are restricted to be product states:
\begin{align}
P_{max}^{(f)}\left(U_{[A:B]}\right)&=\max\limits_{\ket{\alpha}\in \mathbb{C}^{d_A}, \ket{\beta}\in \mathbb{C}^{d_B}}f(U(\ket{\alpha}\otimes \ket{\beta})),\\
P_{min}^{(f)}\left(U_{[A:B]}\right)&=\min\limits_{\ket{\alpha}\in \mathbb{C}^{d_A}, \ket{\beta}\in \mathbb{C}^{d_B}}f(U(\ket{\alpha}\otimes \ket{\beta})).
\end{align}

The above-mentioned quantities $P_{max}^{(f)}(U)$, $P_{min}^{(f)}(U)$ and $P_{avg}^{(f)}(U)$ are called entangling powers of $U$ with respect to $f$, or entangling powers for short. Such entangling powers not only provide strength measures capturing the nonlocality of a quantum operation but also play important roles in estimating communication capacity of bipartite unitary operations~\cite{Berry:2007ke}.  Investigation of entangling power attracts great interests in literature. Considerable efforts have been devoted to the ``average" entangling power~\cite{Zanardi:2000wg,Zanardi:2001vp,Yang:2008cb,Wang:2003fl,Scott:2004fc,Batle:2005vc,Balakrishnan:2010cv,Clarisse:2005by,Lu:2008wt,Ma:2007fe}. However, it is rather unclear about the other entangling powers. A further investigation of different entangling powers could help us to better understand quantum dynamics and to achieve the elusive goal of quantum computation.

Obviously, we have
\begin{equation}
0\leq P_{min}^{(f)}(U)\leq P_{avg}^{(f)}(U)\leq P_{max}^{(f)}(U)\leq \max\limits_{\ket{\psi}\in \mathbb{C}^{d_A}\otimes \mathbb{C}^{d_B}} f(\ket{\psi}).
\end{equation}

Intuitively, the maximum entangling power $P_{max}^{(f)}$ may be generically very large. In fact,  this is a direct consequence of Hayden et al.'s ``concentration of measure" phenomenon for quantum states~\cite{Hayden:2006gx} if we choose $f$ as the von Neumann entropy of reduced density operator. It was shown that a random state in $d_A\otimes d_B$ quantum system will be almost surely highly entangled if $\min{(d_A,d_B)}$ is large enough, meaning that with large probability, the state will have near-maximum entropy of entanglement.  Therefore, ``highly entangled" is a generic property of quantum states. They further proposed a subspace of dimension $\Omega\left(\frac{d_Ad_B}{{(\log d_A)}^{2.5}}\right)$ in which all states are closed to maximally entangled states in the usual topology.  Let's fix $\ket{\alpha_0}$ and $\ket{\beta_0}$.

\begin{equation}
P_{max}^{(f)}(U)=\max\limits_{\ket{\alpha}\in \mathbb{C}^{d_A}, \ket{\beta}\in \mathbb{C}^{d_B}}f(U(\ket{\alpha}\otimes \ket{\beta}))\geq f(U(\ket{\alpha_0}\otimes \ket{\beta_0})).
\end{equation}
When $U$ is chosen uniformly according to the Haar measure in the unitary group, $U(\ket{\alpha_0}\otimes \ket{\beta_0})$ is a random state according to a unique unitarily invariant probability measure which is induced by the Haar measure. It is already shown that $f\left(\ket{\phi}^{AB}\right)$ is generically very large for random quantum state $\ket{\phi}^{AB}$ when the bipartite system is large enough. Therefore, the maximum entangling power $P_{max}^{(f)}$ is generically very large for large enough systems.

If we choose $f$ as the von Neumann entropy of reduced density operator and allow ancillary systems~\cite{Linden:2009cq},  Linden et al. proved that average entangling power $P_{avg}^{(f)}$ is also generically very large when the system is large enough. They further demonstrated that ``concentration of measure" phenomenon occurs for such variant of average entangling power.

For the minimum entangling power $P_{min}^{(f)}$, although it has already been proposed as a nonlocality measure for a long time~\cite{Nielsen:2003jz}, the situation remains rather unclear.  One may even doubt whether minimum entangling power is sufficient to fully capture the nonlocality of quantum dynamics as there might exist some product state such that $f(U(\ket{\alpha}\otimes\ket{\beta})$ vanishes. Indeed, it is proved that $P_{min}^{(f)}(U)$ vanishes for every quantum gate $U$ acting on $2\otimes N$ or $3\otimes 3$ bipartite systems~\cite{Chen:2008eq}. However, the existence of quantum gates with non-vanishing minimum entangling power $P_{min}^{(f)}$ was first reported in \cite{Chen:2008eq}. It is still unclear whether such non-vanishing minimum entangling ability holds ``generically".

Informally, for a generic quantum gate $U$ acting on $\mathbb{C}^{d_A}\otimes \mathbb{C}^{d_B}$, its average entangling power is ``asymptotically" equal to maximum entangling power (denoted by $\approx$), i.e., $\lim\limits_{\min{(d_A,d_B)}\rightarrow +\infty}\frac{P_{avg}^{(f)}(U)}{P_{max}^{(f)}(U)}=1$. We will have the following relations.
\begin{equation}
0\overset{?}{\approx} P_{min}^{(f)}(U)\overset{?}{\approx} P_{avg}^{(f)}(U)\approx P_{max}^{(f)}(U)\approx \max\limits_{\ket{\psi}\in \mathbb{C}^{d_A}\otimes \mathbb{C}^{d_B}} f(\ket{\psi}).
\end{equation}

The main purpose of this paper is to provide a further investigation on the minimum entangling power. More specifically, we will demonstrate the ``concentration of measure" phenomenon for minimum entangling power with respect to a wide range of entanglement measures. Consequently, a generic quantum gate will have non-zero minimum entangling power with respect to arbitrary entanglement measure $f$.

Moreover, we will illustrate a very surprising phenomenon: in many cases, $P_{min}^{(f)}$ is very close to the maximum possible value of the entanglement measure $f$. More precisely, we prove the following three results:
\begin{enumerate}
\item [1.] The minimum entangling power $P_{min}^{(f)}(U)$ of a generic quantum gate in a $d_A\otimes d_B$ quantum system with respect to von Neumann entropy of reduced density operator will be very close to its maximum possible value with high probability when $\min{(d_A,d_B)}$ is large enough. 
\item [2.] The minimum entangling power $P_{min}^{(f)}(U)$ of a generic quantum gate in a $d_A\otimes d_B$ quantum system with respect to other Lipschitz-continuous entanglement measure $f$ will also be very close to the median value of $f$ with high probability when $\min{(d_A,d_B)}$ is large enough.
\item [3.] If we consider minimum entangling power with respect to a discrete entanglement measure, Schmidt rank,  we will have an even stronger result. $P_{min}^{(f)}(U)$ of a generic quantum gate in a $d_A\otimes d_B$ quantum system will be a constant large number with unit probability except for some degenerate cases.
\end{enumerate}

Following from the simple fact $P_{max}^{(f)}(U)\geq P_{avg}^{(f)}(U) \geq P_{min}^{(f)}(U)$, if a quantum gate $U$ has large ``minimum" entangling power $P_{min}^{(f)}(U)$, its ``average" entangling power ${E}_{A:B}^{(f)}(U)$ and ``maximum" entangling power $P_{max}^{(f)}(U)$ will be automatically large too.

We further investigate the case of multipartite quantum system, the minimum entangling power of $d_1\otimes d_2\otimes \cdots \otimes d_N(N\geq 3)$ multipartite quantum system by showing that a random quantum gate acting on this system will almost surely have near-maximum minimum entangling power in any bipartite cut.  If we take as multipartite entanglement measure tensor rank of multipartite quantum system which is monotonically decrease under SLOCC (stochastic local operations and classical communication). In general, determine or even estimate the tensor rank of an arbitrary multipartite state is very hard. This indicates the difficulty of estimating the entangling power of a specified quantum gate. Fortunately,  a simplified theory is then proposed to the generic entangling powers with respect to tensor rank. We will illustrate that the minimum entangling power with respect to tensor rank will almost surely be greater than $\left\lceil \frac{\prod\limits_{i=1}^Nd_i-\sum\limits_{i=1}^Nd_i+N}{\sum\limits_{i=1}^Nd_i-N+1}\right\rceil$. Again, this is very close to its maximum possible value.

To summarize, our results indicate a random quantum gate will be highly nonlocal in the sense that even the minimum entangling power will be very large. 

As a straightforward application,  a random quantum gate will be a universal entangling device which will always produce highly entangled state no matter what product state the input is. Our results also offer a random-pick strategy towards explicit examples.

Furthermore, with some modifications, a random quantum gate will not only transform every product state to entangled state, but also transform every state with low entanglement to a highly entangled state. In some sense, such universal entangling devices we proposed are intrinsically fault-tolerant. 

The rest of this paper is organized as follows. Section~\ref{sec:notation} introduces necessary concepts and notation from Riemann geometry and algebraic geometry which will be required for our further investigation. In Section~\ref{sec:bipartite}, we study the minimum entangling power of quantum gates acting on bipartite quantum systems. This section is further divided into two subsections. In Subsection~\ref{bipartite:entropy}, by taking as entanglement measure von Neumann entropy of reduced density operator, we first prove minimum entangling power is concentrated over the set of quantum gates for large quantum systems by combining Hayden et al.'s result of ``concentration of entropy" and the standard net argument. Then we provide a general treatment to ``concentration of measure" phenomenon over a Riemann manifold. As a consequence, we provide a stronger version of concentration of minimum entangling power over unitary group. Furthermore, by replacing von Neumann entropy of reduced density operators with any Lipschitz-continuous entanglement measure,  the corresponding minimum entangling power is also concentrated over unitary group. After dealing with continuous entanglement measures, in Subsection~\ref{bipartite:Schmidt}, we look into a discrete measure, Schmidt rank, which is the invertible SLOCC invariant measure. From an algebraic geometric point of view, the set of quantum states with bounded Schmidt rank is a determinantal variety. Hence, the study of entangling power with respect to Schmidt rank is equivalent to the study of determinantal variety. Concentration of minimum entangling power with respect to Schmidt rank is then derived. Then we look into the multipartite setting in Section~\ref{sec:multipartite}. A natural generalization of minimum entangling power to multipartite quantum system is investigated. We show that a random quantum gate acting on this system will almost surely have near-maximum entangling power in any bipartite cut. We deal with von Neumann entropy of reduced density operator and Schmidt rank in Subsection~\ref{multipartite:entropy} and ~\ref{multipartite:Schmidt} respectively. In Subsection~\ref{multipartite:tensor}, we choose as multipartite entanglement measure tensor rank which is monotonically decrease over SLOCC. Again, by observing that the set of multipartite quantum states with bounded tensor rank is a subset of a secant variety, we show a random multipartite quantum gate will has large entangling power with unit probability. Finally, in Section~\ref{sec:conclusion}, we summarize our results and provide some open problem.

\section{Notations and Preliminaries} \label{sec:notation}

This section defines notations and provides necessary background materials that will be used for later.

The set of all complex numbers is denoted as $\mathbb{C}$.  From now on, we will work throughout over field $\mathbb{C}$ of complex numbers unless otherwise specified. However, the reader should note that some results here apply over any field, not just $\mathbb{C}$. 

Every quantum system has an associated $d$-dimensional complex Hilbert space $\mathbb{C}^d$. Sometimes, we may use the associated Hilbert space to denote the quantum system if there is no ambiguity. A pure quantum state in this system is represented by a non-zero vector in $\mathbb{C}^d$, denoted as $\ket{\phi}\in \mathbb{C}^d$.

For any positive integer $n$,  the set of all $n$-tuples from $\mathbb{C}$ is called $n$-dimensional $\textit{affine space}$ over $\mathbb{C}$. An element of $\mathbb{C}^n$ is called a point, and if point $P=(a_1,a_2,\cdots,a_n)$ with $a_i\in \mathbb{C}$, then the $a_i$'s are called the coordinates of $P$.  Informally, an affine space is what is left of a vector space after forgetting its origin.

An arbitrary quantum state of a quantum system which is associated with $\mathbb{C}^d$ can be written as $\ket{\phi}=\sum\limits_k a_k\ket{k}$. $\{\ket{k}\}$ is a base of $\mathbb{C}^d$ here.

We define \textit{projective $n$-space}, denoted by $\mathbb{P}^n$, to be the set of equivalence classes of $(n+1)-$tuples $(a_0,\cdots,a_n)$ from $\mathbb{C}$, not all zero, under the equivalence relation given by $(a_0,\cdots,a_n)\sim(\lambda a_0,\cdots,\lambda a_n)$ for all $\lambda \in \mathbb{C}$, $\lambda\neq 0$.

Note that for any complex number $c\neq 0$, $\ket{\phi}\in \mathbb{C}^d$ and $c \ket{\phi}$ will represent the same state of quantum system associated with $\mathbb{C}^d$, hence the quantum state $\ket{\phi}$ actually corresponds to a point in $\mathbb{P}^{d-1}$. We will use the homogeneous coordinates $[a_1: a_2: \cdots: a_d]$ to denote the point in $\mathbb{P}^{d-1}$ that corresponds to $\ket{\phi}=\sum\limits_k a_k\ket{k}$.

Given two individual quantum systems $A$ and $B$ associated with Hilbert spaces $\mathbb{C}^{d_A}$ and $\mathbb{C}^{d_B}$ respectively, the new Hilbert space which captures the interaction of the two parties is $\mathbb{C}^{d_A}\otimes \mathbb{C}^{d_B}$, the tensor product of individual Hilbert spaces. 

A quantum state $\ket{\phi}^{AB}$ in $\mathbb{C}^{d_A}\otimes \mathbb{C}^{d_B}$ is a product state if $\ket{\phi}^{AB}=\ket{\phi}^A\otimes \ket{\phi}^B$ for some $\ket{\phi}^A\in \mathbb{C}^{d_A}$ and $\ket{\phi}^B\in \mathbb{C}^{d_B}$. Otherwise, it is called an entangled state.

A crucial observation that will be used repeatedly in this paper is that, the set of product states in composite system associated with $\mathbb{C}^{d_A}\otimes \mathbb{C}^{d_B}$ is isomorphic to a projective variety in $\mathbb{P}^{d_Ad_B-1}$, a well studied object in algebraic geometry.

We attempt to explain the observation with minimal concepts. For more details, please refer to \cite{Hartshorne:1983we}.

The \textit{polynomial ring} in $n$ variables, denoted by $\mathbb{C}[x_1,x_2,\cdots,x_n]$, is the set of polynomials in $n$ variables with coefficients in field $\mathbb{C}$.

A subset $Y$ of $\mathbb{C}^n$ is an \textit{algebraic set} if it is the common zeros of a finite set of polynomials $f_1,f_2,\cdots,f_r$ with $f_i\in \mathbb{C}[x_1,x_2,\cdots,x_n]$ for $1\leq i\leq r$, which is also denoted by $Z(f_1,f_2,\cdots,f_r)$.

One may observe that the union of a finite number of algebraic sets is an algebraic set, and the intersection of any family of algebraic sets is again an algebraic set. Therefore, by taking the open subsets to be the complements of algebraic sets, we can define a topology, called the \textit{Zariski topology} on $\mathbb{C}^n$.

A nonempty subset $Y$ of a topological space $X$ is called \textit{irreducible} if it cannot be expressed as the union of two proper closed subsets. The empty set is not considered to be irreducible.

An \textit{affine algebraic variety} is an irreducible closed subset of $C^n$, with respect to the induced topology.

A notion of algebraic variety may also be introduced in projective spaces, called projective algebraic variety: a subset $Y$ of $\mathbb{P}^n$ is an \textit{algebraic set} if it is the common zeros of a finite set of homogeneous polynomials $f_1,f_2,\cdots,f_r$ with $f_i\in \mathbb{C}[x_0,x_1,\cdots,x_n]$ for $1\leq i\leq r$. We call open subsets of irreducible projective varieties as quasi-projective varieties.

Now let's look into the following embedding.

\begin{definition}[Segre embedding and Segre variety]\label{definition:segre}
The \textit{Segre embedding} is defined as the map:
\begin{displaymath}\sigma: \mathbb{P}^{m-1} \times \mathbb{P}^{n-1} \rightarrow
\mathbb{P}^{mn-1}\end{displaymath} taking a pair of points $([x],[y])\in
\mathbb{P}^{m-1}\times \mathbb{P}^{n-1}$ to their product
\begin{equation*}
\begin{split}
\sigma: ([x_0:x_1:\cdots:x_{m-1}],[y_0:y_1:\cdots:y_{n-1}]) \\
\longmapsto [x_0y_0:x_0y_1:\cdots:x_{m-1}y_{n-1}].
\end{split}
\end{equation*}
The image of the map
is a variety, called \textit{Segre variety}, written as $\Sigma_{m-1,n-1}$.
\end{definition}

What concerns us is that Segre variety $\Sigma_{d_A-1,d_B-1}$ represents the set of product states in a bipartite quantum system $\mathbb{C}^{d_A}\otimes \mathbb{C}^{d_B}$~\cite{Brody:2001ha, Miyake:2003ft, Heydari:2005kt, Chen:2008eq}. This simple observation provides an algebraic geometric description to product states and entangled states. 

As we have already mentioned, quantum entanglement has come to be recognized as a fundamental resource that may be used for perform quantum information tasks. A natural but important question arises immediately. How much entanglement is contained in a given quantum state? Various quantities have been proposed in the last twenty years, such as the entanglement of distillation, the entanglement cost, the relative entropy of entanglement, entanglement of formation, the squashed entanglement, Schmidt rank~\cite{Plenio:2007ve}. However, many entanglement measures will coincide if we only look into the pure state case.

One particular measure is the von Neumann entropy of reduced density operators, $f\left(\ket{\phi}^{AB}\right)=S_v\left(\ket{\phi}^{AB}\right)\equiv S(\tr_B(\ket{\phi}\bra{\phi}))$, where $S(\rho)=-\tr(\rho\log \rho)$ is the von Neumann entropy which extends classical entropy to the field of quantum mechanics.  For bipartite pure states, it is the unique measure of entanglement~\cite{Popescu:1997vk}, or specifically, it is the only function on state space that satisfies certain axioms.

Another entanglement measure of broad interest is the Schmidt rank, $f\left(\ket{\phi}^{AB}\right)=SR\left(\ket{\phi}^{AB}\right)\equiv \min\left\{r: \ket{\phi}^{AB} = \sum\limits_{i=1}^r \lambda_i \ket{\alpha_i}^A\otimes \ket{\beta_i}^B\right\}$, the invertible SLOCC invariant measure for pure states.

The observation that Segre variety represents the set of product states can be straightforwardly generalized as the following.

\begin{observation}
For any integer $r$, the set of pure states with Schmidt rank no more than $r$ in composite system associated with $\mathbb{C}^{d_A}\otimes \mathbb{C}^{d_B}$ is isomorphic to a determintal variety in $\mathbb{P}^{d_Ad_B-1}$, another well studied object in algebraic geometry.
\end{observation}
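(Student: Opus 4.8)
The plan is to make explicit the standard identification of bipartite pure states with rectangular matrices, under which Schmidt rank becomes matrix rank; the statement then reduces to the classical description of determinantal loci by the vanishing of minors.

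First I would fix bases $\{\ket{i}\}_{i=1}^{d_A}$ and $\{\ket{j}\}_{j=1}^{d_B}$ and write an arbitrary $\ket{\phi}^{AB}\in\complex^{d_A}\otimes\complex^{d_B}$ as $\ket{\phi}=\sum_{i,j}M_{ij}\,\ket{i}\otimes\ket{j}$. Collecting the coefficients into the $d_A\times d_B$ matrix $M=(M_{ij})$ gives a linear isomorphism $\complex^{d_A}\otimes\complex^{d_B}\cong\complex^{d_A\times d_B}$, hence --- since the $M_{ij}$ are, after reindexing, precisely the homogeneous coordinates of $\mathbb{P}^{d_Ad_B-1}$ --- a projective-linear isomorphism $\mathbb{P}^{d_Ad_B-1}\cong\mathbb{P}(\complex^{d_A\times d_B})$. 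Being an isomorphism of projective spaces, this map carries algebraic sets to algebraic sets and varieties to varieties.

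Second, I would invoke the Schmidt decomposition, which is nothing but the singular value decomposition of $M$: one has $\ket{\phi}=\sum_k\lambda_k\ket{\alpha_k}^A\otimes\ket{\beta_k}^B$ with $\lambda_k>0$ and $\{\ket{\alpha_k}\},\{\ket{\beta_k}\}$ orthonormal, and the number of terms equals $\op{rank}(M)$. Hence $SR(\ket{\phi}^{AB})\le r$ if and only if $\op{rank}(M)\le r$. Since $\op{rank}(cM)=\op{rank}(M)$ for every nonzero scalar $c$, this condition descends to $\mathbb{P}^{d_Ad_B-1}$, consistently with Schmidt rank being well defined on projective space. Next I would recall that $\op{rank}(M)\le r$ is equivalent to the simultaneous vanishing of all $(r+1)\times(r+1)$ minors of $M$, each of which is a homogeneous polynomial of degree $r+1$ in the entries $M_{ij}$, i.e.\ in the homogeneous coordinates of $\mathbb{P}^{d_Ad_B-1}$. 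Therefore the image of the set of $\ket{\phi}^{AB}$ with $SR(\ket{\phi}^{AB})\le r$ under the isomorphism above is exactly the common zero locus in $\mathbb{P}(\complex^{d_A\times d_B})$ of these minors --- by definition the generic determinantal variety $\M_r$ of matrices of rank at most $r$. (For $r\ge\min(d_A,d_B)$ there are no such minors and $\M_r$ is all of projective space, still an irreducible projective variety, so the claim is trivial there.)

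Finally, to justify the word \emph{variety} rather than merely \emph{algebraic set}, I would cite the classical fact that $\M_r$ is irreducible: it is the Zariski closure of the locus of matrices of rank exactly $r$, which is a single orbit of $\mathrm{GL}_{d_A}\times\mathrm{GL}_{d_B}$ --- equivalently the total space of a homogeneous vector bundle over $\mathrm{Gr}(r,d_A)\times\mathrm{Gr}(r,d_B)$ --- hence smooth and irreducible; see e.g.~\cite{Hartshorne:1983we}. Pulling this back through the projective-linear isomorphism yields the stated isomorphism with a determinantal variety in $\mathbb{P}^{d_Ad_B-1}$. The purely linear-algebraic content (Schmidt rank $=$ matrix rank, and rank $\le r$ iff all $(r+1)$-minors vanish) is routine; the only point needing an external reference, and hence the step I would flag as the crux, is this irreducibility together with the accompanying dimension theory of determinantal varieties.
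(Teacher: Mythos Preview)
Your argument is correct and follows the same route as the paper: identify $\ket{\phi}^{AB}$ with the coefficient matrix $M$, observe that Schmidt rank equals $\op{rank}(M)$, and cut out the locus $\op{rank}(M)\le r$ by the vanishing of all $(r+1)\times(r+1)$ minors. If anything you are more thorough than the paper, which does not explicitly address irreducibility.
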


The so-called determinantal variety $\Sigma_{d_A,d_B}^r$ is defined as the space of $d_A \times d_B$ matrices with some given upper bound on their ranks. It is the natural generalization of Segre variety.

To see it is also characterized as common zeros of homogenous polynomials.  The set of states with Schmidt rank $r$ in a given bipartite system $\mathbb{C}^{d_A}\otimes \mathbb{C}^{d_B}$ is isomorphic to the set of $d_A\times d_B$ matrices with rank $r$. A matrix $M$ has rank less than $r$ if and only if all its $r\times r$ minors are zero, thus $\Sigma_{d_A,d_B}^r$ is just the set of common zeros of all $r\times r$ minors. 

Hence we will say a determinantal variety is just the set of states with bounded Schmidt rank  for convenience. Through the rest of this paper, we will use $\Sigma_{d_A,d_B}^r$ to denote the set of states with Schmidt rank no more than $r$ in $\mathbb{C}^{d_A}\otimes \mathbb{C}^{d_B}$ .

\begin{definition}[Projective Determinantal Variety]

\begin{displaymath}
\Sigma_{d_A,d_B}^r=\left\{\ket{\psi}:\ket{\psi}\in \mathbb{C}^{d_A}\otimes \mathbb{C}^{d_B}, \mathop{\mathop{SR}}(\ket{\psi})\leq r\right\}
\end{displaymath}
is an irreducible projective variety.
This variety is characterized by the vanishing of all $(r+1)\times (r+1)$-minors of the state vector coefficients when written as a matrix.
\end{definition}
The projective dimension of $\Sigma_{d_A,d_B}^r$ is $d_Ad_B-(d_A-r)(d_B-r)-1$.

For $r=1$, one recovers the Segre variety.

In the multipartite setting, a pure state $\ket{\psi}_{1,2,\cdots,N}$ in $N$-partite system (associated with Hilbert space $\mathcal{H}_{1,2,\cdots,N}=\mathcal{H}_1\otimes \mathcal{H}_2\otimes \cdots\otimes \mathcal{H}_N$) is a product state (or a fully $N$-particle separable state) if and only if it can be written as
\begin{equation}
\ket{\psi}_{1,2,\cdots,N}=\ket{\psi}_1\otimes \ket{\psi}_2\otimes \cdots \otimes \ket{\psi}_N,\qquad \ket{\psi}_i\in \mathcal{H}_i,\quad i=1,2,\cdots, N.
\end{equation}

We say an $N$-partite state is bi-separable (or bi-product) if it is a product state in some bipartite cut. An $N$-partite state is a genuine entangled state if and only there does not exist a cut, against which the state is a product state, or equivalently, it is not bi-separable. For any given index set $\Gamma$ satisfying $\emptyset \subsetneq \Gamma\subsetneq \{1,2,\cdots,N\}$, let $\Sigma_{\Gamma}=\{\ket{\psi}\otimes \ket{\phi}:\ket{\psi}\in \otimes_{i\in \Gamma}\mathcal{H}_i, \ket{\phi}\in \otimes_{j \in {\Gamma}^c}\mathcal{H}_j\}$. ${\Gamma}^c$ is the complement of $\Gamma$ in $\{1,2,\cdots,N\}$. The set of genuine entangled states can be characterized by the complement of $\bigcup\limits_{\emptyset \subsetneq \Gamma\subsetneq \{1,2,\cdots,N\}} \Sigma_{\Gamma}$.

Here we will introduce another entanglement measure for multipartite quantum states, the \emph{tensor rank}, which refers to the number of product states needed to express a given multipartite quantum state.

A multipartite quantum state is said to have \emph{border rank} $r$ if it can be written as the limit of tensor rank $r$ quantum states.

Note the set of multipartite quantum states of rank at most $r$ is not closed, and by definition the set of tensors of border rank at most $r$ is the Zariski closure of this set. 

These concepts are also well studied in algebraic geometry. The $r$-th secant variety of Segre variety $\Sigma_{d_1,\cdots,d_N}$ is the Zariski closure of the union of the linear spanned by collections of $r+1$ points on Segre variety, denoted as $Sec_r(\Sigma_{d_1,\cdots,d_N})$. $Sec_r(\Sigma_{d_1,\cdots,d_N})$ is irreducible and consists of all multipartite states with border rank at $\leq r+1$.

Let $GL(n)$ and $\mathcal{U}(n)$ be the $n\times n$ complex general linear group and unitary group respectively. It is well known that the unitary group $U(n)$ is a Lie group of dimension $n^2$, i.e. a smooth manifold as well as a group, so it has a unique bi-invariant probability measure, Haar measure~\cite{Lee:2003tw}.


\begin{definition}
A set $N$ in a smooth finite dimensional manifold $M$ is said to be of measure zero if for every admissible chart $U$, $\phi$, the set $\phi(N\cap U)$ has Lebesgue measure zero in $\mathbb{R}^n$ where $\dim M=n$.
\end{definition}

\begin{definition}
A topological space $X$ is called Noetherian if it satisfies the descending chain condition for closed subsets: for any sequence $Y_1\supseteq Y_2\supseteq\cdots$ of closed subsets, there is an integer $r$ such that $Y_r=Y_{r+1}=\cdots$.
\end{definition}

\begin{theorem}[Projective Dimension Theorem, \cite{Hartshorne:1983we}]\label{lemma:dim}
Let $Y$, $Z$ be varieties of dimensions r, s in $\mathbb{P}^n$. Then every irreducible component of $Y\cap Z$ has dimension $\geq r+s-n$. Furthermore, if $r+s-n\geq 0$, then $Y\cap Z$ is nonempty.
\end{theorem}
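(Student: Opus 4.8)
The plan is to follow the classical route: first establish the affine analogue, then bootstrap to projective space via the cone construction. The non-formal input throughout is the dimension theory of Noetherian rings, in the form of Krull's principal ideal theorem: if $W\subseteq\mathbb{A}^N$ is an irreducible affine variety of dimension $d$ and $f$ is a polynomial, then every irreducible component of $W\cap Z(f)$ has dimension $\geq d-1$ (and exactly $d-1$ unless $f$ vanishes identically on $W$, in which case the intersection is $W$ itself). Iterating this $m$ times shows that cutting $W$ by $m$ hypersurfaces yields only components of dimension $\geq d-m$.

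\textbf{Step 1 (affine case).} Suppose $Y',Z'\subseteq\mathbb{A}^n$ are varieties of dimensions $r,s$. The key device is the \emph{reduction to the diagonal}: form the product $Y'\times Z'\subseteq\mathbb{A}^n\times\mathbb{A}^n=\mathbb{A}^{2n}$, which is irreducible (over $\mathbb{C}$) of dimension $r+s$, and let $\Delta\subseteq\mathbb{A}^{2n}$ be the diagonal, cut out by the $n$ equations $x_i-y_i=0$ for $i=1,\dots,n$. Projection to the first factor is an isomorphism of varieties $(Y'\times Z')\cap\Delta\xrightarrow{\ \sim\ }Y'\cap Z'$, hence a bijection on irreducible components. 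Applying the iterated Krull bound to $W=Y'\times Z'$ and the $n$ hypersurfaces $x_i-y_i=0$, every component of $Y'\cap Z'$ has dimension $\geq(r+s)-n$.

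\textbf{Step 2 (passage to projective space).} Given $Y,Z\subseteq\mathbb{P}^n$ of dimensions $r,s$, let $C(Y),C(Z)\subseteq\mathbb{A}^{n+1}$ be their affine cones; these are affine varieties of dimensions $r+1$ and $s+1$, each containing the vertex $0$. Since $C(Y)\cap C(Z)=C(Y\cap Z)$, Step 1 applied in $\mathbb{A}^{n+1}$ shows every irreducible component of $C(Y\cap Z)$ has dimension $\geq(r+1)+(s+1)-(n+1)=r+s-n+1$. When $r+s-n\geq 0$ this lower bound is $\geq 1$, so $C(Y\cap Z)$ is strictly larger than $\{0\}$, forcing $Y\cap Z\neq\emptyset$. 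Finally, because the $\mathbb{C}^{\ast}$-scaling action is connected it fixes each irreducible component of $C(Y\cap Z)$, so each such component is itself a cone; projectivizing gives a bijection between these cones (other than $\{0\}$) and the components of $Y\cap Z$, dropping dimension by exactly one. Hence every component of $Y\cap Z$ has dimension $\geq r+s-n$.

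The main obstacle is Step 1: it is the genuine commutative-algebra content, since Krull's principal ideal theorem (equivalently, that dimension drops by at most one modulo one element in a Noetherian local ring) is precisely what is not formal, while the diagonal trick is the maneuver that replaces the a priori uncontrolled intersection $Y'\cap Z'$ by $n$ successive hypersurface sections of the well-understood product $Y'\times Z'$. The remaining points — irreducibility of $Y\times Z$ over $\mathbb{C}$, and the cone/projectivization correspondence on components — are routine once the dimension bookkeeping is arranged.
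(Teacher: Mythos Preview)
Your proof is correct and is precisely the classical argument from Hartshorne (Theorem I.7.2): affine reduction to the diagonal plus Krull's Hauptidealsatz, then passage to projective space via affine cones. Note, however, that the paper does not supply its own proof of this statement at all --- it is quoted as a black-box result from \cite{Hartshorne:1983we} --- so there is nothing to compare against beyond observing that you have reproduced the standard textbook proof.
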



\section{Bipartite Entangling Power}\label{sec:bipartite}

In this section, we study the minimum entangling power of quantum gates acting on bipartite quantum systems. This section is further divided into two subsections. In Subsection~\ref{bipartite:entropy}, we studied the minimum entangling power with respect to Lipschitz-continuous entanglement measures. We first prove minimum entangling power with respect to von Neumann entropy of reduced density operator is concentrated over the set of quantum gates for large quantum systems by combining Hayden et al.'s result of ``concentration of entropy"  and the standard net argument. Then we provide a general treatment to ``concentration of measure" phenomenon over a Riemann manifold. As a result, we provide a stronger version of concentration of minimum entangling power over unitary group. Furthermore, by replacing von Neumann entropy of reduced density operators with any Lipschitz-continuous entanglement measure,  the corresponding minimum entangling power is also concentrated over unitary group. 

After dealing with continuous entanglement measures, in Subsection~\ref{bipartite:Schmidt}, we look into a discrete measure, Schmidt rank, which is the invertible SLOCC invariant measure. From an algebraic geometric point of view, the set of quantum states with bounded Schmidt rank is a determinantal variety. Hence, the study of minimum entangling power with respect to Schmidt rank is equivalent to the study of determinantal variety. Concentration of minimum entangling power with respect to Schmidt rank is then derived.

\subsection{Lipschitz-continuous entanglement measures}\label{bipartite:entropy}

We first look into the minimum entangling power with respect to von Neumann entropy of reduced density operator here. Later, we will generalize our results to other Lipschitz-continuous entanglement measures.

\subsubsection{Beyond Concentration of Entanglement}\label{bipartite:entropy:Hayden}
By choosing the von Neumann entropy of reduced density operator as entanglement measure, we define the corresponding minimum entangling power as the following:
\begin{equation}
P_{min}^{(S_v)}(U)=\min\limits_{\ket{\alpha}\in \mathbb{C}^{d_A}, \ket{\beta}\in \mathbb{C}^{d_B}} S\left(\tr_B\left(U\ket{\alpha\beta}\bra{\alpha\beta}U^{\dagger}\right)\right).
\end{equation}

\begin{theorem}[Concentration of Entropy, Theorem III.3 \cite{Hayden:2006gx}]\label{thm:concentration}
Let $\ket{\phi}$ be a random state in $\mathbb{C}^{d_A}\otimes \mathbb{C}^{d_B}$, with $d_B\geq d_A\geq 3$. Then
\begin{equation}
Pr\left\{S(\tr_B{\ket{\phi}\bra{\phi}})<\log d_A-\frac{d_A}{d_B \ln 2}-\alpha\right\}\leq \exp{\left(-\frac{(d_Ad_B-1) \alpha^2}{8\pi^2 \ln 2(\log d_A)^2}\right)}.
\end{equation}
\end{theorem}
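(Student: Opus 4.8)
The plan is to obtain this as an instance of L\'evy's concentration-of-measure inequality on the unit sphere, applied to the function
\[
f\colon \ket{\phi}\longmapsto S\bigl(\tr_B\outer{\phi}\bigr),\qquad \ket{\phi}\in\sphere{\mathbb{C}^{d_A}\otimes\mathbb{C}^{d_B}}.
\]
Two quantitative inputs are needed: a bound $\eta=O(\log d_A)$ on the Lipschitz constant of $f$ with respect to the Euclidean (equivalently Hilbert--Schmidt) metric on the sphere, and a lower bound $\mathbb{E}\,f\geq \log d_A-\tfrac{d_A}{d_B\ln 2}$ on its mean under the uniform measure. Granting these, L\'evy's lemma in the form $\Pr\{f<\mathbb{E}\,f-\alpha\}\leq\exp(-cN\alpha^2/\eta^2)$, with $N=2d_Ad_B-1$ the real dimension of the sphere and $c$ a universal constant, gives $\Pr\{S(\tr_B\outer{\phi})<\log d_A-\tfrac{d_A}{d_B\ln 2}-\alpha\}\leq\exp(-cN\alpha^2/\eta^2)$; substituting $\eta=O(\log d_A)$ and absorbing the bounded factor relating $N$ to $d_Ad_B-1$ into the constant reproduces a tail of exactly the stated form, with $8\pi^2$ and $\ln 2$ emerging from the particular (non-sharp) versions of L\'evy's inequality and of the Lipschitz bound that one uses. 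The hypothesis $d_A\geq 3$ enters only as a mild normalization ensuring $\log d_A>1$, so that this Lipschitz-type estimate is nontrivial.

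The main obstacle is the Lipschitz estimate, since the von Neumann entropy is \emph{not} globally Lipschitz in trace norm on $\density{\mathbb{C}^{d_A}}$ --- its gradient diverges near rank-deficient states --- so a bare Fannes-type inequality does not suffice. The route I would take is to bound $\norm{\nabla f}$ pointwise along unit-speed curves on the sphere (with $\rho_A=\tr_B\outer{\phi}$). Writing $\ket{\phi}$ in Schmidt form $\sum_i\sqrt{p_i}\,\ket{a_i}\ket{b_i}$, one checks that (i) tangent directions merely rotating the local Schmidt bases leave the spectrum of $\rho_A$, hence $S(\rho_A)$, unchanged; (ii) a direction that ``opens'' a fresh Schmidt coefficient contributes $0$ to $dS/dt$, because the $-x\log x$ singularity is annihilated by the vanishing of the derivative of $x=\sin^2 t$ at $t=0$; and (iii) the only directions that matter reparametrize the weights $p_i$, where Cauchy--Schwarz gives
\[
\Bigl|\tfrac{d}{dt}S(\rho_A)\Bigr|=\Bigl|\sum_i\dot p_i\log p_i\Bigr|\leq 2\Bigl(\sum_i p_i\log^2 p_i\Bigr)^{1/2}\Bigl\|\tfrac{d}{dt}\ket{\phi}\Bigr\|.
\]
The estimate then collapses to the elementary inequality $\sup\{\sum_{i=1}^{d_A}p_i\log^2 p_i:\ p\text{ a probability vector}\}=O(\log^2 d_A)$, giving $\eta=O(\log d_A)$ (a small care is still needed to pass from this a.e.-gradient bound to a genuine global Lipschitz bound, using that the non-differentiability locus has measure zero and $f$ is continuous). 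This step --- the case analysis of tangent directions together with pinning down the numerical constant that eventually yields $8\pi^2$ --- is the part I expect to demand the most work.

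For the mean bound I would bypass Page's exact formula for the average subsystem entropy and instead use the cheaper chain $S(\rho)\geq -\log\tr\rho^2$ together with convexity of $-\log$:
\[
\mathbb{E}\,f=\mathbb{E}\,S(\rho_A)\geq\mathbb{E}\bigl[-\log\tr\rho_A^2\bigr]\geq-\log\mathbb{E}\bigl[\tr\rho_A^2\bigr].
\]
The second moment of a Haar-random pure state on $\mathbb{C}^{d_A}\otimes\mathbb{C}^{d_B}$ is the standard $\mathbb{E}\,\tr\rho_A^2=\tfrac{d_A+d_B}{d_Ad_B+1}$ (a one-line symmetry/Weingarten computation on the symmetric subspace of $(\mathbb{C}^{d_A}\otimes\mathbb{C}^{d_B})^{\otimes 2}$), whence
\[
\mathbb{E}\,f\geq\log\frac{d_Ad_B+1}{d_A+d_B}\geq\log\frac{d_Ad_B}{d_A+d_B}=\log d_A-\log\!\Bigl(1+\frac{d_A}{d_B}\Bigr)\geq\log d_A-\frac{d_A}{d_B\ln 2},
\]
using $\log(1+x)\leq x/\ln 2$. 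Combining this with the Lipschitz bound and L\'evy's lemma as in the first paragraph finishes the argument; the only genuinely delicate ingredient is the gradient analysis of the entropy functional.
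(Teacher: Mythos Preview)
The paper does not supply its own proof of this statement: Theorem~\ref{thm:concentration} is quoted verbatim as Theorem~III.3 of Hayden--Leung--Winter~\cite{Hayden:2006gx} and used as a black box throughout Section~\ref{bipartite:entropy:Hayden}. There is therefore nothing in the present paper to compare your argument against; what you have written is a reconstruction of the original Hayden et al.\ proof.

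That reconstruction is correct in outline and close to the original: L\'evy's lemma on $\mathbb{S}^{2d_Ad_B-1}$, fed with a Lipschitz bound $\eta=O(\log d_A)$ for $\ket{\phi}\mapsto S(\tr_B\ket{\phi}\bra{\phi})$ and a lower bound on its mean. Two remarks on the differences. First, your route to the Lipschitz constant via a pointwise gradient analysis and the elementary estimate $\sum_i p_i\log^2 p_i=O(\log^2 d_A)$ is sound and arguably cleaner than pushing Fannes' inequality (you are right that the $-\epsilon\log\epsilon$ correction in Fannes obstructs a direct global Lipschitz bound); note also that the constant $\sqrt{8}\log d_A$ you recover is exactly the one the present paper silently invokes in the proof of its Lemma on Lipschitz continuity of $P_{min}^{(S_v)}$. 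Second, for the mean you bypass Page's formula and instead use $S(\rho)\ge -\log\tr\rho^2$ together with Jensen and the exact purity $\mathbb{E}\,\tr\rho_A^2=(d_A+d_B)/(d_Ad_B+1)$; this is a genuine simplification over the original and lands on the same bound $\log d_A-\tfrac{d_A}{d_B\ln 2}$.
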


Inspired by the concentration phenomenon of entropy stated above, one may suggest that, a random gate $U$ almost surely has large minimum entangling power if the image of $U$ acting on the set of product states generically does not contain any exceptional state. If such image can always be embedded to some subspace with appropriately small dimension, then we may finalize our argument by recalling a random subspace contains only highly-entangled states.

However, the set of product states in $\mathbb{C}^{d_A}\otimes \mathbb{C}^{d_B}$ can not be embedded to a subspace with dimension smaller than $d_Ad_B$, though it can be parameterized by using only $d_A+d_B$ variables. As we described in Section~\ref{sec:notation}, the set of product states can be characterized by a set of homogeneous quadratic polynomials. We will go into it more deeply in Section~\ref{bipartite:Schmidt}. 

Here we will show that a random quantum gate $U$ will asymptotically almost surely have large minimum entangling power, by using a standard concentration and net argument. 

To start our investigation, the so-called $\epsilon$-net is required.  The concept of $\epsilon$-net was originally introduced in \cite{HAUSSLER:1987wr}, and it is then widely used in computational geometry and approximation algorithms~\cite{KOMLOS:1992wd,CHAZELLE:1995uq,BRONNIMANN:1995vv,Alon:2010kr}.
\begin{definition}
Let $X$ be a set with probability measure $\mu$, let $F$ be a collection of $\mu$-measurable subsets of $X$. For any real number $\epsilon\in (0, 1]$, a subset $Y\subseteq X$ is called an $\epsilon$-net for $(X,F)$ if for all $S\in F$, $\mu(S)\geq \epsilon$ implies $Y\bigcap S \neq \emptyset$. 
\end{definition}

Hayden et al. brings the $\epsilon$-net to the quantum information community in \cite{Hayden:2004um}.

\begin{lemma}[Lemma II.4 \cite{Hayden:2004um}]
For $0<\epsilon <1$ and $\dim \mathcal{H}=d$ there exists a set $\mathcal{N}$ of pure states in $\mathcal{H}$ with $|\mathcal{N}|\leq \left(\frac{5}{\epsilon}\right)^{2d}$, such that for every pure state $\ket{\phi}\in \mathcal{H}$ there exists $\ket{\tilde{\phi}}\in \mathcal{N}$ with $\Vert \ket{\phi}\bra{\phi}-\ket{\tilde{\phi}}\bra{\tilde{\phi}}\Vert_1\leq \epsilon$ and $\Vert \ket{\phi}-\ket{\tilde{\phi}}\Vert_2\leq \frac{\epsilon}{2}$. Such a set $\mathcal{N}$ is called an $\epsilon$-net.
\end{lemma}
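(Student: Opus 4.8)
The plan is to prove this by a standard volumetric packing argument. Identifying $\complex^d$ with $\real^{2d}$, the set of unit vectors $\sphere{\mathcal{H}}$ becomes the Euclidean unit sphere $S^{2d-1}\subset\real^{2d}$ and the Hilbert-space $2$-norm becomes the Euclidean norm. First I would let $\mathcal{N}\subseteq S^{2d-1}$ be a \emph{maximal} family of unit vectors that are pairwise at Euclidean distance strictly greater than $\epsilon/2$; such a family exists and is finite by compactness of the sphere. Maximality forces $\mathcal{N}$ to be an $\tfrac{\epsilon}{2}$-covering in the Euclidean metric: if some unit vector $\ket{\phi}$ were at distance $>\epsilon/2$ from every element of $\mathcal{N}$, then $\mathcal{N}\cup\{\ket{\phi}\}$ would still be $(\epsilon/2)$-separated, contradicting maximality. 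Hence for every pure state $\ket{\phi}$ there is $\ket{\tilde{\phi}}\in\mathcal{N}$ with $\norm{\ket{\phi}-\ket{\tilde{\phi}}}_2\le \epsilon/2$, which gives the second claimed estimate.

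To bound $\abs{\mathcal{N}}$, I would observe that the open Euclidean balls of radius $\epsilon/4$ centered at the points of $\mathcal{N}$ are pairwise disjoint — two of them meeting would put their centers within distance $\epsilon/2$ — and are all contained in the ball of radius $1+\epsilon/4$ about the origin. Comparing $2d$-dimensional volumes gives $\abs{\mathcal{N}}\,(\epsilon/4)^{2d}\le(1+\epsilon/4)^{2d}$, hence $\abs{\mathcal{N}}\le(1+4/\epsilon)^{2d}$; since $\epsilon\le 1$ we have $1+4/\epsilon\le 5/\epsilon$, giving $\abs{\mathcal{N}}\le(5/\epsilon)^{2d}$.

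Finally I would upgrade the Euclidean bound to the trace-norm bound. Using $\tnorm{\outer{\phi}-\outer{\psi}}=2\sqrt{1-\abs{\ip{\phi}{\psi}}^2}$ for unit vectors together with $\norm{\ket{\phi}-\ket{\psi}}_2^2=2-2\,\mathrm{Re}\,\ip{\phi}{\psi}\ge 2\bigl(1-\abs{\ip{\phi}{\psi}}\bigr)$, one gets $\tnorm{\outer{\phi}-\outer{\psi}}=2\sqrt{\bigl(1-\abs{\ip{\phi}{\psi}}\bigr)\bigl(1+\abs{\ip{\phi}{\psi}}\bigr)}\le 2\sqrt{2}\,\sqrt{1-\abs{\ip{\phi}{\psi}}}\le 2\,\norm{\ket{\phi}-\ket{\psi}}_2\le\epsilon$, as required. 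None of this is conceptually hard; the only points needing care are pinning down the constant — which is precisely where the hypothesis $\epsilon\le 1$ is used, to replace $1+4/\epsilon$ by $5/\epsilon$ — and remembering that a net of the \emph{whole} sphere of unit vectors automatically covers every pure state, so no separate optimization over global phases is needed. I would regard this bookkeeping, rather than any substantive difficulty, as the main obstacle.
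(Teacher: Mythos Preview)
Your argument is correct and is essentially the standard volumetric packing proof that Hayden--Leung--Shor--Winter give for this lemma; the paper itself does not reprove it but simply quotes the result from \cite{Hayden:2004um}. Your chain of inequalities for the trace-norm bound and the use of $\epsilon\le 1$ to replace $1+4/\epsilon$ by $5/\epsilon$ are exactly the points one needs, and you have handled them cleanly.
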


For bipartite product states, there also exists an $\epsilon$-net as the following lemma.

\begin{lemma}[Lemma III.7 of \cite{Hayden:2006gx}]\label{lemma:net_Schmidt}
For $0<\epsilon <1$, the set of product states in $\mathbb{C}^{d_A}\otimes \mathbb{C}^{d_B}$ has an $\epsilon$-net $\mathcal{N}$ of size $|\mathcal{N}|\leq \left(\frac{10}{\epsilon}\right)^{2(d_A+d_B)}$.
\end{lemma}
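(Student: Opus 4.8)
The plan is to build the product-state net as a ``tensor product'' of two one-party nets, one for each factor of $\mathbb{C}^{d_A}\otimes\mathbb{C}^{d_B}$, and then control how the approximation errors on the two factors combine. First I would invoke the single-system net lemma (Lemma~II.4 of \cite{Hayden:2004um}) separately on $\mathbb{C}^{d_A}$ and on $\mathbb{C}^{d_B}$, but with a tighter error parameter $\epsilon' = \epsilon/2$ on each factor: this gives a set $\mathcal{N}_A$ of states in $\mathbb{C}^{d_A}$ with $|\mathcal{N}_A|\leq (10/\epsilon)^{2d_A}$ and a set $\mathcal{N}_B$ in $\mathbb{C}^{d_B}$ with $|\mathcal{N}_B|\leq(10/\epsilon)^{2d_B}$, each of which is an $(\epsilon/2)$-net in the trace-norm sense (and in $2$-norm with the correspondingly halved bound). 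Then I would define $\mathcal{N} = \{\ket{\tilde\alpha}\otimes\ket{\tilde\beta} : \ket{\tilde\alpha}\in\mathcal{N}_A,\ \ket{\tilde\beta}\in\mathcal{N}_B\}$, so that $|\mathcal{N}| = |\mathcal{N}_A|\cdot|\mathcal{N}_B| \leq (10/\epsilon)^{2(d_A+d_B)}$, which is exactly the claimed bound.

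The substantive step is the error estimate: given an arbitrary product state $\ket{\alpha}\otimes\ket{\beta}$, pick $\ket{\tilde\alpha}\in\mathcal{N}_A$ and $\ket{\tilde\beta}\in\mathcal{N}_B$ with $\norm{\ket{\alpha}-\ket{\tilde\alpha}}_2\leq\epsilon/4$ and $\norm{\ket{\beta}-\ket{\tilde\beta}}_2\leq\epsilon/4$, and show $\norm{\,\ket{\alpha}\otimes\ket{\beta} - \ket{\tilde\alpha}\otimes\ket{\tilde\beta}\,}_2\leq\epsilon/2$. I would do this by the standard telescoping trick,
\[
\ket{\alpha}\otimes\ket{\beta} - \ket{\tilde\alpha}\otimes\ket{\tilde\beta}
= (\ket{\alpha}-\ket{\tilde\alpha})\otimes\ket{\beta} + \ket{\tilde\alpha}\otimes(\ket{\beta}-\ket{\tilde\beta}),
\]
followed by the triangle inequality and the multiplicativity of the $2$-norm over tensor products, using that the net states are unit vectors; this yields $\norm{\cdot}_2 \leq \epsilon/4 + \epsilon/4 = \epsilon/2$. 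To pass from the $2$-norm bound on vectors to the trace-norm bound on the rank-one projectors, I would use the elementary inequality $\norm{\,\outer{\phi}-\outer{\psi}\,}_1 \leq 2\norm{\ket{\phi}-\ket{\psi}}_2$ valid for unit vectors (equivalently, $\norm{\,\outer{\phi}-\outer{\psi}\,}_1 = 2\sqrt{1-|\ip{\phi}{\psi}|^2}$), which turns the $\epsilon/2$ vector bound into the desired $\norm{\cdot}_1\leq\epsilon$.

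A technical wrinkle I should address is that the telescoping estimate above requires the intermediate and final net vectors to be \emph{unit} vectors and, for the trace-norm conversion, ideally the relative phase between $\ket{\alpha}$ and $\ket{\tilde\alpha}$ to be chosen optimally; since the single-system net lemma already delivers its guarantee up to choice of global phase (it is really a statement about projectors), I can absorb the phase freedom on each factor independently, and the tensor-product net inherits this because a phase on $\ket{\tilde\alpha}$ times a phase on $\ket{\tilde\beta}$ is just a phase on the product. The main obstacle is really just bookkeeping the constants: one has to start from $\epsilon/2$-nets on the factors (hence the constant $10$ rather than $5$ in the size bound) and be careful that the halving propagates correctly through both the tensor-product triangle inequality and the vector-to-projector conversion so that the final trace-norm error is exactly $\epsilon$ and not, say, $2\epsilon$. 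No deeper idea is needed beyond this; the lemma is a routine ``product of nets is a net'' argument once the constants are tracked.
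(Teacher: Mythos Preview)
The paper does not supply its own proof of this lemma; it simply quotes the statement as Lemma~III.7 of \cite{Hayden:2006gx} and uses it as a black box. Your proposal is correct and is exactly the standard ``product of nets is a net'' argument that underlies the cited result: apply the single-system net (Lemma~II.4 of \cite{Hayden:2004um}) on each factor with parameter $\epsilon/2$, tensor the two nets, telescope the $2$-norm error, and convert to trace norm. The constants track as you describe, so nothing is missing.
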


The basic idea of involving $\epsilon$-net is natural. To show a given function $f$ is bounded over some set $M$, we only need to check $f$ is bounded by certain number over an $\epsilon$-net for $M$ if $f$ satisfies certain bounded slope condition.

For minimum entangling power $P_{min}^{(S_v)}$, we will show that it is Lipschitiz-continuous. 
\begin{lemma}
$P_{min}^{(S_v)}(U)$ is Lipschitz-continuous over unitary group $\mathcal{U}(d_Ad_B)$.
\end{lemma}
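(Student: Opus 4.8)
The plan is to show that the map $U \mapsto P_{min}^{(S_v)}(U)$ is Lipschitz-continuous by decomposing it into two pieces: first, the observation that $U \mapsto S(\tr_B(U\ket{\alpha\beta}\bra{\alpha\beta}U^\dagger))$ is Lipschitz-continuous in $U$ with a constant independent of the product state $\ket{\alpha\beta}$; and second, the general fact that taking a pointwise minimum (or supremum/infimum) of a family of functions that are all $L$-Lipschitz with a \emph{uniform} constant $L$ yields an $L$-Lipschitz function. The second fact is elementary: if $g_x$ is $L$-Lipschitz for every $x$ and $g(U) = \inf_x g_x(U)$, then for any $U, V$ and any $x$ we have $g(U) \le g_x(U) \le g_x(V) + L\,\norm{U-V}$, so taking the infimum over $x$ on the right gives $g(U) \le g(V) + L\norm{U-V}$, and by symmetry $\abs{g(U) - g(V)} \le L\norm{U-V}$. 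So the real content is the first piece.

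For the first piece, I would fix a product state $\ket{\psi} = \ket{\alpha}\otimes\ket{\beta}$ (normalized) and track how $S(\tr_B(U\outer{\psi}U^\dagger))$ changes as $U$ varies. The natural route is a chain of three Lipschitz estimates. First, $U \mapsto U\ket{\psi}$ is $1$-Lipschitz in operator norm since $\norm{U\ket{\psi} - V\ket{\psi}}_2 \le \norm{U - V}_\infty$. Second, the map from a unit vector $\ket{\phi} \in \mathbb{C}^{d_A}\otimes\mathbb{C}^{d_B}$ to its reduced density operator $\tr_B(\outer{\phi})$ is Lipschitz from the Euclidean norm to the trace norm: one has $\norm{\tr_B(\outer{\phi}) - \tr_B(\outer{\phi'})}_1 \le \norm{\outer{\phi} - \outer{\phi'}}_1 \le 2\norm{\ket{\phi} - \ket{\phi'}}_2$, using that the partial trace is trace-norm contractive and the standard bound relating the trace distance of pure states to their Euclidean distance. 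Third, and most delicately, the von Neumann entropy restricted to the set of density operators on the $d_A$-dimensional space $\mathbb{C}^{d_A}$ (note $d_A \le d_B$, so the reduced state on $A$ has rank at most $d_A$) is Lipschitz-continuous in trace norm with a constant controlled by $\log d_A$; this is a consequence of the Fannes (or the sharpened Fannes--Audenaert) inequality, which gives $\abs{S(\rho) - S(\rho')} \le \frac{1}{2}\norm{\rho - \rho'}_1 \log(d_A - 1) + h\!\left(\tfrac{1}{2}\norm{\rho-\rho'}_1\right)$ for states on a $d_A$-dimensional space. Composing these three estimates gives a bound of the form $\abs{S(\tr_B(U\outer{\psi}U^\dagger)) - S(\tr_B(V\outer{\psi}V^\dagger))} \le C(d_A)\,\norm{U-V}_\infty$ for a constant $C(d_A)$ depending only on $d_A$, \emph{uniformly} over all product states $\ket{\psi}$.

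The main obstacle is the third step: the von Neumann entropy is not globally Lipschitz on the set of all density operators (its gradient blows up near the boundary where eigenvalues approach zero), so one cannot simply assert Lipschitz-continuity of $S$. The resolution is exactly that Fannes-type inequalities provide a genuine \emph{modulus of continuity} that is Lipschitz away from scale zero but has an $\eta\log(1/\eta)$-type correction at small scales; one must either absorb this into an overall (possibly scale-dependent) Lipschitz bound valid on the compact unitary group — which is fine because the unitary group has bounded diameter, so one can take $C(d_A)$ large enough to dominate the correction term over the whole range $\norm{U-V}_\infty \in [0, 2]$ — or state the conclusion with the sharp modulus of continuity. Since the paper only needs Lipschitz-continuity for the subsequent net argument, I would take the former route and simply record that there is a finite constant, depending only on $d_A$ (equivalently on $\min(d_A,d_B)$), such that $P_{min}^{(S_v)}$ is Lipschitz with that constant on all of $\mathcal{U}(d_Ad_B)$. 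I would also remark that the same argument, with the Fannes step replaced by the hypothesis of Lipschitz-continuity of the measure $f$ on pure states, handles the general Lipschitz-continuous entanglement measures promised later in the subsection.
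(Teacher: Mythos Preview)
Your overall architecture matches the paper's: reduce to showing that each map $U\mapsto S(\tr_B(U\outer{\alpha\beta}U^\dagger))$ is Lipschitz with a constant independent of $\ket{\alpha\beta}$, then use the elementary fact that an infimum of uniformly $L$-Lipschitz functions is $L$-Lipschitz. The paper does exactly this, picking the minimizer for $U_2$ and comparing.

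The gap is in your third Lipschitz step. The Fannes--Audenaert inequality gives
\[
\abs{S(\rho)-S(\sigma)}\ \le\ T\log(d_A-1)+h(T),\qquad T=\tfrac12\tnorm{\rho-\sigma},
\]
and since $h(T)/T\to\infty$ as $T\to 0$, this is \emph{not} a Lipschitz bound in trace norm. Your proposed fix --- that the unitary group has bounded diameter, so one can choose $C(d_A)$ large enough to dominate the correction over $\norm{U-V}_\infty\in[0,2]$ --- does not work: the obstruction is at \emph{small} scales, not large. For any candidate constant $C$ there are unitaries $U,V$ as close as you like (hence $T$ as small as you like) with $h(T)>CT$, so your chain of inequalities fails to produce $\abs{S(\rho_A)-S(\sigma_A)}\le C\norm{U-V}$. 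A continuous function on a compact set need not be Lipschitz (think of $\sqrt{x}$ on $[0,1]$), and that is precisely the situation Fannes leaves you in.

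The paper avoids this by \emph{not} factoring through the trace distance of the reduced states. It invokes directly the pure-state Lipschitz estimate
\[
\abs{S(\tr_B\outer{\phi_1})-S(\tr_B\outer{\phi_2})}\ \le\ \sqrt{8}\,\log d_A\ \norm{\ket{\phi_1}-\ket{\phi_2}}_2,
\]
which is a genuine Lipschitz bound on the sphere. The mechanism is that in the Schmidt form $\ket{\phi}=\sum_i\sqrt{\lambda_i}\,\ket{e_i f_i}$ the relevant directional derivative of the local entropy picks up factors $\sqrt{\lambda_i}\log\lambda_i$ rather than $\log\lambda_i$, and $\sqrt{\lambda}\log\lambda$ stays bounded as $\lambda\to 0$; this is exactly what your route through $\rho_A$ loses. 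Combining with $\norm{U_1\ket{\gamma\delta}-U_2\ket{\gamma\delta}}_2\le\norm{U_1-U_2}_2$ the paper obtains the explicit constant $\sqrt{8}\log d_A$ in the Hilbert--Schmidt norm. (A minor further point: the paper works throughout with the Frobenius norm $\norm{\cdot}_2$ rather than the operator norm, which is what the subsequent concentration inequalities need.)
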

\begin{proof}
For any gates $U_1, U_2 \in \mathcal{U}(d_Ad_B)$, without loss of generality, let's say $P_{min}^{(S_v)}(U_1)$ and $P_{min}^{(S_v)}(U_2)$ achieve their minimum at points $\ket{\alpha}\otimes \ket{\beta}$ and $\ket{\gamma}\otimes \ket{\delta}$ respectively. We may further assume $P_{min}^{(S_v)}(U_1)\geq P_{min}^{(S_v)}(U_2)$.

\begin{equation}
\begin{split}
&|P_{min}^{(S_v)}(U_1)-P_{min}^{(S_v)}(U_2)|\\
={}& S(\tr_B(U_1(\ket{\alpha}\otimes\ket{\beta})))- S(\tr_B(U_2(\ket{\gamma}\otimes\ket{\delta})))\\
\leq {}&S(\tr_B(U_1(\ket{\gamma}\otimes\ket{\delta})))- S(\tr_B(U_2(\ket{\gamma}\otimes\ket{\delta})))\\
\leq {}& \sqrt{8}\log d_A \Vert   U_1(\ket{\gamma}\otimes\ket{\delta}) -U_2(\ket{\gamma}\otimes\ket{\delta}) \Vert_2\\
\leq {}& \sqrt{8} \log d_A \sqrt{\tr(U_1-U_2)(U_1^{\dagger}-U_2^{\dagger})} \\
={}& \sqrt{8} \log d_A \Vert U_1-U_2\Vert_2.
\end{split}
\end{equation}
\end{proof}

By combining Theorem~\ref{thm:concentration} and the $\epsilon$-net argument, we have the following theorem.

\begin{theorem}\label{thm:concentrate_1}
Assume $d_B\geq d_A\geq 3$, let $U$ be a random unitary gate in $\mathcal{U}(d_Ad_B)$ according to the Haar measure, then 
\begin{equation}
\mu\left(P_{min}^{(S_v)}(U)<\log d_A-\frac{d_A}{d_B \ln 2}-\alpha\right)<\left(\frac{20\sqrt{2}\log d_A}{\alpha}\right)^{2(d_A+d_B)} \exp{\left(-\frac{(d_Ad_B-1)\alpha^2}{32\pi^2\ln{2}{(\log d_A)}^2}\right)}.
\end{equation}
\end{theorem}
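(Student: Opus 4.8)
The plan is to promote a pointwise concentration estimate on a finite net to a uniform estimate over all product states, via a standard discretization-and-union-bound argument that combines Hayden et al.'s concentration of entropy (Theorem~\ref{thm:concentration}), the $\epsilon$-net for bipartite product states (Lemma~\ref{lemma:net_Schmidt}), and the Lipschitz continuity established above. Fix a scale $\epsilon\in(0,1]$ to be chosen at the end and let $\mathcal{N}$ be an $\epsilon$-net for the set of product states in $\mathbb{C}^{d_A}\otimes\mathbb{C}^{d_B}$, so that $|\mathcal{N}|\le(10/\epsilon)^{2(d_A+d_B)}$ and every product state lies within $\epsilon/2$ in the vector $2$-norm of some element of $\mathcal{N}$. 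For a fixed product state $\ket{\alpha\beta}=\ket{\alpha}\otimes\ket{\beta}$, when $U$ is Haar-distributed on $\mathcal{U}(d_Ad_B)$ the vector $U\ket{\alpha\beta}$ is distributed according to the unique unitarily invariant measure on the unit sphere of $\mathbb{C}^{d_A}\otimes\mathbb{C}^{d_B}$, i.e.\ it is a random state as in Theorem~\ref{thm:concentration}. First I would apply that theorem to each point of $\mathcal{N}$ with deviation $\alpha/2$ and union-bound over $\mathcal{N}$: except on an event $\mathcal{E}$ with
\[
\mu(\mathcal{E})\;\le\;|\mathcal{N}|\,\exp\!\Big(-\tfrac{(d_Ad_B-1)(\alpha/2)^2}{8\pi^2\ln2(\log d_A)^2}\Big)\;\le\;(10/\epsilon)^{2(d_A+d_B)}\exp\!\Big(-\tfrac{(d_Ad_B-1)\alpha^2}{32\pi^2\ln2(\log d_A)^2}\Big),
\]
every $\ket{\alpha\beta}\in\mathcal{N}$ satisfies $S(\tr_B(U\ket{\alpha\beta}\bra{\alpha\beta}U^{\dagger}))\ge\log d_A-\tfrac{d_A}{d_B\ln2}-\tfrac{\alpha}{2}$.

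Next I would bootstrap this to an arbitrary product state $\ket{\gamma}\otimes\ket{\delta}$. Pick $\ket{\alpha}\otimes\ket{\beta}\in\mathcal{N}$ with $\norm{\ket{\gamma}\otimes\ket{\delta}-\ket{\alpha}\otimes\ket{\beta}}_2\le\epsilon/2$; since $U$ is an isometry and, by the continuity estimate $\big|S(\tr_B\ket{\phi}\bra{\phi})-S(\tr_B\ket{\phi'}\bra{\phi'})\big|\le\sqrt{8}\,\log d_A\,\norm{\ket{\phi}-\ket{\phi'}}_2$ used in the proof of the Lipschitz lemma (valid for all unit vectors $\ket{\phi},\ket{\phi'}$), we obtain on $\mathcal{E}^{c}$
\[
S\big(\tr_B(U\ket{\gamma\delta}\bra{\gamma\delta}U^{\dagger})\big)\;\ge\;\log d_A-\tfrac{d_A}{d_B\ln2}-\tfrac{\alpha}{2}-\tfrac{\sqrt{8}}{2}\,\epsilon\log d_A .
\]
Taking the minimum over all $\ket{\gamma}\otimes\ket{\delta}$ and choosing $\epsilon=\tfrac{\alpha}{2\sqrt{2}\,\log d_A}$ — which makes $\tfrac{\sqrt{8}}{2}\epsilon\log d_A=\tfrac{\alpha}{2}$ and $10/\epsilon=20\sqrt{2}\,\log d_A/\alpha$ — gives $P_{min}^{(S_v)}(U)\ge\log d_A-\tfrac{d_A}{d_B\ln2}-\alpha$ on $\mathcal{E}^{c}$, and substituting this $\epsilon$ into the displayed bound for $\mu(\mathcal{E})$ yields exactly the inequality in the statement. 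The hypothesis $d_B\ge d_A\ge3$ is used precisely to invoke Theorem~\ref{thm:concentration}, and $\epsilon\in(0,1]$ holds automatically in the only regime where the statement is non-vacuous, namely $\alpha\le2\sqrt{2}\,\log d_A$.

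Each individual step is routine once the preceding lemmas are granted, so the one point really worth care — and the main (mild) obstacle — is the parameter balancing: the net size grows like $(\log d_A/\alpha)^{2(d_A+d_B)}$ as $\epsilon\to0$, while the concentration factor decays like $\exp\!\big(-\Theta(d_Ad_B\,\alpha^2/(\log d_A)^2)\big)$, so one must verify that the $d_Ad_B$ in the exponent dominates the $d_A+d_B$ in the base, i.e.\ that the right-hand side of the statement still tends to $0$ as $\min(d_A,d_B)\to\infty$ for the $\alpha$'s of interest (a small constant, or even $\alpha$ of order $(\log d_A)/\sqrt{\min(d_A,d_B)}$ up to lower-order factors). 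The symmetric split of the budget, $\alpha=\tfrac{\alpha}{2}+\tfrac{\alpha}{2}$ between the concentration deviation and the discretization error, is exactly what makes the constants $20\sqrt{2}$ and $32$ come out as stated; any other allocation would change only those constants.
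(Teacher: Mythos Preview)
Your proof is correct and follows essentially the same approach as the paper: both combine the $\epsilon$-net of Lemma~\ref{lemma:net_Schmidt}, the union bound, the pointwise concentration of Theorem~\ref{thm:concentration}, and the Lipschitz continuity of the local entropy, then choose $\epsilon=\alpha/(2\sqrt{2}\log d_A)$ to balance the discretization error against the concentration deviation. The only cosmetic difference is the order of bookkeeping---the paper first shifts the threshold by $\sqrt{2}\,\epsilon\log d_A$ and then applies concentration with deviation $\alpha-\sqrt{2}\,\epsilon\log d_A$, whereas you fix the deviation at $\alpha/2$ from the outset---but the same choice of $\epsilon$ makes these identical and produces exactly the stated constants.
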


\begin{proof}
Following from Lemma~\ref{lemma:net_Schmidt}, we can choose $\mathcal{N}$ as an $\epsilon$-net for bipartite product states and $|\mathcal{N}|\leq \left(\frac{10}{\epsilon}\right)^{2(d_A+d_B)}$. Here, $\epsilon$ is a small real number that will be fixed later. We have
\begin{equation}
\begin{split}
&\mu\left(\min\limits_{\ket{\alpha}\in \mathbb{C}^{d_A},\ket{\beta}\in \mathbb{C}^{d_B}}S(\tr_B(U(\ket{\alpha}\otimes\ket{\beta})))<\log d_A-\frac{d_A}{d_B \ln 2}-\alpha\right)\\
\leq {}&\mu\left(\min\limits_{\ket{\psi}\in \mathcal{N}}S(\tr_B(U\ket{\psi}))<\log d_A-\frac{d_A}{d_B \ln 2}-\alpha+\sqrt{2}\epsilon\log d_A\right)\\
\leq {}& \sum\limits_{\ket{\psi}\in \mathcal{N}}\mu\left(S(\tr_B(U\ket{\psi}))<\log d_A-\frac{d_A}{d_B \ln 2}-\alpha+\sqrt{2}\epsilon\log d_A\right)\\
\leq {}& \left(\frac{10}{\epsilon}\right)^{2d_A+2d_B} \exp\left(-\frac{(d_Ad_B-1)\left(\alpha-\sqrt{2}\epsilon\log d_A\right)^2}{8\pi^2\ln 2(\log d_A)^2}\right)\label{eq:epsilon}.
\end{split}
\end{equation}

The last inequality follows from Theorem~\ref{thm:concentration} and the fact that a random unitary distributed according to the Haar measure acting on a fixed state will produce a random state distributed according to the unitarily invariant probability measure on the pure state space. The proof can be completed by choosing $\epsilon$ as $\frac{\alpha}{2\sqrt{2}\log d_A}$. It provides an upper bound on the probability that the randomly chosen gate has minimum entangling power smaller than $\left(\log d_A-\frac{d_A}{d_B \ln 2}-\alpha\right)$. Straightforwardly, a gate with minimum entangling power at least $\left(\log d_A-\frac{d_A}{d_B \ln 2}-\alpha\right)$ exists if the upper bound presented above is strictly less than $1$.
We can further secure this by requiring
\begin{equation}
\frac{1}{\alpha^2}\ln{\frac{20\sqrt{2}\log d_A}{\alpha}}\leq \frac{d_Ad_B-1}{64\pi^2\ln{2}(d_A+d_B)(\log d_A)^2}.
\end{equation}
\end{proof}

\begin{remark}\label{remark:3933}
\begin{enumerate}
\item [1.] When $d_A$ tends to $+\infty$, a random gate will approximately almost surely have large $P_{min}^{(S_v)}$. However, a non-trivial $\alpha$ satisfying both $\log d_A-\frac{d_A}{d_B \ln 2}-\alpha>0$ and $\left(\frac{20\sqrt{2}\log d_A}{\alpha}\right)^{2(d_A+d_B)} \exp{\left(-\frac{(d_Ad_B-1)\alpha^2}{32\pi^2\ln{2}{(\log d_A)}^2}\right)}<1$ exists only if $d_A\geq 3933$.\\
\item [2.] One may improve the above result by choosing a suitable $\epsilon$. The right-hand side of Equation~\ref{eq:epsilon} achieves its minimum when $\epsilon$ satisfies the transcendental equation $\epsilon^2 \ln\frac{10}{\epsilon}=8\pi^2 \ln{2} \left(\frac{d_A+d_B}{d_Ad_B-1}\right)$. However, this improvement does not change too much.
\end{enumerate}
\end{remark}

\subsubsection{Riemann Manifold Approach}\label{bipartite:entropy:Riemann}

We proved the concentration of $P_{min}^{(S_v)}$ in the previous subsection. A random quantum gate acting on a bipartite quantum system associated with $\mathbb{C}^{d_A}\otimes \mathbb{C}^{d_B}$ will almost surely have minimum entangling power very close to $\log d_A-\frac{d_A}{d_B \ln 2}$ when $\min{(d_A,d_B)}$ tends to infinity. However, as we showed in Remark~\ref{remark:3933}, Theorem~\ref{thm:concentrate_1} can not provide any insight on minimum entangling power of quantum gates acting on $\mathbb{C}^{d_A}\otimes \mathbb{C}^{d_B}$ if $\min{(d_A,d_B)}<3933$.  In this subsection, we will improve Theorem~\ref{thm:concentrate_1} by introducing some techniques from Riemann geometry.

In general, a unitary matrix $U$ has a complex determinant $\det U$ with modulus 1 but arbitrary phase. However, when using a unitary matrix to describe a quantum gate, any constant phase factor does not change the physical effect of the gate, hence $U$ is equivalent to $U'=(\det U)^{-1/n}U$, where $\det U'=1$. Therefore, without loss of generality, we only need to look into the entangling power over special unitary group. In other words, for the functions we are interested in, concentration over unitary group is equivalent to that over special unitary group.

The special unitary group is a Lie group, i.e., a group which is also a smooth manifold. In \cite{HaarNote}, the standard logarithmic Sobolev based concentration inequalities are generalized to compact Riemann manifolds, giving concentration inequalities for probability measures on certain Lie groups, including $\mathcal{SU}(n)$.

In this subsection, we will first briefly discuss those elementary Riemann geometric concepts that are necessary for understanding the concentration of measure theorem in \cite{HaarNote}. After that, by applying that theorem to the minimum entangling power function $P_{min}^{(S_v)}$ of special unitary group, we will improve our previous concentration result in Subsection~\ref{bipartite:entropy:Hayden}.

Regarding the $N\times N$ special unitary group $\mathcal{SU}(N)$ as a manifold $M$ embedded in the ambient Euclidian space $\mathbb{C}^{N^2}$, we can think of the tangent space $T_p M$ at a point $p$ on manifold $M$ as a hyperplane that best approximates $M$ around $x$. (There are more general definitions of $T_p M$ that do not depend on an ambient Euclidian space, but we will not go into that here.) A \emph{Riemann metric} $g$ is a family of inner products, $g_p: T_p M \times T_p M \to \mathbb{R}$, defined for each tangent space $T_p M$. For our purposes, it suffices to use the inner product inherited from the the ambient Euclidian space $\mathbb{C}^{N^2}$.

Usually, we use the Riemann curvature tensor to describe the curvature of Riemann manifolds, which is formally given in terms of Levi-Civita connection and Lie bracket. One may also introduce the Ricci curvature tensor $Ric$ to measure the growth rate of the volume of metric balls in the manifold.  Here, we will not go into explicit expressions for the Riemann curvature tensor or Ricci curvature tensor in terms of the Levi-Civita connection and Lie bracket. For more information about these materials, please refer to an appropriate textbook in differential geometry~\cite{Lee:2003tw}.

The Ricci curvature tensor $Ric_p$ on any point $p \in M$ can be computed. Specifically, we have the following fact:

\begin{theorem}[\cite{HaarNote}, Proposition 3.11]\label{thm:ricci}
Let $M = \mathcal{SU}(N)$, then for each $p \in M$ and each $v \in T_p(M)$,
\begin{equation}
Ric_p(v, v) = \frac{N}{2}g_p(v, v).
\end{equation}
\end{theorem}

A useful tool for proving concentration of measure theorems is the log-Sobolev inequality and the ``Herbst argument".

\begin{definition}[Log-Sobolev inequality~\cite{HaarNote}]
We say that $(X, d, \mathbb{P})$ satisfies a \emph{log-Sobolev inequality} with constant $C>0$ if, for every locally Lipschitz $g: X \to \mathbb{R}$,
\begin{equation}
\mathrm{Ent}\left(g^2\right) \le 2C\mathbb{E}\left(|\nabla g|^2\right),
\end{equation}
where the \emph{entropy} of a function $f$, $\mathrm{Ent}(f)$, is defined as
\begin{equation}
\mathrm{Ent}(f) = \mathbb{E}[f \log(f)] - (\mathbb{E}f) \log(\mathbb{E}f),
\end{equation}
and $|\nabla g|$ is defined as
\begin{equation}
|\nabla g| = \limsup_{y\to x} \frac{|g(y)-g(x)|}{d(y, x)}.
\end{equation}
\end{definition}

\begin{theorem}[\cite{HaarNote}, Theorem 3.5]\label{thm:herbst}
Suppose that $(X, d, \mathbb{P})$ satisfies a log-Sobolev inequality with constant $C>0$. Then for every 1-Lipschitz function $F: X \to \mathbb{R}, \mathbb{E}|F| < \infty$, and for every $r \ge 0$,
\begin{equation}
\mu(|F-E_\mu F| \ge r) \le 2e^{-r^2/2C}.
\end{equation}
\end{theorem}

The reason we are interested in the Ricci curvature tensor of $\mathcal{SU}(N)$ is the so-called Bakry-\'Emery creterion.

\begin{theorem}[Bakry-\'Emery~\cite{HaarNote}]\label{thm:bakry}
Let $(M,g)$ be a compact, connected, $m$-dimensional Riemannian manifold with normalized volume measure $\mu$. Suppose that there is a constant $c>0$ such that for each $p\in M$ and each $v\in T_p M$,
\begin{equation}
Ric_p(v,v)\geq \frac{1}{c} g_p(v,v).
\end{equation}
Then $\mu$ satisfies a log-Sobolev inequality with constant $c$.
\end{theorem}

Combining Theorem~\ref{thm:ricci}, Theorem~\ref{thm:bakry}, and Theorem~\ref{thm:herbst}, we immediately get the following result.

\begin{theorem}\label{thm:su}
For any differentiable function $f: SU(N)\rightarrow \mathbb{R}$ such that for any $U_1$, $U_2\in SU(N)$, $|f(U_1)-f(U_2)|\leq |f|_L\Vert U_1-U_2\Vert_2$, we have for all $\delta \geq 0$,
\begin{equation}
\mu\left(\left|f -\int\limits_{SU(N)}f(U)d\mu(U)\right|\geq \delta\right)\leq 2 e^{-\frac{N\delta^2}{4 |f|_L^2}}.
\end{equation}
\end{theorem}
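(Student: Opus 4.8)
The plan is to obtain this concentration bound as a direct application of the Riemannian log-Sobolev concentration inequality of Theorem~\ref{thm:riemann} to the manifold $M=SU(N)$ endowed with the metric $g$ introduced above. Two preliminary steps are needed: translating the Euclidean-type Lipschitz hypothesis on $f$ into a bound on the intrinsic gradient norm $\Vert|\triangledown f|\Vert_2$, and inserting the Ricci lower bound $Ric(SU(N))\geq N-\frac{3}{2}$.

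First I would compare the geodesic distance $g(U_1,U_2)$ on $SU(N)$ with the chordal distance $\tilde g(U_1,U_2)=\sqrt{\frac{1}{2}\tr[(U_1-U_2)(U_1-U_2)^{\dagger}]}=\frac{1}{\sqrt{2}}\Vert U_1-U_2\Vert_2$. Relaxing the class of competing paths from those lying in $SU(N)$ to all paths in the ambient matrix space can only decrease the infimum, so $\tilde g\leq g$; and by Jensen's inequality applied to the convex map $M\mapsto\sqrt{\tr(MM^{\dagger})}$ one checks that the straight-line path already realizes $\tilde g$. Hence $\frac{1}{\sqrt{2}}\Vert U_1-U_2\Vert_2=\tilde g(U_1,U_2)\leq g(U_1,U_2)$, so the hypothesis $|f(U_1)-f(U_2)|\leq|f|_L\Vert U_1-U_2\Vert_2$ forces $f$ to be $\sqrt{2}\,|f|_L$-Lipschitz with respect to $g$. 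Since $f$ is differentiable on the compact manifold $SU(N)$, being $\sqrt{2}\,|f|_L$-Lipschitz for $g$ is equivalent to $\sup_{x\in SU(N)}\sqrt{g_x^{-1}(\triangledown f(x),\triangledown f(x))}\leq\sqrt{2}\,|f|_L$, i.e. $\Vert|\triangledown f|\Vert_2^2\leq 2|f|_L^2$.

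Next I would apply Theorem~\ref{thm:riemann} with $\Phi$ constant, so that $\mu_\Phi$ is the normalized Haar measure on $SU(N)$ and $Hess\,\Phi\equiv 0$. The curvature hypothesis then becomes $Ric_x(v,v)\geq c^{-1}g_x(v,v)$, which holds with $c^{-1}=N-\frac{3}{2}$ by the quoted Ricci estimate for this metric. Feeding $c=1/(N-\frac{3}{2})$ and $\Vert|\triangledown f|\Vert_2^2\leq 2|f|_L^2$ into the conclusion of Theorem~\ref{thm:riemann} gives
\[
\mu(|f-\int_{SU(N)}f\,d\mu|\geq\delta)\leq 2\exp(-\frac{(N-\frac{3}{2})\delta^2}{4|f|_L^2}).
\]
Finally, since $N\geq 3$ (and in the applications $N=d_Ad_B\geq 9$) we have $N-\frac{3}{2}\geq\frac{N}{2}$, hence $\frac{N-3/2}{4}\geq\frac{N}{8}$, which sharpens the exponent to the claimed bound $2e^{-N\delta^2/(8|f|_L^2)}$; the inequality is strict for $\delta>0$ because $N>3$ makes that comparison strict, and trivially $1<2$ at $\delta=0$.

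The step requiring the most care is the second one — the metric comparison combined with the equivalence ``Lipschitz for $g$ $\iff$ pointwise-bounded Riemannian gradient'' — since this is what converts the explicit, coordinate-dependent hypothesis into the intrinsic quantity $\Vert|\triangledown f|\Vert_2$ that Theorem~\ref{thm:riemann} actually controls; one should note in passing that the geodesic distance need not be smooth along the cut locus, but the Lipschitz-versus-gradient equivalence still holds on any compact Riemannian manifold, so there is no real obstacle. Everything else is bookkeeping on top of the two imported results (Guionnet's Corollary 5.8 and the Ricci bound for $SU(N)$).
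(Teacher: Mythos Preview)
Your proof is correct and follows essentially the same route as the paper: compare the chordal distance $\tilde g=\frac{1}{\sqrt{2}}\Vert U_1-U_2\Vert_2$ with the geodesic distance $g$ to convert the Euclidean Lipschitz bound into $\Vert|\triangledown f|\Vert_2^2\leq 2|f|_L^2$, then feed this and the Ricci lower bound $N-\tfrac32$ into Theorem~\ref{thm:riemann} with $\mu_\Phi$ the Haar measure, and finally weaken $(N-\tfrac32)/4$ to $N/8$. If anything you are slightly more explicit than the paper about why $\Phi$ may be taken constant and about the passage from $(N-\tfrac32)/4$ to $N/8$ (which indeed tacitly uses $N\geq 3$).
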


\begin{remark}
For any $\theta\in [0,2\pi)$, let's $U_0(\theta)=\mathop{diag}(e^{i\theta},1,\cdots,1)$. The unitary group $\mathcal{U}(d_Ad_B)=\bigcup \limits_\theta U_0(\theta)\mathcal{SU}(d_Ad_B)$ is an infinite union. Therefore, the generalization from $\mathcal{SU}(d_Ad_B)$ to $\mathcal{U}(d_Ad_B)$ is not as straightforward as that from $\mathcal{SO}(d_Ad_B)$ to $\mathcal{O}(d_Ad_B)$.  Fortunately, a well-defined entangling power $P_{avg}(U)$ should be invariant under arbitrary phase rotation, i.e., $P_{avg}(e^{i\theta}U)=P_{avg}(U)$. As a consequence, $P_{avg}(U)$ is concentrated over $\mathcal{SU}(d_Ad_B)$ if and only if it is concentrated over $\mathcal{U}(d_Ad_B)$. To state our result more precisely, we will keep focusing mainly on $\mathcal{SU}(d_Ad_B)$. Though, the reader should keep in mind that our discussion also holds for $\mathcal{U}(d_Ad_B)$ since functions we are mostly concerned are invariant under arbitrary phase rotation.
\end{remark}

\begin{cor}
For any $\delta>0$, $\mu\left(\left|P_{min}^{(S_v)}(U) -\int\limits_{\mathcal{SU}(d_Ad_B)}P_{min}^{(S_v)}(U)d\mu(U)\right|\geq \delta\right)\leq 2 e^{-\frac{d_Ad_B\delta^2}{32 (\log d_A)^2}}$.
\end{cor}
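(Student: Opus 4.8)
The plan is to obtain the corollary as an immediate specialization of Theorem~\ref{thm:concentration_entropy}, the only work being to pin down an explicit Lipschitz constant for $P_{min}^{(S_v)}$ with respect to the metric $\norm{U_1-U_2}_2$. First I would recall the Lipschitz bound already established for $P_{min}^{(S_v)}$ in the lemma preceding Theorem~\ref{thm:concentrate_1}: there it was shown that
\begin{eqnarray}
\abs{P_{min}^{(S_v)}(U_1)-P_{min}^{(S_v)}(U_2)}\leq \sqrt{8}\,\log d_A\,\norm{U_1-U_2}_2,
\end{eqnarray}
so that $\abs{P_{min}^{(S_v)}}_L \leq \sqrt{8}\,\log d_A$. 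This is exactly the constant Theorem~\ref{thm:concentration_entropy} asks for, since that theorem is stated for any $\abs{P_{min}^{(S_v)}}_L$-Lipschitz choice.

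Second, I would substitute $\abs{P_{min}^{(S_v)}}_L = \sqrt{8}\,\log d_A$ into the conclusion of Theorem~\ref{thm:concentration_entropy}, namely
\begin{eqnarray}
\mu\!\left(\abs{P_{min}^{(S_v)}(U)-\int_{\mathcal{SU}(d_Ad_B)}P_{min}^{(S_v)}(U)\,d\mu(U)}\geq \delta\right)\leq 2\,e^{-\frac{d_Ad_B\,\delta^2}{8\,\abs{P_{min}^{(S_v)}}_L^2}},
\end{eqnarray}
and compute the exponent: $8\,\abs{P_{min}^{(S_v)}}_L^2 = 8\cdot 8(\log d_A)^2 = 64(\log d_A)^2$, which yields precisely the claimed bound $2\,e^{-d_Ad_B\delta^2/(64(\log d_A)^2)}$. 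By the remark following Theorem~\ref{thm:concentration_entropy}, the same statement transfers from $\mathcal{SU}(d_Ad_B)$ to $\mathcal{U}(d_Ad_B)$ because $P_{min}^{(S_v)}$ is phase-invariant, so no separate argument is needed there.

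There is essentially no main obstacle here — the corollary is bookkeeping — but the one point that deserves a sentence of care is making sure the Lipschitz constant derived in the earlier lemma is stated with respect to the $\norm{\cdot}_2$ (Frobenius) distance on matrices, which is indeed the metric appearing in the hypothesis of Theorem~\ref{thm:su} and hence of Theorem~\ref{thm:concentration_entropy}; the chain of inequalities comparing the Euclidean metric on $M(N)$ with the Riemannian metric $g$ on $SU(N)$, carried out just before Theorem~\ref{thm:su}, guarantees this compatibility. With that noted, the proof is a one-line substitution.
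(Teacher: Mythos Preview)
Your proposal is correct and matches the paper's intended argument: the corollary is simply Theorem~\ref{thm:concentration_entropy} with the Lipschitz constant $|P_{min}^{(S_v)}|_L=\sqrt{8}\log d_A$ from the earlier lemma plugged in. The paper states the corollary without proof precisely because it is this one-line substitution.
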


So the above corollary improves the concentration inequality we provided in Subsection~\ref{bipartite:entropy:Hayden}. In the rest of this Subsection, we will estimate the central point $\int\limits_{\mathcal{SU}(d_Ad_B)}P_{min}^{(S_v)}(U)d\mu(U)$. Though numerical calculation is always possible, an analytic estimation is not that easy. Fortunately, the main idea from Subsection~\ref{bipartite:entropy:Hayden} will provide us an analytic lower bound for $\int\limits_{\mathcal{SU}(d_Ad_B)}P_{min}^{(S_v)}(U)d\mu(U)$. 

Following from Theorem~\ref{thm:concentrate_1}, for any $\lambda>0$ , we have
\begin{equation}
\mu\left(P_{min}^{(S_v)}(U)\geq \log d_A-\lambda-\frac{d_A}{d_B \ln 2}\right)\geq 1-  \left(\frac{20\sqrt{2}\log d_A}{\lambda}\right)^{2d_A+2d_B} \exp\left(-\frac{(d_Ad_B-1)\lambda^2}{32\pi^2\ln 2(\log d_A)^2}\right)
\end{equation}
where $U$ is uniformly chosen at random from $\mathcal{U}(d_Ad_B)$ according to the Haar measure and $d_B\geq d_A\geq 3$.

Straightforwardly, for any $\lambda>0$, we have
\begin{equation}
\int\limits_{\mathcal{SU}(d_Ad_B)}P_{min}^{(S_v)}(U)d\mu(U) \geq \left(\log d_A-\lambda-\frac{d_A}{d_B \ln 2}\right)\left(1-  \left(\frac{20\sqrt{2}\log d_A}{\lambda}\right)^{2d_A+2d_B} \exp\left(-\frac{(d_Ad_B-1)\lambda^2}{32\pi^2\ln 2(\log d_A)^2}\right)\right).
\end{equation}

Some careful calculation would lead to the following stronger result. 
\begin{equation}\label{eq:int}
\int\limits_{\mathcal{SU}(d_Ad_B)}P_{min}^{(S_v)}(U)d\mu(U)\geq \log d_A-\frac{d_A}{d_B \ln 2}-1.
\end{equation}

The details can be found in Appendix~\ref{appendix:medium} within which the above claim is formulated as Corollary~\ref{cor:lower}.

\begin{cor}
Let's assume a bipartite quantum system $\mathbb{C}^{d_A}\otimes \mathbb{C}^{d_B}$ is given and $\min(d_A,d_B)$ is large enough. Then, with high probability, a random gate acting on this system will transform every product state to a nearly-maximally entangled state, or more precisely, to a state with entanglement entropy higher than $\left(\log d_A-\frac{d_A}{d_B \ln 2}-1\right)$. In other words, a random gate acting on this system will have $P_{min}^{(S_v)}\geq \log d_A-\frac{d_A}{d_B \ln 2}-1$.
\end{cor}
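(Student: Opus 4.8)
The plan is to combine the concentration inequality of Theorem~\ref{thm:concentration_entropy} (equivalently its Corollary with constant $64(\log d_A)^2$) with the lower bound on the central point $\int_{\mathcal{SU}(d_Ad_B)}P_{min}^{(S_v)}(U)\,d\mu(U)$ recorded in Equation~\ref{eq:int}. Concretely, write $m = \int_{\mathcal{SU}(d_Ad_B)}P_{min}^{(S_v)}(U)\,d\mu(U)$ and recall from Equation~\ref{eq:int} that $m \geq \log d_A - \frac{d_A}{d_B\ln 2} - 1$. Since the statement only claims entangling power \emph{at least} $\log d_A - \frac{d_A}{d_B\ln 2} - 1$, the relevant one-sided deviation event $\{P_{min}^{(S_v)}(U) < m - \delta\}$ is contained in $\{|P_{min}^{(S_v)}(U) - m| \geq \delta\}$, whose measure Theorem~\ref{thm:concentration_entropy} bounds by $2e^{-d_Ad_B\delta^2/(64(\log d_A)^2)}$.

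First I would fix an arbitrarily small tolerance and argue as follows: for any $\eta > 0$, set $\delta = \eta$ and observe that $P_{min}^{(S_v)}(U) \geq m - \eta \geq \log d_A - \frac{d_A}{d_B\ln 2} - 1 - \eta$ holds except on a set of measure at most $2e^{-d_Ad_B\eta^2/(64(\log d_A)^2)}$. As $\min(d_A,d_B) \to \infty$ the exponent $d_Ad_B\eta^2/(64(\log d_A)^2) \to \infty$ (the numerator grows polynomially in the dimensions while the denominator only grows polylogarithmically), so this failure probability tends to $0$; hence for every fixed $\eta > 0$ a random gate has $P_{min}^{(S_v)}(U) \geq \log d_A - \frac{d_A}{d_B\ln 2} - 1 - \eta$ with probability approaching $1$. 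Absorbing the harmless $\eta$ into the informal phrasing ``larger than $(\log d_A - \frac{d_A}{d_B\ln 2} - 1)$'' and ``with high probability'' gives exactly the stated corollary. The final sentence — that this means the gate transforms \emph{every} product state to a state of entropy of entanglement at least this value — is just the definition of $P_{min}^{(S_v)}$ as a minimum over all product inputs, so no extra work is needed there.

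I do not expect a genuine obstacle here: the two hard ingredients (the Riemannian concentration bound and the lower bound on the mean) are already established as Theorem~\ref{thm:concentration_entropy} and Equation~\ref{eq:int}/Corollary~\ref{cor:lower}, and the Lipschitz constant $|P_{min}^{(S_v)}|_L \leq \sqrt{8}\log d_A$ needed to instantiate the concentration inequality was verified in the Lipschitz-continuity lemma of Subsection~\ref{bipartite:entropy:Hayden}. The only mildly delicate point is bookkeeping with the $\eta$-slack and making sure the asymptotic regime ``$\min(d_A,d_B)$ large enough'' is quantified correctly — e.g. spelling out that one may take $\eta$ itself to shrink slowly with the dimension (say $\eta = (\log d_A)/\sqrt[4]{d_Ad_B}$) so that the bound $\log d_A - \frac{d_A}{d_B\ln 2} - 1 - \eta$ still has the advertised leading behavior while the failure probability $2\exp(-\sqrt{d_Ad_B}/64)$ is exponentially small. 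This is a routine optimization rather than a conceptual difficulty.
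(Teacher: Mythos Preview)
Your proposal is correct and is essentially the same approach the paper takes: the corollary is stated immediately after establishing the concentration inequality for $P_{min}^{(S_v)}$ around its mean (Theorem~\ref{thm:concentration_entropy} and its corollary with constant $64(\log d_A)^2$) and the lower bound on that mean (Equation~\ref{eq:int}/Corollary~\ref{cor:lower}), with the corollary following by combining the two just as you describe. The paper does not spell out the $\eta$-slack or the dimension-dependent choice of $\eta$ explicitly, so your write-up is if anything slightly more careful than the original.
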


It is natural to choose the von Neumann entropy of reduced density operator as the entanglement measure, as we have done so far in this section, but it is also important to consider other entanglement measures as they might be useful in different scenarios.  It must be emphasized that our result is not dependent on the choice of entanglement measure. In the following theorem, we will show that,  minimum entangling power with respect to any Lipschitz-continuous entanglement measure is also Lipschitz-continuous. Hence, such minimum entangling power is also concentrated over the special unitary group (or equivalently, the unitary group).

%

\begin{theorem}
For any Lipschitz-continuous entanglement measure $f$, let $P_{min}^{(f)}$ be the minimum entangling power of quantum gates in $\mathcal{SU}(d_Ad_B)$, with respect to entanglement measure $f$. Then $P_{min}^{(f)}$ is concentrated on the special unitary group $SU(d_Ad_B)$. More specifically,  for all $\delta > 0$,
\begin{equation}
\mu\left(\left|P_{min}^{(f)}(U) -\int\limits_{\mathcal{SU}(d_Ad_B)}P_{min}^{(f)}(U)d\mu(U)\right|\geq \delta\right)\leq 2 e^{-\frac{d_Ad_B\delta^2}{4 |f|_L^2}}.
\end{equation}
\end{theorem}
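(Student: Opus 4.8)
The plan is to reduce this statement to the already-proved concentration result on $SU(d_Ad_B)$ by showing that $P_{min}^{(f)}$ inherits Lipschitz continuity from $f$, in exact parallel to the earlier lemma that $P_{min}^{(S_v)}$ is Lipschitz. First I would fix two gates $U_1, U_2 \in \mathcal{SU}(d_Ad_B)$ and, without loss of generality, assume $P_{min}^{(f)}(U_1) \geq P_{min}^{(f)}(U_2)$. Let $\ket{\gamma}\otimes\ket{\delta}$ be a product state at which $P_{min}^{(f)}(U_2)$ attains its minimum (such a minimizer exists because $f$ is continuous and the set of product states is compact). Then
\begin{eqnarray}
0 \leq P_{min}^{(f)}(U_1) - P_{min}^{(f)}(U_2) \leq f(U_1(\ket{\gamma}\otimes\ket{\delta})) - f(U_2(\ket{\gamma}\otimes\ket{\delta})) \leq |f|_L \Vert U_1(\ket{\gamma}\otimes\ket{\delta}) - U_2(\ket{\gamma}\otimes\ket{\delta}) \Vert_2,
\end{eqnarray}
where the first inequality on the right uses that $P_{min}^{(f)}(U_1)$ is a minimum over all product states hence is no larger than the value of $f$ at the particular input $U_1(\ket{\gamma}\otimes\ket{\delta})$, while $P_{min}^{(f)}(U_2) = f(U_2(\ket{\gamma}\otimes\ket{\delta}))$ by choice of minimizer. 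The last step is just the Lipschitz hypothesis on $f$ applied to the two output vectors.

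Next I would bound the output distance by the operator distance: since $\ket{\gamma}\otimes\ket{\delta}$ is a unit vector,
\begin{eqnarray}
\Vert U_1(\ket{\gamma}\otimes\ket{\delta}) - U_2(\ket{\gamma}\otimes\ket{\delta}) \Vert_2 = \Vert (U_1 - U_2)(\ket{\gamma}\otimes\ket{\delta}) \Vert_2 \leq \Vert U_1 - U_2 \Vert_{\infty} \leq \Vert U_1 - U_2 \Vert_2,
\end{eqnarray}
so that $|P_{min}^{(f)}(U_1) - P_{min}^{(f)}(U_2)| \leq |f|_L \Vert U_1 - U_2 \Vert_2$. This shows $P_{min}^{(f)}$ is $|f|_L$-Lipschitz with respect to the Frobenius norm on $\mathcal{SU}(d_Ad_B)$, i.e.\ it satisfies exactly the hypothesis required to invoke the concentration machinery developed above.

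Finally I would apply Theorem~\ref{thm:concentration_entropy} (equivalently, Theorem~\ref{thm:su} together with the smooth-approximation argument of Theorem~\ref{thm:approx}) with $|P_{min}^{(f)}|_L$ in place of $|P_{min}^{(S_v)}|_L$; the derivation there used only Lipschitz continuity of the function, never any property specific to $S_v$, so it transfers verbatim and yields
\begin{eqnarray}
\mu\Big(\big|P_{min}^{(f)}(U) - \int_{\mathcal{SU}(d_Ad_B)} P_{min}^{(f)}(U)\, d\mu(U)\big| \geq \delta\Big) \leq 2 e^{-\frac{d_Ad_B\delta^2}{8 |f|_L^2}}
\end{eqnarray}
for all $\delta > 0$, which is the claimed bound. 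I expect the only subtle point — and the one worth stating carefully rather than the routine norm estimates — is the existence of the minimizing product state, which is where compactness of the Segre variety of product states and continuity of $f$ enter; everything else is a direct translation of the $S_v$ case.
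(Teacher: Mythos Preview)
Your proposal is correct and follows essentially the same approach as the paper: establish that $P_{min}^{(f)}$ is $|f|_L$-Lipschitz on $\mathcal{SU}(d_Ad_B)$ via the minimizer-swap argument, then invoke the concentration machinery of Theorem~\ref{thm:su} together with the smooth-approximation step exactly as in Theorem~\ref{thm:concentration_entropy}. The only cosmetic difference is that the paper bounds $\Vert (U_1-U_2)\ket{\gamma\delta}\Vert_2$ directly by $\sqrt{\tr(U_1-U_2)(U_1-U_2)^\dagger}$ rather than passing through the operator norm, and it does not spell out the compactness argument for the existence of the minimizer that you (rightly) flag.
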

\begin{proof}
To prove our claim, we will show that $P_{min}^{(f)}(U)$ is Lipschitz-continuous over special unitary group $\mathcal{SU}(d_Ad_B)$.
For any gates $U_1, U_2 \in \mathcal{SU}(d_Ad_B)$, without loss of generality, let's say $P_{min}^{(f)}(U_1)$ and $P_{min}^{(f)}(U_2)$ achieve their minimum at points $\ket{\alpha}\otimes \ket{\beta}$ and $\ket{\gamma}\otimes \ket{\delta}$ respectively. We can even assume $P_{min}^{(f)}(U_1)\geq P_{min}^{(f)}(U_2)$.

\begin{equation}
\begin{split}
&|P_{min}^{(f)}(U_1)-P_{min}^{(f)}(U_2)|\\
={}& f(U_1(\ket{\alpha}\otimes\ket{\beta}))- f(U_2(\ket{\gamma}\otimes\ket{\delta}))\\
\leq {}&f(U_1(\ket{\gamma}\otimes\ket{\delta}))- f(U_2(\ket{\gamma}\otimes\ket{\delta}))\\
\leq {}& |f|_L \Vert   U_1(\ket{\gamma}\otimes\ket{\delta}) -U_2(\ket{\gamma}\otimes\ket{\delta}) \Vert_2\\
\leq {}& |f|_L \sqrt{\tr(U_1-U_2)(U_1^{\dagger}-U_2^{\dagger})} \\
={}& |f|_L \Vert U_1-U_2\Vert_2.
\end{split}
\end{equation}

Therefore, $P_{min}^{(f)}(U)$ is also a Lipschitz-continuous function. 
\end{proof}

The estimation of $\int\limits_{\mathcal{SU}(d_Ad_B)}P_{min}^{(f)}(U)d\mu(U)$ can be performed similarly as done in Equation~\ref{eq:int}. 
\begin{lemma}[Levy's Lemma, \cite{Hayden:2006gx}]
Let $f: \mathbb{S}^k\rightarrow \mathbb{R}$ be a function with Lipschitz constant $|f|_L$ with respect to the Euclidean norm and a point $X\in \mathbb{S}^k$ be chosen uniformly at random. Then
\begin{enumerate}
\item [1.] $Pr\{f(X)-\mathbb{E}(f)\gtrless \pm\alpha\}\leq 2 \exp\left(-\frac{C_1(k+1)\alpha^2}{|f|_L^2}\right)$ and
\item [2.] $Pr\{f(X)-m(f)\gtrless \pm\alpha\}\leq \exp\left(-\frac{C_2(k-1)\alpha^2}{|f|_L^2}\right)$.
\end{enumerate}
for absolute constants $C_i>0$ that may be chosen as $C_1=\frac{1}{9\pi^3\ln 2}$ and $C_2=\frac{1}{2\pi^2\ln 2}$. $\mathbb{E}(f)$ is the mean value of $f$ and  $m(f)$ is a median for $f$.
\end{lemma}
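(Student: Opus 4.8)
The plan is to recognise this as the standard concentration phenomenon for Lipschitz functions on a manifold of positive Ricci curvature, and to obtain it by two complementary mechanisms: the spherical isoperimetric (L\'evy--Gromov) inequality, which yields the sharp median estimate of part~2 with essentially no prefactor, and the logarithmic Sobolev / Herbst argument, which is precisely Theorem~\ref{thm:riemann} specialised to $M=\mathbb{S}^k$ and gives the mean estimate of part~1. I would establish part~2 first and then derive part~1 from it (or independently from Theorem~\ref{thm:riemann}).

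For part~2, fix the super-level set $A=\{x\in\mathbb{S}^k: f(x)\le m(f)\}$; by definition of the median $\mu(A)\ge 1/2$. Let $A_r$ denote the closed $r$-neighbourhood of $A$ in the geodesic metric. Since on the sphere the chordal (Euclidean) distance is dominated by the geodesic distance, the hypothesis $|f(x)-f(y)|\le |f|_L\,\Vert x-y\Vert_2$ forces $f\le m(f)+|f|_L\,r$ on $A_r$, so $\Pr\{f(X)-m(f)>|f|_L\,r\}\le 1-\mu(A_r)$. By the L\'evy--Gromov inequality, among all sets of measure at least $1/2$ the hemisphere has the $r$-neighbourhood of smallest measure, hence $1-\mu(A_r)$ is bounded by the normalised measure of the spherical cap of angular radius exceeding $\pi/2+r$, i.e.\ by $\int_{\pi/2+r}^{\pi}\sin^{k-1}\theta\,d\theta\big/\int_{0}^{\pi}\sin^{k-1}\theta\,d\theta$. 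Estimating this ratio in the usual way (using $\sin(\pi/2+s)=\cos s\le e^{-s^2/2}$ and a Laplace-type lower bound on the denominator) gives a bound of the shape $c\,e^{-(k-1)r^2/2}$ with $c\le 1$; setting $\alpha=|f|_L\,r$, running the same argument with $\{f\ge m(f)\}$, and relaxing the exponent constant $1/2$ down to the stated $C_2=1/(2\pi^2\ln 2)$ (which only weakens the bound) produces part~2.

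For part~1 I would either integrate the part~2 tail, $|\mathbb{E}(f)-m(f)|\le\int_0^\infty\Pr\{|f-m(f)|>t\}\,dt=O(|f|_L/\sqrt{k})$, and feed this back through the triangle inequality; or, more in the spirit of this paper, apply Theorem~\ref{thm:riemann} with $M=\mathbb{S}^k$ (unit radius) and $\Phi$ constant, so that $\mu_\Phi$ is the uniform measure and $\mathrm{Ric}=(k-1)g$ gives $c=(k-1)^{-1}$, while geodesic-Lipschitz continuity gives $\Vert|\nabla f|\Vert_2\le |f|_L$; Theorem~\ref{thm:riemann} then yields $\mu(|f-\mathbb{E}(f)|\ge\delta)\le 2e^{-(k-1)\delta^2/(2|f|_L^2)}$, and again relaxing $(k-1)/2$ to $C_1(k+1)$ with $C_1=1/(9\pi^3\ln 2)$ gives part~1.

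The conceptual content is classical; the only real work is the constant-chasing in part~2 --- the asymptotics of the $\sin^{k-1}$ cap integral, the passage from geodesic to Euclidean distance, and the base-$2$ conversion responsible for the $\ln 2$ factors --- together with keeping the dimension dependence as $(k\pm1)$ rather than merely $k$. I expect this bookkeeping, not any idea, to be the main obstacle; and since the lemma is used only qualitatively in what follows, one could equally just cite it and skip the optimisation of the constants.
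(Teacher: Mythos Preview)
The paper does not prove this lemma at all: it is stated as a quotation from \cite{Hayden:2006gx} and then applied directly in the subsequent paragraph. There is no proof in the paper to compare against.

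Your sketch is nonetheless the standard and correct route to the result. The median bound via the L\'evy--Gromov isoperimetric inequality (part~2) and the mean bound via either tail integration or the log-Sobolev machinery of Theorem~\ref{thm:riemann} (part~1) are exactly how one establishes L\'evy's lemma; indeed your own closing remark --- that one could ``equally just cite it and skip the optimisation of the constants'' --- is precisely what the paper does. So your proposal is more than adequate, and in fact supplies a proof where the paper supplies only a reference.
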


By applying Levy's Lemma directly, we will have $Pr\{f(X)< \mathbb{E}(f)-\alpha\}\leq 2 \exp\left(-\frac{2C_1d_Ad_B\alpha^2}{|f|_L^2}\right)$ for Lipschitz-continuous entanglement measure $f$. By combining similar ideas from both Theorem~\ref{thm:concentrate_1} and Equation~\ref{eq:int}, we will have $\int\limits_{\mathcal{SU}(d_Ad_B)}P_{min}^{(f)}(U)d\mu(U) \geq \mathbb{E}(f)-\epsilon$ where $\epsilon$ is some tiny positive number.

For pure states, the distillable entanglement, entanglement cost, entanglement of formation, relative entropy of entanglement and squashed entanglement are all equal to von Neumann entropy of reduced density operator $S(\tr_B(\ket{\psi}_{AB}))$. Let's consider some other Lipschitz continuous entanglement measures~\cite{Plenio:2007ve}.

\begin{remark}
It is known that the positivity of the partial transpose with respect to party $B$ of a bipartite state $\rho_{AB}$ is a necessary condition for separability and  is suffice to prove the non-distillability. The negativity $N$ defined as $N(\ket{\psi}_{AB})=\frac{\left\Vert (\ket{\psi}\bra{\psi}_{AB})^{T_B}\right\Vert -1}{2}$ is an entanglement monotone that captures the negativity in the spectrum of the partial transpose. Here $\Vert \cdot \Vert$ is the trace norm. Note the trace norm is invariant under partial transpose.
\begin{equation}
\begin{split}
&N(\ket{\alpha}_{AB})-N(\ket{\beta}_{AB})\\
={}&\frac{1}{2}\left(\left\Vert (\ket{\alpha}\bra{\alpha}_{AB})^{T_B}\right\Vert-\left\Vert (\ket{\beta}\bra{\beta}_{AB})^{T_B}\right\Vert\right)\\
\leq {}&\frac{1}{2}\left\Vert (\ket{\alpha}\bra{\alpha}_{AB}-\ket{\beta}\bra{\beta}_{AB})^{T_B}\right\Vert\\
= {}&\frac{1}{2}\Vert (\ket{\alpha}\bra{\alpha}_{AB}-\ket{\beta}\bra{\beta}_{AB})\Vert\\
\leq {}& \Vert \ket{\alpha}-\ket{\beta}\Vert_2.
\end{split}
\end{equation}
Hence the negativity $N$ is $1$-Lipschitz continuous, which follows the minimum entangling power with respect to negativity is concentrated over unitary group.
\end{remark}

\begin{remark}
Let's introduce ancilla  to our concept of entangling power. For any quantum gate $U$ acting on a bipartite system $\mathbb{C}^{d_A}\otimes \mathbb{C}^{d_B}$, let's define its complete minimum entangling power as the minimal entanglement generation of $U$, acting on $\mathbb{C}^{d_{A'}}\otimes \mathbb{C}^{d_A}\otimes \mathbb{C}^{d_B}\otimes \mathbb{C}^{d_{B'}}$ in $(AA':BB')$-cut for arbitrary ancillas  $\mathbb{C}^{d_{A'}}$ and $\mathbb{C}^{d_{B'}}$, i.e.,
\begin{equation}
\begin{split}
&\tilde{P}_{min}^{(S_v)}U_{[A:B]}\\
={}&\min\limits_{A',B'}P_{min}^{(S_v)}(I_{A'}\otimes U\otimes I_{B'})_{[AA':BB']}\\
={}&\min\limits_{A',B'}\left(\min\limits_{\ket{\alpha}\in \mathbb{C}^{d_{A'}}\otimes \mathbb{C}^{d_A}, \ket{\beta}\in \mathbb{C}^{d_B}\otimes \mathbb{C}^{d_{B'}}} S\left(\tr_{BB'}\left((I_{A'}\otimes U_{AB}\otimes I_{B'})\ket{\alpha\beta}\bra{\alpha\beta}(I_{A'}\otimes U_{AB}^{\dagger}\otimes I_{B'})\right)\right)\right).
\end{split}
\end{equation}

We will show $\tilde{P}_{min}^{(S_v)}(U)$ is also Lipschitz-continuous over unitary group $\mathcal{U}(d_Ad_B)$. Without loss of generality, we may assume $d_{A'}=d_A$ and $d_{B'}=d_B$.

For any gates $U_1, U_2 \in \mathcal{U}(d_Ad_B)$,  let's say $\tilde{P}_{min}^{(S_v)}(U_1)$ and $\tilde{P}_{min}^{(S_v)}(U_2)$ achieve their minimum at points $\ket{\alpha}_{A'A}\otimes \ket{\beta}_{BB'}$ and $\ket{\gamma}_{A'A}\otimes \ket{\delta}_{BB'}$ respectively. We can further assume $\tilde{P}_{min}^{(S_v)}(U_1)\geq \tilde{P}_{min}^{(S_v)}(U_2)$.

\begin{equation}
\begin{split}
&\left|\tilde{P}_{min}^{(S_v)}(U_1)-\tilde{P}_{min}^{(S_v)}(U_2)\right|\\
={}& S(\tr_{BB'}((I_{A'}\otimes U_1\otimes I_{B'})(\ket{\alpha}\otimes\ket{\beta})))- S(\tr_{BB'}((I_{A'}\otimes U_2\otimes I_{B'})(\ket{\gamma}\otimes\ket{\delta})))\\
\leq {}& S(\tr_{BB'}((I_{A'}\otimes U_1\otimes I_{B'})(\ket{\gamma}\otimes\ket{\delta})))- S(\tr_{BB'}((I_{A'}\otimes U_2\otimes I_{B'})(\ket{\gamma}\otimes\ket{\delta})))\\
\leq {}& \sqrt{8}\log (d_Ad_{A'}) \Vert   ((I_{A'}\otimes U_1\otimes I_{B'})(\ket{\gamma}\otimes\ket{\delta})) -((I_{A'}\otimes U_2\otimes I_{B'})(\ket{\gamma}\otimes\ket{\delta})) \Vert_2\\
\leq {}& 4\sqrt{2} \log d_A \sqrt{\tr(I_{A'}\otimes U_1\otimes I_{B'}-I_{A'}\otimes U_2\otimes I_{B'})(I_{A'}\otimes U_1^{\dagger}\otimes I_{B'}-I_{A'}\otimes U_2^{\dagger}\otimes I_{B'})} \\
={}& 4\sqrt{2d_Ad_B} \log d_A \Vert U_1-U_2\Vert_2.
\end{split}
\end{equation}
\end{remark}

%
%

\subsection{Schmidt Rank As Entanglement Measure}\label{bipartite:Schmidt}
Another important entanglement measure is the Schmidt rank which is defined as the number of Schmidt coefficients of a bipartite state. However, our approach proposed in the previous subsection can not deal with discrete entanglement measures like Schmidt rank. In this subsection, we will provide an even stronger concentration result for minimum entangling power with respect to Schmidt rank, by using some techniques from algebraic geometry.

By taking as entanglement measure the Schmidt rank here, we define the corresponding minimum entangling power as the following:
\begin{equation}
P_{min}^{(SR)}(U)=\min\limits_{\ket{\alpha}\in \mathbb{C}^{d_A}, \ket{\beta}\in \mathbb{C}^{d_B}} \mathop{SR}(U\ket{\alpha\beta}).
\end{equation}

\begin{theorem}\label{thm:Schmidt}
A random unitary gate $U\in \mathcal{U}(d_Ad_B)$ will almost surely have minimum entangling power $P_{min}^{(SR)}(U)$ exactly equal to $\left\lceil \frac{d_A+d_B-\sqrt{(d_A-d_B)^2+4(d_A+d_B)-8}}{2}\right\rceil$. Moreover, no quantum gate has $P_{min}^{(SR)}(U)$ large than this number.
\end{theorem}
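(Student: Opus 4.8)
The plan is to translate the statement about Schmidt rank into a purely algebro-geometric statement about the intersection of a linear subspace (the image of a generic unitary restricted to the product-state locus) with the determinantal variety $\Sigma_{d_A,d_B}^r$, and then apply the Projective Dimension Theorem (Theorem~\ref{lemma:dim}). First I would observe that $P_{min}^{(SR)}(U) \leq r$ is equivalent to the existence of a product state $\ket{\alpha}\otimes\ket{\beta}$ with $U(\ket{\alpha}\otimes\ket{\beta}) \in \Sigma_{d_A,d_B}^{r}$, i.e. the image $U\cdot \Sigma_{d_A-1,d_B-1}$ of the Segre variety meets $\Sigma_{d_A,d_B}^{r}$ inside $\mathbb{P}^{d_Ad_B-1}$. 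Since $U$ acts as a projective linear automorphism, $U\cdot\Sigma_{d_A-1,d_B-1}$ is again an irreducible variety of the same dimension, namely $d_A+d_B-2$, and for generic $U$ it is in "general position" relative to the fixed variety $\Sigma_{d_A,d_B}^{r}$. The key numerology: $\dim \Sigma_{d_A,d_B}^{r} = d_Ad_B - (d_A-r)(d_B-r) - 1$ and $\dim (U\cdot\Sigma_{d_A-1,d_B-1}) = d_A+d_B-2$, living in $\mathbb{P}^{d_Ad_B-1}$.

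Next I would compute the threshold. By the Projective Dimension Theorem, if the dimensions of the two varieties sum to at least $d_Ad_B-1$, the intersection is guaranteed nonempty; the condition is
\begin{eqnarray}
(d_A+d_B-2) + \bigl(d_Ad_B - (d_A-r)(d_B-r) - 1\bigr) \;\geq\; d_Ad_B - 1,
\end{eqnarray}
which simplifies to $(d_A-r)(d_B-r) \leq d_A+d_B-2$. Solving $(d_A-r)(d_B-r) = d_A+d_B-2$ for $r$ and taking the relevant root gives $r = \frac{d_A+d_B-\sqrt{(d_A-d_B)^2+4(d_A+d_B)-8}}{2}$, and the smallest integer $r$ for which the inequality holds is the ceiling of this quantity; call it $r_0$. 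This shows that for \emph{every} gate $U$ we have $P_{min}^{(SR)}(U) \leq r_0$, since the intersection is always nonempty — this already gives the "no gate does better" half of the theorem, and crucially it uses only that $U\cdot\Sigma_{d_A-1,d_B-1}$ is a variety of the correct dimension, which holds for all $U$.

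For the generic lower bound $P_{min}^{(SR)}(U) \geq r_0$ (hence equality almost surely), I would argue that for generic $U$ the image of the Segre variety \emph{misses} $\Sigma_{d_A,d_B}^{r_0-1}$, whose dimension is strictly smaller, so that the dimension count $(d_A+d_B-2) + \dim\Sigma^{r_0-1} < d_Ad_B-1$ now says a generic intersection is empty. Making "generic" precise is the main obstacle: I would set up the incidence variety $\mathcal{I} = \{(U,[\psi]) : [\psi]\in U\cdot\Sigma_{d_A-1,d_B-1} \cap \Sigma_{d_A,d_B}^{r_0-1}\}$ inside $\mathcal{U}(d_Ad_B) \times \mathbb{P}^{d_Ad_B-1}$ (or its complexification $GL(d_Ad_B)$), project to the first factor, and show via a dimension count on $\mathcal{I}$ — using that the second projection has fibers of the expected dimension — that the image of $\mathcal{I}$ in $\mathcal{U}(d_Ad_B)$ is a proper subvariety, hence measure zero. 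One must be a little careful since the honest domain is the compact real group $\mathcal{U}(d_Ad_B)$, not an algebraic variety over $\mathbb{C}$; the cleanest route is to run the argument over $GL(d_Ad_B,\mathbb{C})$, conclude the bad locus is contained in a proper Zariski-closed subset, and then invoke that such a set intersects $\mathcal{U}(d_Ad_B)$ in a set of Haar measure zero (and of measure zero in the manifold sense as in the definition given earlier). A subtle point worth checking is that the "expected" transversality genuinely occurs for at least one $U$ — e.g. by exhibiting an explicit $U$ (or using that the group $GL$ acts transitively enough on the Segre variety to move it off $\Sigma^{r_0-1}$), which then guarantees the generic fiber is empty rather than the whole family being degenerate.
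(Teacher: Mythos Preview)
Your proposal is correct and mirrors the paper's proof almost exactly: the upper bound via the Projective Dimension Theorem, the incidence-variety dimension count over $GL(d_Ad_B)$ (this is precisely Lemma~\ref{lemma:closure} in Appendix~\ref{appendix:closure}), and the passage from a proper Zariski-closed subset of $GL$ to a Haar-measure-zero subset of $\mathcal{U}(d_Ad_B)$. Your ``subtle point'' about exhibiting an explicit $U$ is unnecessary---the paper dispatches it by noting that $\mathcal{U}(d_Ad_B)$ is Zariski dense in $GL(d_Ad_B)$, so once $\dim \overline{X_{r_0}} < \dim GL(d_Ad_B)$ the bad locus cannot contain all unitaries.
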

\begin{proof}
The proof will be divided into three steps. 

\begin{enumerate}
\item [1.] $P_{min}^{(SR)}(U)\leq  r_0=\left\lceil \frac{d_A+d_B-\sqrt{(d_A-d_B)^2+4(d_A+d_B)-8}}{2}\right\rceil$ for any $U\in \mathcal{U}(d_Ad_B)$.
\item [2.] There does exist some $U\in \mathcal{U}(d_Ad_B)$ such that $P_{min}^{(SR)}(U) \geq r_0$.
\item [3.] The set of unitary gates with minimum entangling power less than its maximum value $r_0$ has measure zero in the unitary group $\mathcal{U}(d_Ad_B)$.
\end{enumerate}

Firstly,  observe that a unitary gate $U$ has minimum entangling power $P_{min}^{(SR)}\geq r$ if and only if $U$ will map the set of product states (or equivalently, the Segre variety $\Sigma_{d_A,d_B}$) to a set of entangled states with Schmidt rank at least $r$. In other words, $U(\Sigma_{d_A,d_B})\bigcap \Sigma_{d_A,d_B}^{r-1}=\emptyset$.

Assume we can find some gate $U\in \mathcal{U}(d_Ad_B)$ with minimum entangling power $P_{min}^{(SR)}(U) >  r_0$,  
\begin{equation}
\begin{split}
&\dim U(\Sigma_{d_A,d_B})+\dim \Sigma_{d_A,d_B}^{r_0}\\
={}&d_A+d_B-2+d_Ad_B-(d_A-r_0)(d_B-r_0)-1\\
\geq{}& d_Ad_B-1 \\
={}&\dim \mathbb{P}^{d_Ad_B-1}.
\end{split}
\end{equation}
 
By applying Theorem~\ref{lemma:dim},  $U(\Sigma_{d_A,d_B})\bigcap \Sigma_{d_A,d_B}^{r_0-1}$ is not empty. 

Therefore, $P_{min}^{(SR)}(U)\leq r_0$ for any $U\in \mathcal{U}(d_Ad_B)$.

Secondly,  there is at least  some $U\in \mathcal{U}(d_Ad_B)$ such that $P_{min}^{(SR)}(U) \geq r_0$. In order to prove this claim, let's consider the set of unitary gates with minimum entangling power no more than $r_0-1$.  Let $\mathcal{U}_{r_0}=\left\{\Phi|\Phi\in \mathcal{U}(d_Ad_B),\Phi(\Sigma_{d_A,d_B})\cap \Sigma_{d_A,d_B}^{r_0-1}\neq \emptyset\right\}$. Our aim is to show $\mathcal{U}_{r_0}$ is a proper subset in $\mathcal{U}(d_Ad_B)$. If so, then the existence of quantum gates with maximum minimum entangling power will be automatically guaranteed.

Let's consider the Zariski topology on the projective space. In this setting, the unitary group $\mathcal{U}(d_Ad_B)$ is Zariski dense in the general linear group $GL(d_A,d_B)$~\cite{Schmitt:2008ww}. We further define $X_{r_0}=\left\{\Phi|\Phi\in GL(d_Ad_B),\Phi(\Sigma_{d_A,d_B})\cap \Sigma_{d_A,d_B}^{r_0-1}\neq \emptyset\right\}$.  $X_{r_0}$ contains all quantum gates with minimum entangling power less than $r_0$. Dimension of its  Zariski closure $\dim \overline{X_{r_0}}$ is bounded by $d_A^2d_B^2-(d_A-r_0)(d_B-r_0)+2r_0-3$.

The proof of the above bound is quite technical; we defer it to Appendix~\ref{appendix:closure}. 

Now we prove the existence of quantum gate $U$ with $P_{min}^{(SR)}(U)$ at least $r_0$ as follows. If it does not exist, $\mathcal{U}(d_Ad_B)\subset X_{r_0}$, then $GL(d_Ad_B)=\overline{\mathcal{U}(d_Ad_B)}\subset{\overline{X_{r_0}}}$. However,  $\dim(\overline{X_{r_0}})\leq d_A^2d_B^2-(d_A-r_0)(d_B-r_0)+2r_0-3 < d_A^2d_B^2 = \dim(GL(d_Ad_B))$. It's a contradiction. So $\mathcal{U}(d_Ad_B)\not\subset X_{r_0}$, i.e. a unitary operator $\Phi\in\mathcal{U}(d_Ad_B)$ with  $P_{min}^{(SR)}(U) \geq r_0$ exists.
According to the previous result, we have $P_{min}^{(SR)}(U) = r_0$.

Thirdly, we will now show $\mathcal{U}_{r_0}$ is not only a proper subset, but also a neglectable subset in $\mathcal{U}(d_Ad_B)$.

$\mathcal{U}(d_Ad_B)$ is a locally compact Lie group of dimension $d_A^2d_B^2$. Recall that $\dim(\overline{X_{r_0}})$ is at most $d_A^2d_B^2-(d_A-r_0)(d_B-r_0)+2r_0-3<d_A^2d_B^2=\dim(\mathcal{U}(d_Ad_B))$. 

We have shown $\dim(\overline{X_{r_0}}) < d_A^2d_B^2 = \dim(\mathcal{U}(d_Ad_B))$.  $\overline{X_{r_0}}$ is Neotherian, then $\overline{X_{r_0}}$ is union of finite many smooth subvariesties of $GL(d_Ad_B)$ with lower dimensions. Hence $\overline{X_{r_0}}\cap \mathcal{U}(d_Ad_B)$ (which contains $X_{r_0}\cap \mathcal{U}(d_Ad_B)$, the set
of our main interests) is union of finite many submanifolds of $\mathcal{U}(d_Ad_B)$ with lower dimensions. Apply Morse-Sard theorem, $X_{r_0} \cap \mathcal{U}(d_Ad_B)$ is measure zero in $\mathcal{U}(d_Ad_B)$ which implies that a random unitary operator $U$ almost surely has $P_{min}^{(SR)}(U)=r_0$.
\end{proof}

\begin{cor}
$P_{min}^{(SR)}(U)=0$ for any $U\in \mathcal{U}(d_Ad_B)$ if and only if $\min{(d_A,d_B)}\leq 2$ or $(d_A,d_B)=(3,3)$.
\end{cor}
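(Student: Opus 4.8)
The plan is to read off the corollary directly from Theorem~\ref{thm:Schmidt}, since it already pins down $P_{min}^{(SR)}(U)=r_0=\lceil (d_A+d_B-\sqrt{(d_A-d_B)^2+4(d_A+d_B)-8})/2\rceil$ for a generic gate, and moreover shows no gate can exceed this value. The first observation is that $P_{min}^{(SR)}(U)=0$ is \emph{impossible}: a unitary never maps any nonzero vector to $0$, so every $U\ket{\alpha\beta}$ has Schmidt rank at least $1$. Hence the corollary should really be read with the convention that "$P_{min}^{(SR)}(U)=0$ for \emph{some} $U$'' is being compared against "$P_{min}^{(SR)}(U)\ge 1$ generically''; more precisely, what we must show is that the maximal attainable value $r_0$ equals $1$ exactly in the listed degenerate cases, and is $\ge 2$ otherwise. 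So the real content is an arithmetic claim: $r_0=1 \iff \min(d_A,d_B)\le 2$ or $(d_A,d_B)=(3,3)$. (One should also note that $r_0=1$ forces every gate to have $P_{min}^{(SR)}(U)=1$, i.e. every gate sends some product state to a product state, which is exactly the vanishing-minimum-entangling-power statement; this links the arithmetic to the wording of the corollary, and recovers the known results of \cite{Chen:2008eq} for $2\otimes N$ and $3\otimes 3$.)

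The core step is therefore to analyze the inequality $r_0\le 1$, equivalently (since $r_0$ is a ceiling of a positive real) $\frac{d_A+d_B-\sqrt{(d_A-d_B)^2+4(d_A+d_B)-8}}{2}\le 1$. Rearranging, this is $d_A+d_B-2\le \sqrt{(d_A-d_B)^2+4(d_A+d_B)-8}$; since both sides are nonnegative for $d_A,d_B\ge 2$ (and the small cases $\min=1$ are trivial), I square to get $(d_A+d_B-2)^2\le (d_A-d_B)^2+4(d_A+d_B)-8$. Expanding the left side as $(d_A+d_B)^2-4(d_A+d_B)+4$ and the right side as $(d_A+d_B)^2-4d_Ad_B+4(d_A+d_B)-8$, the $(d_A+d_B)^2$ cancels and the condition collapses to $-4(d_A+d_B)+4\le -4d_Ad_B+4(d_A+d_B)-8$, i.e. $4d_Ad_B\le 8(d_A+d_B)-12$, i.e. $d_Ad_B-2d_A-2d_B+3\le 0$, i.e. $(d_A-2)(d_B-2)\le 1$. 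Over positive integers $(d_A-2)(d_B-2)\le 1$ holds precisely when one of the factors is $\le 0$ (i.e. $\min(d_A,d_B)\le 2$) or both factors equal $1$ (i.e. $d_A=d_B=3$), which is exactly the stated list. In the opposite direction, when $(d_A-2)(d_B-2)\ge 2$ the same computation reversed shows $d_A+d_B-2>\sqrt{(d_A-d_B)^2+4(d_A+d_B)-8}$ strictly, hence the real number inside the ceiling exceeds $1$, so $r_0\ge 2$ and a generic gate (indeed, by Theorem~\ref{thm:Schmidt}, all gates outside a measure-zero set) has $P_{min}^{(SR)}(U)=r_0\ge 2>0$.

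To assemble the proof I would: (i) invoke Theorem~\ref{thm:Schmidt} to replace the generic value of $P_{min}^{(SR)}$ by $r_0$ and to note $r_0$ is the universal upper bound; (ii) observe $P_{min}^{(SR)}(U)\ge 1$ always, so "$P_{min}^{(SR)}(U)=0$ for every $U$'' must be interpreted as "$r_0=1$'', i.e. that no gate can push past Schmidt rank $1$ — which is the genuine degeneracy; (iii) run the squaring computation above to get the clean equivalence $r_0\le 1 \iff (d_A-2)(d_B-2)\le 1$; (iv) enumerate the integer solutions of $(d_A-2)(d_B-2)\le 1$ to recover $\min(d_A,d_B)\le 2$ or $(d_A,d_B)=(3,3)$; and (v) check the boundary cases where $d_A+d_B-2$ might be negative or the squaring direction needs care ($d_A$ or $d_B$ equal to $1$), which are immediate since then trivially every state in $\mathbb{C}^{d_A}\otimes\mathbb{C}^{d_B}$ is a product state and $r_0=1$. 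I do not expect a serious obstacle here — the whole corollary is elementary number theory downstream of Theorem~\ref{thm:Schmidt} — but the one subtlety worth flagging explicitly in the write-up is the sign/convention issue around "$P_{min}^{(SR)}=0$'': a literal reading is false for any unitary, so the statement should be phrased (or silently understood) as asserting that the \emph{maximum possible} minimum entangling power is the trivial value $1$ exactly in those cases, which is the content that matters for the fault-tolerance application and for consistency with \cite{Chen:2008eq}.
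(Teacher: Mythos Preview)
Your proposal is correct and matches the paper's intent: the corollary is stated without proof immediately after Theorem~\ref{thm:Schmidt}, so the paper treats it as a direct read-off from the formula for $r_0$, which is precisely what you do. Your reduction of $r_0\le 1$ to $(d_A-2)(d_B-2)\le 1$ and the ensuing enumeration is clean and exactly the computation the authors have in mind.

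Your flag about the literal statement is well taken and worth keeping in the write-up: since $\mathop{SR}\ge 1$ on nonzero vectors, $P_{min}^{(SR)}(U)\ge 1$ always, so ``$P_{min}^{(SR)}(U)=0$'' must be read as ``$P_{min}^{(SR)}(U)$ is the trivial value $1$'' (equivalently, $U$ sends some product state to a product state), consistent with the paper's earlier phrasing that $P_{min}^{(f)}$ ``vanishes'' in the $2\otimes N$ and $3\otimes 3$ cases for entanglement measures $f$ normalized to zero on product states. With that reading, your argument via Theorem~\ref{thm:Schmidt} gives both directions: $r_0=1$ forces $P_{min}^{(SR)}(U)=1$ for \emph{every} $U$ (the upper-bound half of the theorem), while $r_0\ge 2$ gives gates with $P_{min}^{(SR)}(U)\ge 2$ (the existence half).
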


\begin{lemma}
For any quantum gate $U$, we have $P_{min}^{(S_v)}(U)\leq \log P_{min}^{(SR)}(U)  \leq \log \left\lceil \frac{d_A+d_B-\sqrt{(d_A-d_B)^2+4(d_A+d_B)-8}}{2}\right\rceil$.
\end{lemma}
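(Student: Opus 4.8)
The plan is to combine two ingredients: an elementary pointwise comparison between the von Neumann entropy of entanglement and the Schmidt rank of a single bipartite pure state, together with the upper bound on $P_{min}^{(SR)}$ already proved in Theorem~\ref{thm:Schmidt}.

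First I would recall the standard fact that for any bipartite pure state $\ket{\psi}^{AB}$ with Schmidt decomposition $\ket{\psi}=\sum_{i=1}^{r}\lambda_i\ket{a_i}\otimes\ket{b_i}$ (with $r=\mathop{SR}(\ket{\psi})$), the reduced operator $\tr_B(\ket{\psi}\bra{\psi})=\sum_i|\lambda_i|^2\ket{a_i}\bra{a_i}$ is a density operator of rank exactly $r$, so that $S(\tr_B(\ket{\psi}\bra{\psi}))\leq \log r$, with equality precisely when all Schmidt coefficients coincide. In other words, $S_v(\ket{\psi})\leq \log\mathop{SR}(\ket{\psi})$ for every pure state.

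Next I would apply this pointwise bound at a product state that witnesses $P_{min}^{(SR)}(U)$. Since the set of product states is the Segre variety $\Sigma_{d_A,d_B}$, which is compact, and the Schmidt rank is integer valued and lower semicontinuous (the sets $\Sigma_{d_A,d_B}^{r}$ being closed), the minimum defining $P_{min}^{(SR)}(U)$ is attained at some $\ket{\alpha_0}\otimes\ket{\beta_0}$. Then
\[
P_{min}^{(S_v)}(U)\;\leq\; S_v\bigl(U(\ket{\alpha_0}\otimes\ket{\beta_0})\bigr)\;\leq\;\log\mathop{SR}\bigl(U(\ket{\alpha_0}\otimes\ket{\beta_0})\bigr)\;=\;\log P_{min}^{(SR)}(U),
\]
where the first inequality holds because $P_{min}^{(S_v)}(U)$ is a minimum over all product states, and the second is the pointwise inequality above. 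This yields the left inequality of the statement.

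For the right inequality I would simply invoke Theorem~\ref{thm:Schmidt}, which asserts that no quantum gate has $P_{min}^{(SR)}(U)$ larger than $r_0=\lceil\frac{d_A+d_B-\sqrt{(d_A-d_B)^2+4(d_A+d_B)-8}}{2}\rceil$; monotonicity of $\log$ then gives $\log P_{min}^{(SR)}(U)\leq \log r_0$. There is essentially no obstacle here; the only point that requires a little care is the orientation of the first chain of inequalities, namely that one must evaluate $S_v$ at the product state minimizing the Schmidt rank (rather than the reverse), and then note that $P_{min}^{(S_v)}(U)$, being itself a minimum over product states, can only be smaller.
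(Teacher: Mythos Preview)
Your proposal is correct and follows essentially the same route as the paper: a pointwise inequality $S_v(\ket{\psi})\leq\log\mathop{SR}(\ket{\psi})$ applied to the image of product states, followed by an appeal to Theorem~\ref{thm:Schmidt} for the second inequality. Your compactness/lower-semicontinuity argument for attainment of the Schmidt-rank minimum is a harmless extra (the pointwise bound already yields $\inf S_v\leq\inf\log\mathop{SR}$ without needing attainment), but otherwise the arguments match.
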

\begin{proof}
For any $\ket{\alpha}$ and $\ket{\beta}$, we have
$S(\tr_B(U\ket{\alpha\beta}\bra{\alpha\beta}U^{\dagger}))\leq \log (\mathop{rank} \tr_B(U\ket{\alpha\beta}\bra{\alpha\beta}U^{\dagger}))= \log \mathop{SR} (U\ket{\alpha\beta})$.

Therefore, $P_{min}^{(S_v)}(U)\leq \log P_{min}^{(SR)}(U)$ for any $U\in \mathcal{U}(d_Ad_B)$.

By applying Theorem~\ref{thm:Schmidt}, we have $P_{min}^{(S_v)}(U)\leq \log P_{min}^{(SR)}(U)  \leq \log \left\lceil \frac{d_A+d_B-\sqrt{(d_A-d_B)^2+4(d_A+d_B)-8}}{2}\right\rceil$.

\end{proof}

\begin{cor}
In the case $d_A=d_B=d$, we have
\begin{enumerate}
\item [1.] a random unitary gate $U\in \mathcal{U}(d_Ad_B)$ will almost surely have $P_{min}^{(SR)}(U)=\left\lceil d-\sqrt{2(d-1)}\right\rceil$.
\item [2.] for any $U\in \mathcal{U}(d_Ad_B)$, $P_{min}^{(SR)}(U)\leq \left\lceil d-\sqrt{2(d-1)}\right\rceil$ and $P_{min}^{(S_v)}(U)\leq \log d-\frac{\sqrt{2(d-1)}-1}{d}$.
\end{enumerate}
\end{cor}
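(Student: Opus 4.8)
The plan is to deduce both statements directly from Theorem~\ref{thm:Schmidt} and from the Lemma immediately preceding the corollary, by specializing everything to $d_A=d_B=d$. No new idea is needed: the content is an elementary simplification of the radical appearing in Theorem~\ref{thm:Schmidt}, followed by one routine logarithm estimate, so I do not expect a genuine obstacle.

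First I would set $d_A=d_B=d$ in the extremal rank
\begin{displaymath}
r_0=\lceil\tfrac{d_A+d_B-\sqrt{(d_A-d_B)^2+4(d_A+d_B)-8}}{2}\rceil
\end{displaymath}
of Theorem~\ref{thm:Schmidt}. Then $(d_A-d_B)^2=0$ and $4(d_A+d_B)-8=8(d-1)$, so the quantity under the root is $8(d-1)$, one has $\tfrac12\sqrt{8(d-1)}=\sqrt{2(d-1)}$, and hence $r_0=\lceil d-\sqrt{2(d-1)}\rceil$. Theorem~\ref{thm:Schmidt} then gives immediately that a Haar-random $U\in\mathcal{U}(d^2)$ almost surely satisfies $P_{min}^{(SR)}(U)=\lceil d-\sqrt{2(d-1)}\rceil$ (this is part 1) and that $P_{min}^{(SR)}(U)\le\lceil d-\sqrt{2(d-1)}\rceil$ for every $U\in\mathcal{U}(d^2)$ (the first inequality of part 2).

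It remains to upgrade the bound $P_{min}^{(S_v)}(U)\le\log P_{min}^{(SR)}(U)\le\log\lceil d-\sqrt{2(d-1)}\rceil$ — which the preceding Lemma already supplies for every $U$ — into the stated estimate $P_{min}^{(S_v)}(U)\le\log d-\tfrac{\sqrt{2(d-1)}-1}{d}$. Here I would set $t:=\tfrac{\sqrt{2(d-1)}-1}{d}$ and use $\lceil x\rceil<x+1$ to obtain $\log\lceil d-\sqrt{2(d-1)}\rceil<\log\bigl(d-\sqrt{2(d-1)}+1\bigr)=\log d+\log(1-t)$, and then apply the elementary inequality $\log(1-t)\le-t$, valid because $t\in[0,1)$ for every integer $d\ge2$. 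This yields $\log\lceil d-\sqrt{2(d-1)}\rceil<\log d-\tfrac{\sqrt{2(d-1)}-1}{d}$, which completes part 2. The only points needing a moment of care — the closest thing to an obstacle — are the bookkeeping facts $0\le\sqrt{2(d-1)}-1<d$ for $d\ge2$ (guaranteeing $t\in[0,1)$, so that $1-t>0$ and the logarithm estimate is legitimate) and the observation that in the degenerate cases $d\in\{2,3\}$ one has $r_0=1$, so the entropy bound degenerates to $0\le\log d-\tfrac{\sqrt{2(d-1)}-1}{d}$ and is still valid though far from tight; all of these follow by squaring, and everything else is substitution into results already established.
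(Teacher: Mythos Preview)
Your proposal is correct and is exactly the intended argument: the paper states this corollary without proof, as it is an immediate specialization of Theorem~\ref{thm:Schmidt} and of the preceding Lemma to $d_A=d_B=d$, together with the routine estimate $\log\lceil d-\sqrt{2(d-1)}\rceil\le\log d-\frac{\sqrt{2(d-1)}-1}{d}$ that you supply.
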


We also show how similar ideas can be applied to prove an extension of Theorem~\ref{thm:Schmidt}.
\begin{theorem}\label{thm:Schmidt2}
A random quantum gate acting on $\mathbb{C}^{d_A}\otimes \mathbb{C}^{d_B}$ will almost surely map every (product or entangled) state with Schmidt rank no more than $r$ to entangled state with Schmidt rank at least $\left\lceil \frac{d_A+d_B-\sqrt{(d_A-d_B)^2+4r(d_A+d_B)-4(r^2+1)}}{2}\right\rceil$ for any integer $r\leq \min\{d_A,d_B\}$.
\end{theorem}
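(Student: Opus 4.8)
The plan is to mimic the three-step structure of Theorem~\ref{thm:Schmidt}, replacing the Segre variety $\Sigma_{d_A,d_B}$ by the determinantal variety $\Sigma_{d_A,d_B}^r$ throughout. First I would observe that a quantum gate $U$ maps every state of Schmidt rank $\leq r$ to a state of Schmidt rank $\geq s$ if and only if $U(\Sigma_{d_A,d_B}^r)\cap \Sigma_{d_A,d_B}^{s-1}=\emptyset$, exactly as in the $r=1$ case. So the target number is the largest $s=s_0$ for which such a gate can exist, and the claim is that $s_0=\lceil \frac{d_A+d_B-\sqrt{(d_A-d_B)^2+4r(d_A+d_B)-4(r^2+1)}}{2}\rceil$.

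\textbf{Step 1 (universal upper bound).} Using the projective dimension of $\Sigma_{d_A,d_B}^r$ stated in the excerpt, namely $d_Ad_B-(d_A-r)(d_B-r)-1$, and the fact that $\dim U(\Sigma_{d_A,d_B}^r)=\dim\Sigma_{d_A,d_B}^r$ since $U$ is a projective linear isomorphism, I would add the dimensions of $U(\Sigma_{d_A,d_B}^r)$ and $\Sigma_{d_A,d_B}^{s_0}$ inside $\mathbb{P}^{d_Ad_B-1}$ and check that their sum is $\geq d_Ad_B-1$ precisely when $s_0$ is chosen as above; here one solves the quadratic $(d_A-r)(d_B-r)+(d_A-s)(d_B-s)\leq d_Ad_B$, i.e. a relation symmetric in a way that produces the stated radical. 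Theorem~\ref{lemma:dim} (Projective Dimension Theorem) then forces $U(\Sigma_{d_A,d_B}^r)\cap\Sigma_{d_A,d_B}^{s_0-1}\neq\emptyset$, so $P_{\min}$ with respect to Schmidt rank restricted to rank-$\leq r$ inputs is $\leq s_0$ for \emph{every} $U$.

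\textbf{Step 2 (existence of a gate achieving $s_0$) and Step 3 (genericity).} As in the proof of Theorem~\ref{thm:Schmidt}, I would pass to $GL(d_Ad_B)$, set $X_{s_0}=\{\Phi\in GL(d_Ad_B):\Phi(\Sigma_{d_A,d_B}^r)\cap\Sigma_{d_A,d_B}^{s_0-1}\neq\emptyset\}$, and bound $\dim\overline{X_{s_0}}$ by an incidence-variety argument: project the incidence set $\{(\Phi,p):\Phi p\in\Sigma_{d_A,d_B}^{s_0-1}\}$ over $\Sigma_{d_A,d_B}^{s_0-1}$, whose fibres over a point are the $\Phi$ sending some fixed rank-$\leq r$ point into that point, a coset-type condition of codimension matching $(d_A-r)(d_B-r)$ in $GL$, so that $\dim\overline{X_{s_0}}<d_A^2d_B^2$ strictly (the bookkeeping parallels the $d_A^2d_B^2-(d_A-r_0)(d_B-r_0)+2r_0-3$ estimate in the appendix, now with $r_0$ replaced by $s_0$ and the source variety being $\Sigma^r$ rather than $\Sigma^1$). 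Since $\mathcal{U}(d_Ad_B)$ is Zariski-dense in $GL(d_Ad_B)$, a strict dimension drop means $\mathcal{U}(d_Ad_B)\not\subset\overline{X_{s_0}}$, giving a gate with the claimed property, and then—$\overline{X_{s_0}}$ being Noetherian hence a finite union of lower-dimensional smooth subvarieties—$\overline{X_{s_0}}\cap\mathcal{U}(d_Ad_B)$ is a finite union of proper submanifolds, which by the Morse–Sard theorem has Haar measure zero. Hence a random gate almost surely attains $s_0$.

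\textbf{Main obstacle.} The genuinely delicate point is the dimension estimate for $\overline{X_{s_0}}$ in Step 2: unlike the Segre case, $\Sigma_{d_A,d_B}^r$ is singular along the locus of strictly smaller rank, so one must be careful that the incidence correspondence is irreducible (or at least that every component has the expected dimension) and that the fibre dimension computation is uniform enough to yield a strict inequality; I expect this to require the same appendix-style argument as for Theorem~\ref{thm:Schmidt}, with the numerics adjusted, and it is where essentially all the work lies. The quadratic bookkeeping in Step 1 and the measure-zero conclusion in Step 3 are routine once Step 2 is in place.
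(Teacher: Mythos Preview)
Your proposal is correct and follows essentially the same approach as the paper's own proof: replace $\Sigma_{d_A,d_B}$ by $\Sigma_{d_A,d_B}^r$ throughout, apply the Projective Dimension Theorem for the upper bound, bound $\dim\overline{X_{r,s}}$ by the incidence-variety argument of Appendix~\ref{appendix:closure} to get existence via Zariski density of $\mathcal{U}(d_Ad_B)$ in $GL(d_Ad_B)$, and invoke the Noetherian/Morse--Sard step for genericity. Your worry about the singular locus of $\Sigma_{d_A,d_B}^r$ is overcautious: the fibre computation in the appendix depends only on the transitive action of $GL(d_Ad_B)$ on $\mathbb{P}^{d_Ad_B-1}$, so every fibre of $\Psi$ has the same dimension $d_A^2d_B^2-(d_Ad_B-1)$ regardless of which points $z_1,z_2$ one picks, and the bound $\dim\overline{X_{r,s}}\leq d_A^2d_B^2-(d_Ad_B-1)+\dim\Sigma_{d_A,d_B}^r+\dim\Sigma_{d_A,d_B}^{s}$ follows verbatim with no smoothness needed.
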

\begin{proof}
Observe the fact that $U$ will map the set of low rank $(\leq r)$ states (or equivalently, the Segre variety $\Sigma_{d_A,d_B}^r$) to a set of high rank $(\geq s+1)$entangled states is equivalent to $U(\Sigma_{d_A,d_B}^r)\bigcap \Sigma_{d_A,d_B}^{s}=\emptyset$.

According to Theorem~\ref{lemma:dim}, if $\dim \Sigma_{d_A,d_B}^r+\dim \Sigma_{d_A,d_B}^s\geq d_Ad_B-1$, then $U(\Sigma_{d_A,d_B}^r)\bigcap \Sigma_{d_A,d_B}^{s}\neq \emptyset$. 

Similarly, let's define $X_{r,s}=\left\{\Phi|\Phi\in GL(d_Ad_B),\Phi(\Sigma_{d_A,d_B}^r)\cap \Sigma_{d_A,d_B}^{s}\neq \emptyset\right\}$. Dimension of its  Zariski closure $\dim \overline{X_{r,s}}$ is bounded by $d_A^2d_B^2-(d_Ad_B-1)+\dim \Sigma_{d_A,d_B}^r+\dim \Sigma_{d_A,d_B}^s$. If $\dim \overline{X_{r,s}}< \dim {GL}(d_Ad_B)$, then we must have $\mathcal{U}(d_Ad_B)\not\subseteq \overline{X_{r,s}}$. Otherwise, $\mathcal{U}(d_Ad_B)\subseteq \overline{X_{r,s}}$ will lead to $GL(d_Ad_B)=\overline{\mathcal{U}(d_Ad_B)}\subseteq \overline{X_{r,s}}$ which contradicts $\dim \overline{X_{r,s}}< \dim {GL}(d_Ad_B)$.

The largest $s$ satisfying $\dim \Sigma_{d_A,d_B}^r+\dim \Sigma_{d_A,d_B}^s < d_Ad_B-1$ is $\left\lceil \frac{d_A+d_B-\sqrt{(d_A-d_B)^2+4r(d_A+d_B)-4(r^2+1)}}{2}\right\rceil-1$. 

The randomness can be proved in a very similar way as the proof of Theorem~\ref{thm:Schmidt}.

\end{proof}

So far, we provide the existence of quantum gates with nonvanishing minimum entangling power for bipartite quantum systems. However, the proof of existence is not constructive and it doesn't provide any thoughts on the structure of quantum gates with nonvanishing minimum entangling power. However, following from the genericness of entangling power, the following observation is natural.

\begin{observation}
To construct a concrete totally entangling gate, we can choose a quantum gate randomly~\cite{Mezzadri:2007uu} and then verify whether it has nonzero entangling power. Since the large entangling power is a generic property of quantum gates, a concrete example will be provided with unit probability.
\end{observation}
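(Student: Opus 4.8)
\noindent The plan is to assemble three ingredients: that having nonzero minimum entangling power is a full-measure event, that the Haar measure is efficiently samplable, and that the defining condition is effectively checkable on a single produced sample. First I would reduce ``totally entangling'' to the Schmidt-rank statement already proved: a gate $U$ has nonzero minimum entangling power precisely when it maps no product state to a product state, i.e.\ $P_{min}^{(SR)}(U)\ge 2$, equivalently $U(\Sigma_{d_A,d_B})\cap\Sigma_{d_A,d_B}=\emptyset$. By Theorem~\ref{thm:Schmidt}, a Haar-random $U\in\mathcal{U}(d_Ad_B)$ almost surely satisfies $P_{min}^{(SR)}(U)=r_0=\lceil\frac{d_A+d_B-\sqrt{(d_A-d_B)^2+4(d_A+d_B)-8}}{2}\rceil$, and by the corollary following that theorem $r_0\ge 2$ except in the degenerate cases $\min(d_A,d_B)\le 2$ or $(d_A,d_B)=(3,3)$. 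Hence, outside those cases, a Haar-random gate is totally entangling with probability one. I would moreover note that the exceptional locus $X_{r_0}\cap\mathcal{U}(d_Ad_B)$ is not merely null but Zariski closed, being the image under the projection to $GL(d_Ad_B)$ of the incidence variety $\{(\Phi,[v]):[v]\in\Phi(\Sigma_{d_A,d_B})\cap\Sigma_{d_A,d_B}^{r_0-1}\}$ along the \emph{proper} factor $\mathbb{P}^{d_Ad_B-1}$; consequently the good locus is open and dense.

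Second, the ``random pick'' is made rigorous by invoking an explicit Haar sampler: the algorithm of \cite{Mezzadri:2007uu}, which applies a Gram--Schmidt ($QR$) procedure to a complex Ginibre matrix and corrects the phases of the triangular factor, outputs a matrix distributed exactly according to the Haar measure on $\mathcal{U}(d_Ad_B)$. Composed with the previous paragraph, its output is totally entangling with probability one.

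Third, I would address the ``verify'' step. Given a produced $U$, testing $U(\Sigma_{d_A,d_B})\cap\Sigma_{d_A,d_B}=\emptyset$ is a finite computation: $U(\Sigma_{d_A,d_B})$ is the projective variety cut out by the images under $U$ of the $2\times2$-minor quadrics defining $\Sigma_{d_A,d_B}$, so one forms the ideal $I$ generated by these together with the $2\times2$-minor quadrics in the ambient coordinates and tests whether $V(I)=\emptyset$ via the projective Nullstellensatz, i.e.\ whether a power of the irrelevant ideal lies in $I$, which a Gröbner-basis computation decides. Because the good locus is open, this verification is stable under the small perturbation incurred by replacing the sampled $U$ by a nearby rational matrix (the condition $\Phi(\Sigma_{d_A,d_B})\cap\Sigma_{d_A,d_B}=\emptyset$ is defined on all of $GL(d_Ad_B)$, not just on the unitaries), so the whole test can be run symbolically over $\mathbb{Q}$. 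Combining the three steps, the pick-and-verify procedure terminates with probability one on a certified concrete totally entangling gate, which is the content of the observation.

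The main obstacle — indeed the only point beyond bookkeeping — is the verification step: one must be sure that emptiness of a projective variety over $\mathbb{C}$ is genuinely decidable and that the unavoidable numerical error of the sampler does not invalidate the certificate. Both are resolved by the openness of the good locus established in the first paragraph, which both guarantees that a positive-probability (in fact full-measure, dense-open) set of gates pass and permits the entire check to be carried out exactly over $\mathbb{Q}$ on a rational approximation of the sampled unitary.
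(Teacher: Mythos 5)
Your proposal follows essentially the same route as the paper: genericity of nonvanishing $P_{min}^{(SR)}$ from Theorem~\ref{thm:Schmidt} (with the degenerate cases $\min(d_A,d_B)\le 2$ and $(d_A,d_B)=(3,3)$ excluded), a Haar-random pick via the sampler of \cite{Mezzadri:2007uu}, and algorithmic verification that $U(\ket{\alpha}\otimes\ket{\beta})=\ket{\gamma}\otimes\ket{\delta}$ has no nonzero solution by Gr\"{o}bner-basis computation. Your additional remarks on the Zariski-closedness of the bad locus and on running the check exactly over $\mathbb{Q}$ on a rational approximation go slightly beyond the paper's discussion but do not alter the argument.
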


So, the only thing left is, how to verify whether a given quantum gate has nonvanishing minimum entangling power?

Note that, for a given gate $U$, $U$ has totally entangling power if and only if the equation $U(\ket{\alpha}\otimes \ket{\beta})=\ket{\gamma}\otimes \ket{\delta}$ do not have non-zero solution $(\ket{\alpha},\ket{\beta},\ket{\gamma},\ket{\delta})$. This can be verified algorithmically by using Gr\"{o}bner basis reduction~\cite{Buchberger:2001wo}.

\begin{example}
Here is an example of quantum gate acting on $\mathbb{C}^3\otimes \mathbb{C}^4$ with positive minimum entangling power. $+$ and $-$ means $+1$ and $-1$ respectively. It is a Hadamard matrix of order $12$. Note that $12$ is the smallest dimension such that quantum gates with $P_{min}^{(SR)}>0$ exist. 
\begin{equation}
U_H=\left(
\begin{array}{cccccccccccc}
+&-&-&-&-&-&-&-&-&-&-&-\\
+&+&-&+&-&-&-&+&+&+&-&+\\
+&+&+&-&+&-&-&-&+&+&+&-\\
+&-&+&+&-&+&-&-&-&+&+&+\\
+&+&-&+&+&-&+&-&-&-&+&+\\
+&+&+&-&+&+&-&+&-&-&-&+\\
+&+&+&+&-&+&+&-&+&-&-&-\\
+&-&+&+&+&-&+&+&-&+&-&-\\
+&-&-&+&+&+&-&+&+&-&+&-\\
+&-&-&-&+&+&+&-&+&+&-&+\\
+&+&-&-&-&+&+&+&-&+&+&- \\
+&-&+&-&-&-&+&+&+&-&+&+ 
\end{array}
\right).
\end{equation}
\end{example}

Following from Theorem~\ref{thm:Schmidt}, there is no quantum gate $U$ acting on $\mathbb{C}^3\otimes \mathbb{C}^4$ with minimum entangling power at least $3$. The example we provide here is the one with largest minimum entangling power in $\mathbb{C}^3\otimes \mathbb{C}^4$.

Unfortunately, though we can do some computation over computational algebraic system to verify whether a given quantum gate has nonvanishing minimum entangling power theoretically, the verification requires an exponentially increasing amount of resources (e.g., time, computational memory), it is practically impossible to verify whether a quantum gate $U\in \mathcal{U}(d_Ad_B)$ has nonvanishing minimum entangling power for large $d_A$ and $d_B$.

Here, let's introduce ancillary systems to our minimum entangling power.  We will show that $P_{min}^{(SR)}(U_{AB}\otimes I_{A'B'})_{[AA':BB']}\geq P_{min}^{(SR)}(U_{AB})$. As a consequence, a quantum gate $U$ acting on $\mathbb{C}^{d_A}\otimes \mathbb{C}^{d_B}$ with nonvanishing minimum entangling power will automatically lead to nonvanishing minimum entangling power of $U\otimes I_{A'}\otimes I_{B'}$ acting on $\mathbb{C}^{d_{A'}}\otimes \mathbb{C}^{d_A}\otimes \mathbb{C}^{d_B}\otimes \mathbb{C}^{d_{B'}}$ in $(AA':BB')$-cut.

For any $\ket{\alpha}_{AA'}$ and $\ket{\beta}_{BB'}$, we have expansions that $\ket{\alpha}_{AA'}=\sum\limits_i \ket{\alpha_i}_A\ket{i}_{A'}$ and $\ket{\beta}_{BB'}=\sum\limits_j\ket{\beta_j}_B\ket{j}_{B'}$. Let's choose $i'$ and $j'$ such that $\ket{\alpha_{i'}}\neq 0$ and $|\beta_{j'}\rangle\neq 0$.

\begin{equation}
\bra{i'j'}_{A'B'}(U_{AB}\otimes I_{A'B'})\ket{\alpha}_{AA'}\ket{\beta}_{BB'}=\bra{i'j'}_{A'B'}\sum\limits_{i,j}U_{AB}\left(\ket{\alpha_i}_A\ket{\beta_j}_B\right)\ket{ij}_{A'B'}=U_{AB}\left(\ket{\alpha_{i'}}_A|\beta_{j'}\rangle_B\right).
\end{equation}

Therefore, $\mathop{SR}((U_{AB}\otimes I_{A'B'})\ket{\alpha}_{AA'}\ket{\beta}_{BB'})\geq \mathop{SR}\left(U_{AB}\left(\ket{\alpha_{i'}}_A|\beta_{j'}\rangle_B\right)\right)\geq P_{min}^{(SR)}(U_{AB})$ for  any $\ket{\alpha}_{AA'}$ and $\ket{\beta}_{BB'}$. It follows that $P_{min}^{(SR)}(U_{AB}\otimes I_{A'B'})_{[AA':BB']}\geq P_{min}^{(SR)}(U_{AB})$.

The above observation implies that, to construct quantum gates with nonvanishing minimum entangling power for all non-degenerate bipartite quantum system $\mathbb{C}^{d_A}\otimes \mathbb{C}^{d_B}$, we only need to construct such quantum gates  for prime numbers $d_A$ and $d_B$. 

At the end of this subsection, we will illustrate that  certain family of Householder-type quantum gates has nonvanishing minimum entangling power by introducing some appropriately chosen subspace. 

For any given $\ket{\psi}\in \mathbb{C}^{d_A}\otimes \mathbb{C}^{d_B}$, let $U_{\psi}=I-2\ket{\psi}\bra{\psi}$ be the associated Householder-type unitary matrix. For any $\ket{\alpha}\in \mathbb{C}^{d_A}$ and $\ket{\beta}\in \mathbb{C}^{d_B}$,  $U_{\psi}(\ket{\alpha}\otimes \ket{\beta})=\ket{\alpha}\otimes \ket{\beta}-2\bra{\psi}\alpha\beta\rangle\ket{\psi}$. If we further assume that $\ket{\psi}$ is chosen to satisfy $\mathop{SR}(\ket{\psi})\geq 3$, then $U_{\psi}(\ket{\alpha}\otimes \ket{\beta})$ is always entangled for any $\ket{\alpha}\ket{\beta}\not\perp \ket{\psi}$.

Though the unitary gate $U_{\psi}$ given above can not entangle $\ket{\alpha}\otimes \ket{\beta}$ when $\ket{\alpha}\otimes \ket{\beta} \perp \ket{\psi}$, it suggests the following observation.

\begin{lemma}\label{lemma:Householder}
Let $U=I-2P$, where $P=\sum\limits_{i=1}^k \ket{\psi_k}\bra{\psi_k}$ is a projection to some subspace $\mathcal{H}_P\subset \mathbb{C}^{d_A}\otimes \mathbb{C}^{d_B}$. If $\mathcal{H}_P$ can be appropriately chosen to satisfy the following conditions:
\begin{enumerate}
\item [1] for any $\ket{\psi}\in \mathcal{H}_P$, $\mathop{SR}(\ket{\psi})\geq r+1$;
\item [2] for any $\ket{\phi}\in \mathcal{H}_P^{\perp}$, $\mathop{SR}(\ket{\phi})\geq 2$,
\end{enumerate}
Then $U$ is a quantum gate with minimum entangling power at least $r$.
\end{lemma}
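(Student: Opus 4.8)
The goal is to bound $P_{min}^{(SR)}(U)$ from below, so by its very definition it suffices to prove that $\mathop{SR}(U(\ket{\alpha}\otimes\ket{\beta}))\geq r$ for every nonzero product vector $\ket{\alpha}\otimes\ket{\beta}\in\mathbb{C}^{d_A}\otimes\mathbb{C}^{d_B}$ and then take the minimum over all such vectors. As a preliminary I would record that $U=I-2P$ is genuinely a quantum gate: since $P$ is an orthogonal projection, $U^{\dagger}=U$ and $U^{2}=I-4P+4P^{2}=I$, so $U\in\mathcal{U}(d_Ad_B)$ is a Hermitian reflection.

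First I would split the given product vector along $\mathcal{H}_P$: put $\ket{v}=P(\ket{\alpha}\otimes\ket{\beta})\in\mathcal{H}_P$ and $\ket{w}=(I-P)(\ket{\alpha}\otimes\ket{\beta})\in\mathcal{H}_P^{\perp}$, so that $\ket{\alpha}\otimes\ket{\beta}=\ket{v}+\ket{w}$ and hence $U(\ket{\alpha}\otimes\ket{\beta})=(\ket{\alpha}\otimes\ket{\beta})-2\ket{v}=\ket{w}-\ket{v}$. I claim $\ket{v}\neq 0$: otherwise $\ket{\alpha}\otimes\ket{\beta}$ would be a nonzero product vector lying in $\mathcal{H}_P^{\perp}$, hence of Schmidt rank $1$, contradicting condition (2). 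Since $\ket{v}$ is then a nonzero element of $\mathcal{H}_P$, condition (1) gives $\mathop{SR}(\ket{v})\geq r+1$.

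The heart of the argument is a ``reverse'' use of the subadditivity of Schmidt rank. Subtracting the two displayed expressions for $\ket{\alpha}\otimes\ket{\beta}$ and $U(\ket{\alpha}\otimes\ket{\beta})$ gives $\ket{v}=\frac{1}{2}\bigl((\ket{\alpha}\otimes\ket{\beta})-U(\ket{\alpha}\otimes\ket{\beta})\bigr)$. Schmidt rank is invariant under multiplication by a nonzero scalar and is subadditive under vector addition (the concatenation of an $a$-term and a $b$-term Schmidt expansion has at most $a+b$ terms), so $\mathop{SR}(\ket{v})\leq \mathop{SR}(\ket{\alpha}\otimes\ket{\beta})+\mathop{SR}(U(\ket{\alpha}\otimes\ket{\beta}))=1+\mathop{SR}(U(\ket{\alpha}\otimes\ket{\beta}))$. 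Combining with $\mathop{SR}(\ket{v})\geq r+1$ yields $\mathop{SR}(U(\ket{\alpha}\otimes\ket{\beta}))\geq r$; taking the minimum over all product states gives $P_{min}^{(SR)}(U)\geq r$, as desired.

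There is no genuinely difficult step here; the only point requiring care is spotting the algebraic identity that expresses $\ket{v}$ as a two-term combination of the input product state (Schmidt rank $1$) and its image under $U$, which is what converts the hypothesis ``every state in $\mathcal{H}_P$ has large Schmidt rank'' into the desired lower bound, together with the observation that condition (2) is precisely what rules out the degenerate possibility $\ket{v}=0$ so that condition (1) is applicable. I would also remark in passing that condition (1) forces $\mathcal{H}_P$ itself to contain no product vectors when $r\geq 1$, but the argument above needs no separate case split: it is uniform in $\ket{w}$ and works as soon as $\ket{v}\neq 0$.
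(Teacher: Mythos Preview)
Your proof is correct and follows essentially the same line as the paper's: write $U(\ket{\alpha}\otimes\ket{\beta})=\ket{\alpha}\otimes\ket{\beta}-2P(\ket{\alpha}\otimes\ket{\beta})$, use condition (2) to guarantee $P(\ket{\alpha}\otimes\ket{\beta})\neq 0$, use condition (1) to get $\mathop{SR}(P(\ket{\alpha}\otimes\ket{\beta}))\geq r+1$, and then apply subadditivity of Schmidt rank to the identity $2P(\ket{\alpha}\otimes\ket{\beta})=\ket{\alpha}\otimes\ket{\beta}-U(\ket{\alpha}\otimes\ket{\beta})$. Your write-up is in fact more explicit than the paper's (which leaves the subadditivity step implicit), but the argument is the same.
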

\begin{proof}
We just follow the lines of the above arguments. For any $\ket{\alpha}\in \mathbb{C}^{d_A}$ and $\ket{\beta}\in \mathbb{C}^{d_B}$,  $U(\ket{\alpha}\otimes \ket{\beta})=\ket{\alpha}\otimes \ket{\beta}-2P(\ket{\alpha}\otimes \ket{\beta})$. With appropriately chosen $P$ satisfying conditions stated above,  $P(\ket{\alpha}\otimes \ket{\beta})$ is nonzero and it has Schmidt rank at least $r+1$ which follows $\mathop{SR}(U(\ket{\alpha}\otimes \ket{\beta}))\geq r$.
\end{proof}

The existence of subspace $P$ is guaranteed by the following result from \cite{Walgate:2008be}.

\begin{lemma}\label{lemma:Walgate}
Let $\mathbb{C}^{d_A}\otimes \mathbb{C}^{d_B}$ be a bipartite Hilbert space, where $d_B\geq d_A$. Then for almost all subspaces $s$-dimensional subspace $S\subseteq \mathbb{C}^{d_A}\otimes \mathbb{C}^{d_B}$, the number of states with Schmidt rank $r$ or less contained in $S$ is exactly
\begin{displaymath}
\left\{
\begin{array}{ll}
0, &\textrm{if $s\leq s_{max}^{'}=(d_A-r)(d_B-r)$;}\\
\prod\limits_{j=0}^{d_1-r-1}\frac{(d_B+r)!r!}{(r+j)!(d_B-r+j)!},& \textrm{if $s=s_{max}^{'}+1$;}\\
\infty, & \textrm{otherwise.}
\end{array}\right.
\end{displaymath}
\end{lemma}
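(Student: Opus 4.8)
The plan is to reduce the statement to a standard dimension count for generic $s$-dimensional subspaces of $\mathbb{C}^{d_A}\otimes\mathbb{C}^{d_B}$, in the spirit of the determinantal-variety arguments already used in the proof of Theorem~\ref{thm:Schmidt2}. Recall from the Projective Determinantal Variety discussion that $\Sigma_{d_A,d_B}^r$ has projective dimension $d_Ad_B-(d_A-r)(d_B-r)-1$, so its affine cone has dimension $d_Ad_B-(d_A-r)(d_B-r)+1$. A subspace $S$ of $\mathbb{C}^{d_A}\otimes\mathbb{C}^{d_B}$ of linear dimension $s$ corresponds to a point in the Grassmannian $\mathrm{Gr}(s,d_Ad_B)$; saying that $S$ contains a nonzero vector of Schmidt rank $\le r$ means that the projectivization $\mathbb{P}(S)\cong\mathbb{P}^{s-1}$ meets $\Sigma_{d_A,d_B}^r$ inside $\mathbb{P}^{d_Ad_B-1}$.

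First I would set up the incidence variety
\begin{eqnarray}
I=\{(S,[\psi]):[\psi]\in\mathbb{P}(S)\cap\Sigma_{d_A,d_B}^r\}\subseteq \mathrm{Gr}(s,d_Ad_B)\times\Sigma_{d_A,d_B}^r,
\end{eqnarray}
and project to the two factors. The projection $\pi_2:I\to\Sigma_{d_A,d_B}^r$ is surjective with fibres the sub-Grassmannians of $s$-dimensional subspaces containing a fixed line, which are isomorphic to $\mathrm{Gr}(s-1,d_Ad_B-1)$; this gives $\dim I=\dim\Sigma_{d_A,d_B}^r+\dim\mathrm{Gr}(s-1,d_Ad_B-1)$, where here $\dim\Sigma_{d_A,d_B}^r$ is the \emph{projective} dimension. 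Subtracting $\dim\mathrm{Gr}(s,d_Ad_B)$ from $\dim I$ yields the expected fibre dimension of $\pi_1:I\to\mathrm{Gr}(s,d_Ad_B)$, and a short computation shows this expected dimension equals $\bigl(d_Ad_B-(d_A-r)(d_B-r)-1\bigr)+(s-1)-(d_Ad_B-1)=s-1-(d_A-r)(d_B-r)$. When $s\le s_{max}'=(d_A-r)(d_B-r)$ this number is negative, so $\pi_1$ is not dominant, and by upper semicontinuity of fibre dimension its image is a proper closed subvariety of $\mathrm{Gr}(s,d_Ad_B)$; hence a generic $S$ of dimension $s$ contains \emph{no} state of Schmidt rank $\le r$. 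For $s=s_{max}'+1$ the expected fibre dimension is $0$, so one expects finitely many such states generically, and the content of the lemma is the exact count $\prod_{j=0}^{d_A-r-1}\frac{(d_B+r)!\,r!}{(r+j)!\,(d_B-r+j)!}$.

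The easy half — the "$0$ when $s\le s_{max}'$" statement and the "$\infty$ otherwise" statement — follows essentially from the dimension count above together with the irreducibility of $\Sigma_{d_A,d_B}^r$ and of the Grassmannian. For the "$\infty$" case, when $s>s_{max}'+1$ the expected fibre dimension is strictly positive, and since $\pi_1$ is now dominant (the generic fibre being nonempty of the expected dimension, by irreducibility of $I$ and the theorem on dimension of fibres), a generic $S$ meets $\Sigma_{d_A,d_B}^r$ in a positive-dimensional, hence infinite, family. The genuinely hard step — and the main obstacle — is the exact enumeration in the critical dimension $s=s_{max}'+1$: here one must show the intersection number of $\mathbb{P}^{s-1}$ with $\Sigma_{d_A,d_B}^r$ in $\mathbb{P}^{d_Ad_B-1}$ is exactly the stated product, which is the classical degree of the determinantal variety $\Sigma_{d_A,d_B}^r$. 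This is a Giambelli–Thom–Porteous-type formula: the degree of the variety of $d_A\times d_B$ matrices of rank $\le r$ is computed via the Chern classes of the universal bundles, and evaluates to the ratio of products of factorials appearing in the statement. I would either cite the classical determinantal degree formula directly or, following \cite{Walgate:2008be}, reproduce their argument verbatim; in any case I would not grind through the Schubert-calculus bookkeeping here, since it is standard and the cited reference supplies it.
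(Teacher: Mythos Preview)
The paper does not prove this lemma at all: it is quoted verbatim as ``the following result from \cite{Walgate:2008be}'' and used as a black box to guarantee the existence of the subspace $\mathcal{H}_P$ in Lemma~\ref{lemma:Householder}. So there is no in-paper proof to compare your proposal against.

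That said, your sketch is the right way to prove the statement from scratch. The incidence-variety setup over $\mathrm{Gr}(s,d_Ad_B)\times\Sigma_{d_A,d_B}^r$ with the two projections, together with the fibre-dimension count, correctly handles the ``$0$'' and ``$\infty$'' regimes. Your identification of the middle case as computing the projective degree of the determinantal variety is also correct: for a generic linear $\mathbb{P}^{s_{max}'}$ the intersection with $\Sigma_{d_A,d_B}^r$ is transverse and consists of exactly $\deg\Sigma_{d_A,d_B}^r$ reduced points, and that degree is the classical Giambelli formula you name. One small caution: you should check, or cite a result guaranteeing, that for generic $S$ the intersection really is transverse (so that the set-theoretic count equals the intersection number); this follows from Bertini/Kleiman-type transversality and is implicit in your ``upper semicontinuity of fibre dimension'' remark, but it is worth making explicit if you are writing a self-contained proof rather than deferring to \cite{Walgate:2008be}.
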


Lemma~\ref{lemma:Walgate} says almost all subspaces with dimension no more than $(d_A-r)(d_B-r)$ is completely void of states with Schmidt rank $r$ or less. 

If there is no such $P$ satisfying conditions stated in Lemma~\ref{lemma:Householder}, then for any $(d_A-r)(d_B-r)$-dimensional subspace, if it is completely void of states with Schmidt rank $\leq r$, its complementary must contain some product states. 

However, when $d_Ad_B-(d_A-r)(d_B-r)\leq (d_A-1)(d_B-1)$, a random subspace with dimension $(d_A-r)(d_B-r)$ is almost surely completely void of states with Schmidt rank $\leq r$, and its complementary subspace is almost surely completely void of product states. This is a contradiction.

So the only requirement for the existence of Householder-type quantum gates with nonvanishing minimum entangling power is $d_Ad_B-(d_A-r)(d_B-r)\leq (d_A-1)(d_B-1)$.

\begin{remark}
Comparing with Theorem~\ref{thm:concentration_entropy}, there are two major differences. (1.) Theorem~\ref{thm:concentration_entropy} illustrated  some ``concentration of measure" phenomenon over high-dimensional unitary group. It is unclear whether such phenomenon still occurs in small-dimensional systems. The discrete version of ``concentration of measure" phenomenon we presented in Theorem~\ref{thm:Schmidt} occurs in any bipartite quantum system (except for some degenerate cases). (2.)  In Theorem~\ref{thm:concentration_entropy}, a random quantum gate has large minimum entangling power with high probability. In the discrete version, a random quantum gate has large minimum entangling power with unit probability.
\end{remark}


\section{Multipartite Entangling Power}\label{sec:multipartite}

In Section~\ref{sec:bipartite}, we studied the minimum entangling power of a quantum gate in a bipartite cut of some quantum system. Here, we will look into the multipartite quantum systems. In the case of quantum system composed of $N\geq 3$ subsystems, the structure of entangled states is much more complicated than that in the bipartite case. 

In the multipartite setting, a pure state $\ket{\psi}_{1,2,\cdots,N}$ in $N$-partite system (associated with Hilbert space $\mathcal{H}_{1,2,\cdots,N}=\mathcal{H}_1\otimes \mathcal{H}_2\otimes \cdots\otimes \mathcal{H}_N$) is a product state (or a fully $N$-particle separable state) if and only if it can be written as
\begin{equation}
\ket{\psi}_{1,2,\cdots,N}=\ket{\psi}_1\otimes \ket{\psi}_2\otimes \cdots \otimes \ket{\psi}_N.
\end{equation}

However, the violation of the above condition doesn't imply a ``truly" $N$-partite entanglement. For instance, one may consider a tripartite state $\ket{\phi}_{123}=\ket{\Phi}_{12}\otimes \ket{\phi}_3$ where $\ket{\Phi}_{12}$ is a Bell state and $\ket{\phi}_3$ is some qubit state.

In the multipartite setting, we say an $N$-partite state is bi-separable (or bi-product) if it is a product state in some bipartite cut. An $N$-partite state is a genuine entangled state if and only there does not exist a bipartite cut, against which the state is a product state, or equivalently, it is not bi-separable. For any given index set $\Gamma$ satisfying $\emptyset \subsetneq \Gamma\subsetneq \{1,2,\cdots,N\}$, let $\Sigma_{\Gamma}=\l\{\ket{\psi}\otimes \ket{\phi}:\ket{\psi}\in \otimes_{i\in \Gamma}\mathcal{H}_i, \ket{\phi}\in \otimes_{j \in {\Gamma}^c}\mathcal{H}_j\}$. The set of genuine entangled states can be characterized by the complement of $\bigcup\limits_{\emptyset \subsetneq \Gamma\subsetneq \{1,2,\cdots,N\}} \Sigma_{\Gamma}$.

The concept of minimum entangling power can be easily generalized to quantum gates acting on multipartite quantum systems. We say a quantum gate acting on multipartite quantum system $\mathcal{H}=\mathcal{H}_1\otimes \mathcal{H}_2\otimes \cdots \otimes \mathcal{H}_N(N\geq 3)$ has minimum entangling power with respect to entanglement measure $f$ as the following:
\begin{equation}
\begin{split}
&P_{min}^{(f)}(U_{1:2:\cdots:N})\\
={}&\min\limits_{\ket{\psi}_i\in \mathcal{H}_i, i=1,2,\cdots, N} \left(\min\limits_{\emptyset \subsetneq \Gamma\subsetneq \{1,2,\cdots,N\}}  f_{\Gamma:\Gamma^c}(U(\ket{\psi}_1\otimes \ket{\psi}_2\otimes \cdots \otimes \ket{\psi}_N))\right)\\
={}&\min\limits_{\emptyset \subsetneq \Gamma\subsetneq \{1,2,\cdots,N\}}\left(\min\limits_{\ket{\psi}_i\in \mathcal{H}_i, i=1,2,\cdots, N}   f_{\Gamma:\Gamma^c}(U(\ket{\psi}_1\otimes \ket{\psi}_2\otimes \cdots \otimes \ket{\psi}_N))\right)\\
\geq {}&\min\limits_{\emptyset \subsetneq \Gamma\subsetneq \{1,2,\cdots,N\}}P_{min}^{(f)}\left(U_{[\Gamma:\Gamma^c]}\right),
\end{split}
\end{equation}
i.e., minimum entangling power of a multipartite quantum gate is greater than or equal to the minimum possible value of its ``bipartite" minimum entangling power in any bipartite cut.

In Subsection~\ref{multipartite:entropy}, we will study the multipartite minimum entangling power with respect to von Neumann entropy of reduced density operators. In Subsection~\ref{multipartite:Schmidt}, by taking as entanglement measure the minimum Schmidt rank in any bipartite cut, we will generalize our result on $P_{min}^{(SR)}$ to multipartite setting. In Subsection~\ref{multipartite:tensor}, we will introduce another well-studied multipartite entanglement measure, which is called tensor rank. We will study the corresponding minimum entangling power with respect to tensor rank. 

\subsection{von Neumann Entropy of Reduced Density Operator As Multipartite Entanglement Measure}\label{multipartite:entropy}

In the multipartite setting,  a quantum gate $U$ acting on multipartite quantum system $\mathcal{H}=\mathcal{H}_1\otimes \mathcal{H}_2\otimes \cdots \otimes \mathcal{H}_N(N\geq 3)$ has minimum entangling power
\begin{equation}
P_{min}^{(S_v)}(U)\geq \min\limits_{\emptyset \subsetneq \Gamma\subsetneq \{1,2,\cdots,N\}}P_{min}^{(S_v)}\left(U_{[\Gamma:\Gamma^c]}\right).
\end{equation}

Let's recall our result on bipartite minimum entangling power with respect to entropy of entanglement,  we have the following claim:

Let's temporarily fix certain nonempty $\Gamma\subsetneq \{1,2,\cdots,N\}$. Denote the dimension of the entire $N$-partite system as $d_0 \equiv \prod\limits_{i=1}^N d_i$, and define $d_{\Gamma} \equiv \prod\limits_{i\in \Gamma}d_i$. For any $\delta>0$,

\begin{equation}
\begin{split}
{}&\mu\left(P_{min}^{(S_v)}\left(U_{[\Gamma:\Gamma^c]}\right) \leq \int\limits_{\mathcal{SU}(d_0)}P_{min}^{(S_v)}\left(U_{[\Gamma:\Gamma^c]}\right)d\mu(U)- \delta\right) \\
\leq{}&\mu\left(\left|P_{min}^{(S_v)}\left(U_{[\Gamma:\Gamma^c]}\right) -\int\limits_{\mathcal{SU}(d_0)}P_{min}^{(S_v)}\left(U_{[\Gamma:\Gamma^c]}\right)d\mu(U)\right|\geq \delta\right) \\
\leq{}& 2 e^{-\frac{d_0\delta^2}{32(\log d_\Gamma)^2}}.
\end{split}
\end{equation}

Therefore, we will have
\begin{equation}
\mu\left(\mathop{\bigwedge}\limits_{\Gamma} \left(P_{min}^{(S_v)}\left(U_{[\Gamma:\Gamma^c]}\right) > \int\limits_{\mathcal{SU}(d_0)}P_{min}^{(S_v)}\left(U_{[\Gamma:\Gamma^c]}\right)d\mu(U)- \delta\right)\right)\geq 1- 2 \sum\limits_{\Gamma}e^{-\frac{d_0\delta^2}{32(\log d_\Gamma)^2}}.
\end{equation}

In both sides of the above inequality, $\Gamma$ runs over all nonempty subsets of $\{1,2,\cdots,N\}$ with $d_\Gamma\leq d_{\Gamma^c}$.

Recall that, when further assume $d_\Gamma\leq  d_{\Gamma^c}$, then the central point $\int\limits_{\mathcal{SU}(d_0)}P_{min}^{(S_v)}\left(U_{[\Gamma:\Gamma^c]}\right)d\mu(U)\geq \sum\limits_{i\in \Gamma}\log d_i-\frac{ d_\Gamma}{ d_{\Gamma^c} \ln 2}-1$.

Therefore, we will obtain the following theorem by applying $P_{min}^{(S_v)}(U)\geq \min\limits_{\emptyset \subsetneq \Gamma\subsetneq \{1,2,\cdots,N\}}P_{min}^{(S_v)}\left(U_{[\Gamma:\Gamma^c]}\right)$.

\begin{theorem}
Assume $d_1\leq d_2\leq \cdots \leq d_N$, a quantum gate $U\in \mathcal{U}(d_1\cdots d_N)$ is chosen randomly according to the Haar measure. Then, for any $\delta>0$, 
\begin{equation}
\mu\left(P_{min}^{(S_v)}(U) >\log d_1-\frac{ d_1}{ \prod\limits_{i=2}^Nd_i \ln 2}-1-\delta \right)\geq 1- 2 \sum\limits_{\emptyset \subsetneq \Gamma\subsetneq \{1,2,\cdots,N\}\atop s.t. d_\Gamma\leq  d_{\Gamma^c}}e^{-\frac{d_0\delta^2}{32(\log d_\Gamma)^2}}.
\end{equation}
\end{theorem}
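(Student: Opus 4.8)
The plan is to assemble the theorem by combining three ingredients already established in the excerpt: (i) the concentration inequality for $P_{min}^{(S_v)}$ on each bipartite cut (Corollary following Theorem~\ref{thm:concentration_entropy}, giving the bound $2e^{-(\prod_i d_i)\delta^2/64(\log\prod_{i\in\Gamma}d_i)^2}$); (ii) the lower bound on the central point, $\int P_{min}^{(S_v)}(U_{[\Gamma:\Gamma^c]})\,d\mu(U)\geq \sum_{i\in\Gamma}\log d_i - \frac{\prod_{i\in\Gamma}d_i}{\prod_{i\in\Gamma^c}d_i\ln 2} - 1$, valid when $\prod_{i\in\Gamma}d_i\leq\prod_{i\in\Gamma^c}d_i$; and (iii) the elementary bound $P_{min}^{(S_v)}(U)\geq \min_{\Gamma}P_{min}^{(S_v)}(U_{[\Gamma:\Gamma^c]})$ derived at the start of this subsection. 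The strategy is a union bound over the (finitely many) bipartite cuts $\Gamma$.

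First I would fix an arbitrary nonempty proper $\Gamma\subsetneq\{1,\dots,N\}$ and, without loss of generality, normalize so that $\prod_{i\in\Gamma}d_i\leq\prod_{i\in\Gamma^c}d_i$ (otherwise swap $\Gamma$ and $\Gamma^c$, which does not change the cut). Applying ingredient (i) one-sidedly and then substituting the lower bound (ii) for the central point yields, for each such $\Gamma$,
\begin{eqnarray}
\mu\Big(P_{min}^{(S_v)}(U_{[\Gamma:\Gamma^c]}) \leq \sum_{i\in\Gamma}\log d_i - \frac{\prod_{i\in\Gamma}d_i}{\prod_{i\in\Gamma^c}d_i\ln 2} - 1 - \delta\Big) \leq 2\, e^{-\frac{(\prod_{i=1}^N d_i)\delta^2}{64(\log \prod_{i\in\Gamma}d_i)^2}}.\nonumber
\end{eqnarray}
Next, since $d_1\leq d_2\leq\cdots\leq d_N$, the singleton cut $\Gamma=\{1\}$ is the one whose right-hand threshold $\sum_{i\in\Gamma}\log d_i - \frac{\prod_{i\in\Gamma}d_i}{\prod_{i\in\Gamma^c}d_i\ln 2} - 1$ is smallest; more precisely, $\log d_1 - \frac{d_1}{\prod_{i=2}^N d_i\ln 2} - 1$ is a uniform lower bound for that threshold over all cuts $\Gamma$ (each larger product $\prod_{i\in\Gamma}d_i$ only increases $\sum_{i\in\Gamma}\log d_i$ by more than the correction term can subtract, given that the correction is at most $1/\ln 2$). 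Hence the event $\bigwedge_\Gamma\{P_{min}^{(S_v)}(U_{[\Gamma:\Gamma^c]}) > \log d_1 - \frac{d_1}{\prod_{i=2}^N d_i\ln 2} - 1 - \delta\}$ contains the event appearing on the left of the theorem via ingredient (iii), and a union bound over all $\Gamma$ with $\prod_{i\in\Gamma}d_i\leq\prod_{i\in\Gamma^c}d_i$ gives exactly the claimed probability estimate.

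The only delicate point — and the step I would write out most carefully — is verifying that $\log d_1 - \frac{d_1}{\prod_{i=2}^N d_i\ln 2} - 1$ really does lower-bound $\sum_{i\in\Gamma}\log d_i - \frac{\prod_{i\in\Gamma}d_i}{\prod_{i\in\Gamma^c}d_i\ln 2} - 1$ for every cut, i.e. that the singleton $\{1\}$ is extremal; here one uses that $\sum_{i\in\Gamma}\log d_i\geq\log d_1$ (trivially, since $d_i\geq 1$ — or $\geq\log d_1$ using $d_i\geq d_1$ if $1\in\Gamma$, and a separate easy argument if $1\notin\Gamma$) together with the boundedness of the subtracted correction term by $\frac{1}{\ln 2}$ whenever $\prod_{i\in\Gamma}d_i\leq\prod_{i\in\Gamma^c}d_i$. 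Modulo this monotonicity bookkeeping, everything else is a direct citation of the preceding bipartite results plus the union bound, so I expect no real obstacle — this theorem is essentially a corollary packaging of Subsection~\ref{bipartite:entropy:Riemann}.
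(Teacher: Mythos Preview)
Your proposal follows exactly the same route as the paper: apply the bipartite concentration inequality (the corollary after Theorem~\ref{thm:concentration_entropy}) to each cut $\Gamma$, insert the lower bound on the central point, take a union bound, and then use $P_{min}^{(S_v)}(U)\geq \min_\Gamma P_{min}^{(S_v)}(U_{[\Gamma:\Gamma^c]})$. The paper in fact glosses over the ``delicate point'' you single out --- it simply asserts the theorem after displaying the union bound --- so you are being more scrupulous than the paper here.

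One small caution on that bookkeeping step: your sketched argument (``$\sum_{i\in\Gamma}\log d_i\geq\log d_1$ together with the correction being at most $1/\ln 2$'') does not by itself yield $T_\Gamma\geq T_{\{1\}}$, since it only gives $T_\Gamma\geq \log d_1-\tfrac{1}{\ln 2}-1$, which is \emph{below} $T_{\{1\}}$. What actually works is your other remark, that the gain $\sum_{i\in\Gamma}\log d_i-\log d_1$ dominates the maximal possible loss $1/\ln 2$ in the correction: for $|\Gamma|\geq 2$ this gain is at least $\log d_1$, and $\log d_1\geq \log 3>1/\ln 2$ once $d_1\geq 3$; singleton cuts $\{j\}$ need a separate short check. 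For $d_1=2$ the threshold $T_{\{1\}}=1-\tfrac{4}{D\ln 2}-1<0$ and the inequality in the theorem is vacuous, so the failure of monotonicity there is harmless.
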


Note $e^{-\frac{d_0\delta^2}{32(\log d_\Gamma)^2}}$ is negligible when $\delta$ is small enough. A finite sum of negligible quantities is negligible. This implies a random quantum gate acting on multipartite quantum system will also have large entangling power with high probability.

\subsection{Schmidt Rank As Multipartite Entanglement Measure} \label{multipartite:Schmidt}

By taking as entanglement measure the minimum Schmidt rank in any bipartite cut here, we define the corresponding minimum entangling power as the following:
\begin{equation}
P_{min}^{(SR)}(U)=\min\limits_{\ket{\psi}_i\in \mathcal{H}_i, i=1,2,\cdots, N} \min\limits_{\emptyset \subsetneq \Gamma\subsetneq \{1,2,\cdots,N\}}  {\mathop{SR}}_{\Gamma:\Gamma^c}(U(\ket{\psi}_1\otimes \ket{\psi}_2\otimes \cdots \otimes \ket{\psi}_N)).
\end{equation}

\begin{theorem}
Assuming $d_1\leq d_2\leq \cdots \leq d_N$, a random quantum gate $U\in \mathcal{U}(d_1\cdots d_N)$ will almost surely have multipartite entangling power $P_{min}^{(SR)}(U)=\left\lceil \frac12\left(d_1+\prod\limits_{i=2}^N d_i-\sqrt{\left(d_1-\prod\limits_{i=2}^N d_i\right)^2+4\left(d_1+\prod\limits_{i=2}^N d_i\right)-8}\right)\right\rceil$.
\end{theorem}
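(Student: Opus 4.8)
\emph{Proof proposal.} The plan is to push everything back to the bipartite Theorem~\ref{thm:Schmidt}, applied cut by cut, and to locate the bottleneck cut by a short monotonicity computation. Throughout write $M=\prod_{i=1}^N d_i$, $D=\prod_{i=2}^N d_i$, and for a cut $\emptyset\subsetneq\Gamma\subsetneq\{1,\dots,N\}$ set $p_\Gamma=\prod_{i\in\Gamma}d_i$, $q_\Gamma=\prod_{j\in\Gamma^c}d_j$ (so $p_\Gamma q_\Gamma=M$), and let $r_0(a,b)=\lceil\frac{a+b-\sqrt{(a-b)^2+4(a+b)-8}}{2}\rceil$ be the bipartite maximal value from Theorem~\ref{thm:Schmidt}. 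The first, easy step is the lower bound. Since any fully product vector $\ket{\psi}_1\otimes\cdots\otimes\ket{\psi}_N$ is in particular a product state across each cut $\Gamma$, the definition of the multipartite $P_{min}^{(SR)}$ gives $P_{min}^{(SR)}(U)\ge\min_\Gamma P_{min}^{(SR)}(U_{[\Gamma:\Gamma^c]})$, exactly as recorded earlier in this section. By Theorem~\ref{thm:Schmidt}, for each fixed $\Gamma$ there is a Haar-measure-zero set $\mathcal{N}_\Gamma\subset\mathcal{U}(M)$ outside which $P_{min}^{(SR)}(U_{[\Gamma:\Gamma^c]})=r_0(p_\Gamma,q_\Gamma)$; the union of the finitely many $\mathcal{N}_\Gamma$ is still measure zero, so for $U$ off this union we get $P_{min}^{(SR)}(U)\ge\min_\Gamma r_0(p_\Gamma,q_\Gamma)$.

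Next I would show that $\min_\Gamma r_0(p_\Gamma,q_\Gamma)=r_0(d_1,D)$, i.e. that the most lopsided cut $\Gamma=\{1\}$ is the worst one. Fixing the product $ab=M$ and writing $s=a+b$, one has $(a-b)^2=s^2-4M$, so the quantity inside the ceiling is $\tfrac12\bigl(s-\sqrt{s^2+4s-4M-8}\bigr)$; its $s$-derivative is $\tfrac12\bigl(1-(s+2)/\sqrt{s^2+4s-4M-8}\bigr)$, which is negative because $(s+2)^2=s^2+4s+4>s^2+4s-4M-8$. Hence $r_0(a,M/a)$ decreases as $s=a+M/a$ grows, i.e. as the split becomes more unbalanced, so the minimum over cuts is attained at the smallest admissible $p_\Gamma$. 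That smallest value is $d_1$ — any nonempty subproduct of $d_1\le\cdots\le d_N$ is at least $d_1$, and $d_1\le D$ — giving $\min_\Gamma r_0(p_\Gamma,q_\Gamma)=r_0(d_1,D)$, the claimed number.

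For the matching upper bound I would imitate Step~1 of the proof of Theorem~\ref{thm:Schmidt}. A gate $U$ has $P_{min}^{(SR)}(U)>r_0$ precisely when, for every cut $\Gamma$, the image $U(\Sigma_{d_1,\cdots,d_N})$ of the multipartite Segre variety of fully product states misses the determinantal variety $\Sigma^{r_0}_{p_\Gamma,q_\Gamma}$ of states of Schmidt rank $\le r_0$ across $\Gamma$. Taking $\Gamma=\{1\}$ and invoking the Projective Dimension Theorem (Theorem~\ref{lemma:dim}) inside $\mathbb{P}^{M-1}$, the intersection $U(\Sigma_{d_1,\cdots,d_N})\cap\Sigma^{r_0}_{d_1,D}$ is forced to be nonempty as soon as $\dim\Sigma_{d_1,\cdots,d_N}+\dim\Sigma^{r_0}_{d_1,D}\ge M-1$, that is (using $\dim\Sigma_{d_1,\cdots,d_N}=\sum_i(d_i-1)$ and the stated dimension formula for determinantal varieties) as soon as $\sum_{i=1}^N(d_i-1)\ge(d_1-r_0)(D-r_0)$. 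Together with the previous step this would pin $P_{min}^{(SR)}(U)$ to $r_0(d_1,D)$ for every $U$, and hence almost surely. The ``almost surely'' phrasing is then upgraded exactly as in Theorem~\ref{thm:Schmidt}: the exceptional gates lie in the intersection with $\mathcal{U}(M)$ of a Zariski-closed, strictly lower-dimensional subset of $GL(M)$ — a finite union of submanifolds of $\mathcal{U}(M)$ of smaller dimension — which has Haar measure zero by the Morse--Sard theorem.

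The hard part is that last dimension inequality. The multipartite Segre variety $\Sigma_{d_1,\cdots,d_N}$ has dimension only $\sum_i(d_i-1)$, which for $N\ge3$ is in general strictly smaller than the bipartite Segre dimension $d_1+D-2$ that enters $r_0(d_1,D)$; consequently $\sum_i(d_i-1)\ge(d_1-r_0)(D-r_0)$ does \emph{not} follow automatically from the defining inequality $(d_1-r_0)(D-r_0)\le d_1+D-2$ of $r_0$. So the real work is either to verify this inequality directly (possibly under an extra hypothesis on the $d_i$), or else to replace $d_1+D-2$ by $\sum_i(d_i-1)$ in the exact value of $P_{min}^{(SR)}$ and prove the corresponding upper bound; the lower bound and the genericness argument, by contrast, are routine given Theorem~\ref{thm:Schmidt}.
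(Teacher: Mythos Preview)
Your reduction to Theorem~\ref{thm:Schmidt} cut by cut is exactly the paper's route: the paper states the theorem without a separate proof and immediately passes to the ``stronger version'' that $P_{min}^{(SR)}(U_{[\Gamma:\Gamma^c]})=r_0(p_\Gamma,q_\Gamma)$ holds simultaneously for every cut almost surely, the proof being just the finite union of the measure-zero exceptional sets coming from Theorem~\ref{thm:Schmidt}. Combined with the displayed inequality $P_{min}^{(SR)}(U_{1:2:\cdots:N})\ge\min_\Gamma P_{min}^{(SR)}(U_{[\Gamma:\Gamma^c]})$, this yields only the lower bound $P_{min}^{(SR)}(U)\ge r_0(d_1,D)$. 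Your monotonicity computation showing that $r_0(a,M/a)$ decreases as $a+M/a$ grows, and hence that $\Gamma=\{1\}$ is the bottleneck cut, is a genuine addition: the paper does not carry out this step and simply writes down the formula with $d_1$ and $\prod_{i\ge2}d_i$.

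The gap you isolate for the upper bound is real, and the paper does not close it either. As you observe, applying the Projective Dimension Theorem to $U(\Sigma_{d_1,\dots,d_N})\cap\Sigma^{r_0}_{d_1,D}$ would need $\sum_i(d_i-1)\ge(d_1-r_0)(D-r_0)$, whereas the defining inequality of $r_0=r_0(d_1,D)$ only gives $(d_1-r_0)(D-r_0)\le d_1+D-2$; and since $D=\prod_{i\ge2}d_i\ge\sum_{i\ge2}d_i$ for $N\ge3$ with all $d_i\ge2$, one has $d_1+D-2\ge\sum_i(d_i-1)$, so the needed inequality is strictly stronger in general. The ``stronger version'' does not rescue the upper bound: knowing that some \emph{bipartite} product input across $\{1\}$ is mapped to Schmidt rank $\le r_0$ says nothing about \emph{fully} product inputs, because the fully product states form only a proper subvariety of the bipartite Segre variety. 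In short, your proposal matches the paper on the part the paper actually argues (the almost-sure lower bound), improves it by supplying the missing identification of the minimizing cut, and is correct that the asserted equality is not established --- neither in your argument nor in the paper's.
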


In fact, we have a stronger version for multipartite entangling power.
\begin{theorem}
A random quantum gate $U\in \mathcal{U}(d_1\cdots d_N)$ will almost surely have bipartite entangling power $P_{min}^{(SR)}\left(U_{[\Gamma:\Gamma^c]}\right)=\left\lceil \frac12\left(d_\Gamma+d_{\Gamma^c}-\sqrt{\left(d_\Gamma-d_{\Gamma^c}\right)^2+4\left(d_\Gamma+d_{\Gamma^c}\right)-8}\right)\right\rceil$ in $(\Gamma:\Gamma^c)$-cut for any $\emptyset \subsetneq \Gamma\subsetneq \{1,2,\cdots,N\}$.
\end{theorem}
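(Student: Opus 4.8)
The plan is to recognize this statement as a simultaneous (``for every cut at once'') version of Theorem~\ref{thm:Schmidt}, and to obtain it from that theorem by a union bound over the finitely many bipartite cuts. First I would fix a cut $\emptyset\subsetneq\Gamma\subsetneq\{1,2,\dots,N\}$ and set $D_\Gamma=\prod_{i\in\Gamma}d_i$, $D_{\Gamma^c}=\prod_{j\in\Gamma^c}d_j$, and $D=\prod_{i=1}^N d_i=D_\Gamma D_{\Gamma^c}$. The decisive observation is that, whatever $\Gamma$ is, a unitary $U\in\mathcal{U}(D)$ is literally the same object as a bipartite gate on $\mathbb{C}^{D_\Gamma}\otimes\mathbb{C}^{D_{\Gamma^c}}$ (choosing the isomorphism $\mathbb{C}^D\cong\mathbb{C}^{D_\Gamma}\otimes\mathbb{C}^{D_{\Gamma^c}}$ determined by the cut), and that under this identification $P_{min}^{(SR)}(U_{[\Gamma:\Gamma^c]})$ is exactly the bipartite minimum entangling power of that gate with respect to Schmidt rank. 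Hence Theorem~\ref{thm:Schmidt}, applied with $(d_A,d_B)=(D_\Gamma,D_{\Gamma^c})$, yields for each $\Gamma$: (i) \emph{no} $U\in\mathcal{U}(D)$ has $P_{min}^{(SR)}(U_{[\Gamma:\Gamma^c]})$ strictly larger than $r_0(\Gamma):=\lceil\frac{D_\Gamma+D_{\Gamma^c}-\sqrt{(D_\Gamma-D_{\Gamma^c})^2+4(D_\Gamma+D_{\Gamma^c})-8}}{2}\rceil$; and (ii) the ``bad set'' $\mathcal{B}_\Gamma=\{U\in\mathcal{U}(D):P_{min}^{(SR)}(U_{[\Gamma:\Gamma^c]})<r_0(\Gamma)\}$ has Haar measure zero in $\mathcal{U}(D)$.

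Next I would assemble these into a single almost-sure statement. Since the bipartite dimension $D_\Gamma D_{\Gamma^c}=D$ is independent of $\Gamma$, all the sets $\mathcal{B}_\Gamma$ sit inside one and the same probability space $(\mathcal{U}(D),\mu)$, with $\mu$ the (unique, bi-invariant) Haar measure, which is intrinsic to $\mathcal{U}(D)$ and indifferent to which tensor factorization we use. There are only finitely many cuts (at most $2^N-2$, and only $2^{N-1}-1$ up to complementation), so $\mathcal{B}=\bigcup_{\emptyset\subsetneq\Gamma\subsetneq\{1,\dots,N\}}\mathcal{B}_\Gamma$ is a finite union of $\mu$-null sets and therefore $\mu(\mathcal{B})=0$. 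Consequently, with probability one a Haar-random $U$ avoids $\mathcal{B}$; for such $U$, part (i) gives $P_{min}^{(SR)}(U_{[\Gamma:\Gamma^c]})\le r_0(\Gamma)$ and $U\notin\mathcal{B}_\Gamma$ gives $P_{min}^{(SR)}(U_{[\Gamma:\Gamma^c]})\ge r_0(\Gamma)$, so equality holds simultaneously for every cut $\Gamma$, which is the claim.

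The argument is essentially routine once Theorem~\ref{thm:Schmidt} is available, so there is no serious obstacle of a technical kind; the only points meriting a sentence of care are bookkeeping ones. One is the degenerate cuts, i.e.\ those with $\min(D_\Gamma,D_{\Gamma^c})\le 2$ or $(D_\Gamma,D_{\Gamma^c})=(3,3)$: there the formula for $r_0(\Gamma)$ evaluates to $1$, every gate attains it trivially, and $\mathcal{B}_\Gamma=\emptyset$, so such cuts are harmless and need no separate treatment. The other is simply to state explicitly that the measure-zero conclusion of Theorem~\ref{thm:Schmidt} is with respect to exactly the Haar measure on $\mathcal{U}(D)$ used here, so that the finite union bound is legitimate. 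To the extent that anything is ``hard,'' it is conceptual rather than computational: noticing that $D_\Gamma D_{\Gamma^c}$ does not depend on $\Gamma$, so that one genuinely can merge all the cut-by-cut almost-sure statements into a single one without needing any new multipartite machinery.
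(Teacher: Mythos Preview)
Your proposal is correct and follows essentially the same approach as the paper: apply Theorem~\ref{thm:Schmidt} to each fixed cut $(\Gamma:\Gamma^c)$ with $(d_A,d_B)=(D_\Gamma,D_{\Gamma^c})$ to get a measure-zero bad set, then take the finite union over cuts. Your write-up is in fact more careful than the paper's (you spell out the upper bound from part~(i), the independence of the Haar measure from the chosen factorization, and the degenerate-cut bookkeeping), but the strategy is identical.
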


\begin{proof}
Following from Theorem~\ref{thm:Schmidt}, for a given $\emptyset \subsetneq \Gamma\subsetneq \{1,2,\cdots,N\}$, the set of quantum gates with bipartite entangling power $P_{min}^{(SR)}(U_{[\Gamma:\Gamma^c]})<\left\lceil \frac12\left(d_\Gamma+d_{\Gamma^c}-\sqrt{\left(d_\Gamma-d_{\Gamma^c}\right)^2+4\left(d_\Gamma+d_{\Gamma^c}\right)-8}\right)\right\rceil$ has measure zero in the unitary group.

Therefore, the set of quantum gates with bipartite entangling power $P_{min}^{(SR)}(U_{[\Gamma:\Gamma^c]})<\left\lceil \frac12\left(d_\Gamma+d_{\Gamma^c}-\sqrt{\left(d_\Gamma-d_{\Gamma^c}\right)^2+4\left(d_\Gamma+d_{\Gamma^c}\right)-8}\right)\right\rceil$ for some $\Gamma$ is a finite union of measure zero sets, hence it also has measure zero in $U(d_1\cdots d_N)$. This completes our proof.

\end{proof}

\subsection{Tensor Rank As Multipartite Entanglement Measure} \label{multipartite:tensor}
In the previous two subsections, we studied the minimum entangling power of quantum gates acting on multipartite quantum system which is defined as the minimum entanglement in any bipartite cut when the input is restricted to be a multipartite product state.
The minimum Schmidt rank is a natural way to quantify the entangling power of a multipartite quantum gate. Here we will introduce another kind of entanglement measure for multipartite quantum states, the tensor rank, which refers to the minimum number of product states needed to express a given multipartite quantum state. 

A multipartite quantum state is said to have tensor rank $r$ if it can be written as a linear combination of $r$ product states. 

Let's look into the state 
\begin{equation}
\ket{\Phi}=\ket{000}+\ket{001}+\ket{010}+\ket{100}.
\end{equation}

The tensor rank of $\ket{\Phi}$ is $3$. However, $\ket{\Phi}$ can be approximated as closely as one likes by a series of quantum states with tensor rank $2$, as consider:

\begin{equation}
\ket{\Phi(\epsilon)}=\frac{1}{\epsilon}((\epsilon-1)\ket{000}+(\ket{0}+\epsilon\ket{1})(\ket{0}+\epsilon\ket{1})(\ket{0}+\epsilon\ket{1})).
\end{equation}

A multipartite quantum state is said to have border rank $r$ if it can be written as the limit of tensor rank $r$ quantum states. Tensor rank and border rank are denoted by $\mathop{TR}$ and $\mathop{BR}$ respectively.

Note the set of multipartite quantum states of rank at most $r$ is not closed, and by definition the set of tensors of border rank at most $r$ is the Zariski closure of this set. 

These concepts are well studied in algebraic geometry. The $r$-th secant variety of Segre variety $\Sigma_{d_1,\cdots,d_N}$ is the Zariski closure of the union of the linear spanned by collections of $r+1$ points on Segre variety, denoted as $Sec_r(\Sigma_{d_1,\cdots,d_N})$. $Sec_r(\Sigma_{d_1,\cdots,d_N})$ is irreducible and consists of all multipartite states with border rank at most $\leq r+1$.

By taking as entanglement measure the tensor rank and border rank here, we define the corresponding entangling powers respectively as the following:
\begin{align}
P_{min}^{(TR)}(U)&=\min\limits_{\ket{\psi_1}\in \mathcal{H}_1, \cdots,\ket{\psi_N}\in \mathcal{H}_N} \mathop{TR}(U\ket{\psi_1\otimes\cdots\otimes\psi_N}),\\
P_{min}^{(BR)}(U)&=\min\limits_{\ket{\psi_1}\in \mathcal{H}_1, \cdots,\ket{\psi_N}\in \mathcal{H}_N} \mathop{BR}(U\ket{\psi_1\otimes\cdots\otimes\psi_N}).
\end{align}

It is easy to observe that
\begin{lemma}
$P_{min}^{(TR)}(U)\geq P_{min}^{(BR)}(U)$ for any $U\in \mathcal{U}(d_1\cdots d_N)$.
\end{lemma}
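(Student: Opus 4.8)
The plan is to reduce the claim to the pointwise inequality $\mathop{BR}(\ket{\phi})\leq\mathop{TR}(\ket{\phi})$, valid for every multipartite state $\ket{\phi}$, and then to check that passing to the minimum over the common set of product inputs preserves this inequality.

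First I would recall the definitions just introduced: a state has tensor rank $\leq r$ if it is a linear combination of $r$ product states, and it has border rank $\leq r$ if it is a limit of states of tensor rank $\leq r$, i.e.\ if it lies in the Zariski closure of the set of tensor-rank-$\leq r$ states. Since any set is contained in its own closure, every state of tensor rank $\leq r$ has border rank $\leq r$; taking $r=\mathop{TR}(\ket{\phi})$ yields $\mathop{BR}(\ket{\phi})\leq\mathop{TR}(\ket{\phi})$ for all $\ket{\phi}$. In the secant-variety language of the excerpt, a state of tensor rank $\leq r$ lies on the (non-closed) locus whose Zariski closure is $Sec_{r-1}(\Sigma_{d_1,\cdots,d_N})$, which is exactly the set of states of border rank $\leq r$; hence the former is contained in the latter.

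Next I would apply this pointwise and take minima. For every product input $\ket{\psi_1\otimes\cdots\otimes\psi_N}$ with $\ket{\psi_i}\in\mathcal{H}_i$, the state $\ket{\phi}=U\ket{\psi_1\otimes\cdots\otimes\psi_N}$ satisfies $\mathop{BR}(\ket{\phi})\leq\mathop{TR}(\ket{\phi})$. Choosing a product input $\ket{\psi_1^\star\otimes\cdots\otimes\psi_N^\star}$ that attains the minimum in the definition of $P_{min}^{(TR)}(U)$, we get
\begin{eqnarray}
P_{min}^{(TR)}(U)=\mathop{TR}(U\ket{\psi_1^\star\otimes\cdots\otimes\psi_N^\star})\geq\mathop{BR}(U\ket{\psi_1^\star\otimes\cdots\otimes\psi_N^\star})\geq P_{min}^{(BR)}(U),
\end{eqnarray}
where the last inequality holds because $P_{min}^{(BR)}(U)$ is by definition the minimum of $\mathop{BR}(U(\,\cdot\,))$ over the same family of product inputs. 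This is exactly the asserted inequality.

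The only genuinely delicate point — and it is a very small one — is the direction of monotonicity under the minimum: from $\mathop{BR}\leq\mathop{TR}$ holding everywhere on a common domain one concludes $\min\mathop{BR}\leq\min\mathop{TR}$ (evaluate $\mathop{BR}$ at the minimizer of $\mathop{TR}$), not the reverse. Apart from this bookkeeping, the statement is immediate from the definitions, so I do not anticipate any real obstacle.
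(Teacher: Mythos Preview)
Your argument is correct and is exactly the natural justification: the pointwise inequality $\mathop{BR}\leq\mathop{TR}$ (a set is contained in its closure) passed through the minimum over the common domain of product inputs. The paper itself treats this lemma as an immediate observation and gives no proof, so your write-up simply spells out the one-line reasoning the authors left implicit.
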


Hence we will first look into $P_{min}^{(BR)}$, the entangling power with respect to the border rank.

We have $P_{min}^{(BR)}(U)\geq r+2$ if and only if $U(\Sigma_{d_1,\cdots,d_N})\bigcap Sec_r(\Sigma_{d_1,\cdots,d_N})=\emptyset$.

\begin{theorem}
There is some quantum gate $U$ with entangling power
\begin{equation}
P_{min}^{(BR)}(U)\geq r+2
\end{equation}
if and only if
\begin{equation}
\dim Sec_r(\Sigma_{d_1,\cdots,d_N})< d_0-1-\sum\limits_{i=1}^N (d_i-1).
\end{equation}
\end{theorem}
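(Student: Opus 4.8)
The plan is to establish the two implications separately. Both rest on the criterion stated immediately above the theorem, namely that $P_{min}^{(BR)}(U)\geq r+2$ holds exactly when $U(\Sigma_{d_1,\cdots,d_N})\cap Sec_r(\Sigma_{d_1,\cdots,d_N})=\emptyset$, together with two standing facts: the Segre variety $\Sigma_{d_1,\cdots,d_N}\subset\mathbb{P}^{\prod_{i=1}^N d_i-1}$ is irreducible of dimension $\sum_{i=1}^N(d_i-1)$ (it is the Segre image of $\mathbb{P}^{d_1-1}\times\cdots\times\mathbb{P}^{d_N-1}$), and every $U\in\mathcal{U}(\prod_{i=1}^N d_i)$ induces an automorphism of $\mathbb{P}^{\prod_{i=1}^N d_i-1}$, so $U(\Sigma_{d_1,\cdots,d_N})$ is again an irreducible variety of dimension $\sum_{i=1}^N(d_i-1)$. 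By the earlier discussion $Sec_r(\Sigma_{d_1,\cdots,d_N})$ is itself an irreducible projective variety.

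\emph{Forward direction.} Suppose some gate $U$ satisfies $P_{min}^{(BR)}(U)\geq r+2$, so that $U(\Sigma_{d_1,\cdots,d_N})$ and $Sec_r(\Sigma_{d_1,\cdots,d_N})$ are disjoint irreducible subvarieties of $\mathbb{P}^{\prod_i d_i-1}$. Apply the second assertion of Theorem~\ref{lemma:dim}: if the sum of their dimensions were at least $\prod_i d_i-1$, the intersection would be nonempty. Hence $\sum_i(d_i-1)+\dim Sec_r(\Sigma_{d_1,\cdots,d_N})<\prod_i d_i-1$, which rearranges to exactly $\dim Sec_r(\Sigma_{d_1,\cdots,d_N})<\prod_i d_i-1-\sum_i(d_i-1)$.

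\emph{Converse direction.} Assume the dimension inequality and write $N'=\prod_i d_i$. Following the third step of the proof of Theorem~\ref{thm:Schmidt}, consider the incidence locus $W=\{(\Phi,p):\Phi\in GL(N'),\ p\in\Sigma_{d_1,\cdots,d_N},\ \Phi(p)\in Sec_r(\Sigma_{d_1,\cdots,d_N})\}$ together with its two projections. Projecting to $\Sigma_{d_1,\cdots,d_N}$, the fiber over a fixed $p$ is the set of $\Phi$ with $\Phi(p)\in Sec_r(\Sigma_{d_1,\cdots,d_N})$; since $\Phi\mapsto\Phi(p)$ is a surjective map onto $\mathbb{P}^{N'-1}$ whose fibers are all cosets of the stabilizer of $[p]$ and hence of dimension $\dim GL(N')-(N'-1)$, this fiber has dimension $\dim Sec_r(\Sigma_{d_1,\cdots,d_N})+\dim GL(N')-(N'-1)$, and therefore $\dim W=\sum_i(d_i-1)+\dim Sec_r(\Sigma_{d_1,\cdots,d_N})+\dim GL(N')-(N'-1)$. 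The dimension hypothesis makes this strictly less than $\dim GL(N')$. The other projection sends $W$ onto the ``bad'' locus $X=\{\Phi\in GL(N'):\Phi(\Sigma_{d_1,\cdots,d_N})\cap Sec_r(\Sigma_{d_1,\cdots,d_N})\neq\emptyset\}$, so $\dim\overline{X}\leq\dim W<\dim GL(N')$ and $\overline{X}$ is a proper Zariski-closed subset of $GL(N')$. Since $\mathcal{U}(N')$ is Zariski-dense in $GL(N')$ (as used in the proof of Theorem~\ref{thm:Schmidt}), it cannot be contained in $\overline{X}$; any $U\in\mathcal{U}(N')\setminus\overline{X}$ then satisfies $U(\Sigma_{d_1,\cdots,d_N})\cap Sec_r(\Sigma_{d_1,\cdots,d_N})=\emptyset$, i.e.\ $P_{min}^{(BR)}(U)\geq r+2$.

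The only genuinely technical ingredient is the dimension bound for $\overline{X}$ in the converse; it is the direct analogue of the estimate carried out in Appendix~\ref{appendix:closure} for the determinantal case, and the adaptation is essentially formal since the secant variety enters only through the single quantity $\dim Sec_r(\Sigma_{d_1,\cdots,d_N})$. I would also note one degenerate scenario for safety: if $Sec_r(\Sigma_{d_1,\cdots,d_N})$ already fills $\mathbb{P}^{N'-1}$, then the hypothesis of the theorem cannot hold, and neither can its conclusion, so the equivalence remains (vacuously) valid there.
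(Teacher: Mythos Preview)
Your argument is correct and follows essentially the same route as the paper: the forward direction is the contrapositive of the Projective Dimension Theorem, and the converse is the incidence-correspondence dimension count of Appendix~\ref{appendix:closure} with the secant variety in place of the determinantal variety, combined with the Zariski density of $\mathcal{U}(N')$ in $GL(N')$. Your explicit description of the incidence locus $W$ and its two projections is exactly the mechanism behind the bound $\dim\overline{X}\le (N')^2-(N'-1)+\dim\Sigma_{d_1,\cdots,d_N}+\dim Sec_r(\Sigma_{d_1,\cdots,d_N})$ that the paper invokes.
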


\begin{proof}
We first look into the ``only if" part. Assume $P_{min}^{(BR)}(U)\geq r+2$. If $\dim Sec_r(\Sigma_{d_1,\cdots,d_N})\geq d_0-1-\sum\limits_{i=1}^N (d_i-1)$, or equivalently, $\dim U(\Sigma_{d_1,\cdots,d_N})+\dim Sec_r(\Sigma_{d_1,\cdots,d_N})\geq d_0-1$, follows from Theorem~\ref{lemma:dim}, $U(\Sigma_{d_1,\cdots,d_N})\bigcap Sec_r(\Sigma_{d_1,\cdots,d_N})\neq \emptyset$. Therefore, $P_{min}^{(BR)}(U)\leq r+1$. It's a contradiction, which proves the ``only if" part.

On the other hand, assume $\dim Sec_r(\Sigma_{d_1,\cdots,d_N})< d_0-1-\sum\limits_{i=1}^N (d_i-1)$.  Let's define
\begin{equation}
X=\left\{\Phi\in GL(d_0): \Phi(\Sigma_{d_1,\cdots,d_N})\bigcap Sec_r(\Sigma_{d_1,\cdots,d_N})\neq \emptyset\right\}.
\end{equation}
By following the lines of proof in Appendix~\ref{appendix:closure}, we will have
\begin{equation}
\dim \overline{X}\leq d_0^2-d_0+1+\dim Sec_r(\Sigma_{d_1,\cdots,d_N})+\dim \Sigma_{d_1,\cdots,d_N}.
\end{equation}

Hence, assume $P_{min}^{(BR)}(U)\geq r+2$, which follows $\mathcal{U}(d_0)\subseteq X$. Again,
\begin{equation}
\dim \overline{\mathcal{U}(d_0)}\leq \dim \overline{X}\leq  d_0^2-d_0+1+\dim Sec_r(\Sigma_{d_1,\cdots,d_N})+\dim \Sigma_{d_1,\cdots,d_N}.
\end{equation}
Recall the Zariski closure of the unitary group is exactly the general linear group. Hence
\begin{equation}
d_0^2\leq  d_0^2-d_0+1+\dim Sec_r(\Sigma_{d_1,\cdots,d_N})+\dim \Sigma_{d_1,\cdots,d_N}<d_0^2.
\end{equation}

It's a contradiction. Hence there does exist some quantum gate $U$ such that $P_{min}^{(BR)}(U)\geq r+2$.
\end{proof}

\begin{cor} \label{cor:tr}
Assume $r$ is the largest integer satisfying 
\begin{equation}
\dim Sec_r(\Sigma_{d_1,\cdots,d_N})< d_0-1-\sum\limits_{i=1}^N (d_i-1),
\end{equation}
then a random quantum gate acting on $\mathcal{H}_1\otimes \cdots\otimes \mathcal{H}_N$ will almost surely have minimum entangling power $P_{min}^{(BR)}=r+2$ and $P_{min}^{(TR)}\geq r+2$.
\end{cor}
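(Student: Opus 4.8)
The plan is to deduce the corollary from the preceding theorem together with a genericity argument of exactly the same flavour as the third step in the proof of Theorem~\ref{thm:Schmidt}. First I would pin down the extremal value. Since $r$ is the \emph{largest} integer with $\dim Sec_r(\Sigma_{d_1,\cdots,d_N})<\prod_{i=1}^N d_i-1-\sum_{i=1}^N(d_i-1)$, the index $r+1$ fails this inequality, so $\dim U(\Sigma_{d_1,\cdots,d_N})+\dim Sec_{r+1}(\Sigma_{d_1,\cdots,d_N})\geq \prod_{i=1}^N d_i-1=\dim\mathbb{P}^{\prod_{i=1}^N d_i-1}$ for every $U$ (using $\dim\Sigma_{d_1,\cdots,d_N}=\sum_{i=1}^N(d_i-1)$ and the fact that $U$ preserves dimension). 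By Theorem~\ref{lemma:dim}, $U(\Sigma_{d_1,\cdots,d_N})\cap Sec_{r+1}(\Sigma_{d_1,\cdots,d_N})\neq\emptyset$, so $P_{min}^{(BR)}(U)\leq r+2$ for \emph{every} $U$; and the preceding theorem guarantees the existence of some $U$ with $P_{min}^{(BR)}(U)\geq r+2$. Hence $r+2$ is the maximal possible value.

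Next I would show that $P_{min}^{(BR)}(U)=r+2$ holds for almost every $U$. Put $X=\{\Phi\in GL(\prod_{i=1}^N d_i):\Phi(\Sigma_{d_1,\cdots,d_N})\cap Sec_r(\Sigma_{d_1,\cdots,d_N})\neq\emptyset\}$, the set of gates with $P_{min}^{(BR)}<r+2$. By the argument of Appendix~\ref{appendix:closure} (the same one invoked in the proof of the preceding theorem), $\dim\overline{X}\leq(\prod_{i=1}^N d_i)^2-\prod_{i=1}^N d_i+1+\dim Sec_r(\Sigma_{d_1,\cdots,d_N})+\dim\Sigma_{d_1,\cdots,d_N}$, and the hypothesis on $r$ makes the right-hand side strictly smaller than $(\prod_{i=1}^N d_i)^2=\dim GL(\prod_{i=1}^N d_i)$. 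Thus $\overline{X}$ is a proper Zariski-closed subset; being Noetherian, it is a finite union of subvarieties of $GL(\prod_{i=1}^N d_i)$ of strictly smaller dimension, so $\overline{X}\cap\mathcal{U}(\prod_{i=1}^N d_i)$ is a finite union of submanifolds of $\mathcal{U}(\prod_{i=1}^N d_i)$ of lower dimension. Morse--Sard then gives that $X\cap\mathcal{U}(\prod_{i=1}^N d_i)$ has Haar measure zero, exactly as in Theorem~\ref{thm:Schmidt}. Consequently a Haar-random gate almost surely lies outside $X$, i.e.\ has $P_{min}^{(BR)}(U)\geq r+2$, and combined with the uniform upper bound, $P_{min}^{(BR)}(U)=r+2$ almost surely.

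Finally, since $P_{min}^{(TR)}(U)\geq P_{min}^{(BR)}(U)$ for every $U$, the conclusion $P_{min}^{(TR)}(U)\geq r+2$ almost surely is immediate. The only genuinely technical point is the dimension bound on $\overline{X}$, which is deferred to Appendix~\ref{appendix:closure}; everything else is a direct repackaging of the Projective Dimension Theorem and the Noetherian/Morse--Sard measure-zero machinery already set up for Theorem~\ref{thm:Schmidt}. A minor subtlety worth keeping an eye on is that $Sec_r(\Sigma_{d_1,\cdots,d_N})$ and $\Sigma_{d_1,\cdots,d_N}$ must be used as the honest (irreducible, closed) projective varieties, so that Theorem~\ref{lemma:dim} applies and so that the value $\dim\Sigma_{d_1,\cdots,d_N}=\sum_{i=1}^N(d_i-1)$ is the one entering both the extremality argument and the dimension count.
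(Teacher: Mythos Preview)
Your proposal is correct and follows exactly the approach the paper intends: the corollary is stated without explicit proof, but it is meant to follow from the preceding theorem together with the measure-zero machinery of the third step in Theorem~\ref{thm:Schmidt}, and you have reconstructed precisely that argument. The uniform upper bound via Theorem~\ref{lemma:dim}, the dimension bound on $\overline{X}$ from Appendix~\ref{appendix:closure}, the Noetherian/Morse--Sard reduction to Haar-measure zero, and the final inequality $P_{min}^{(TR)}\geq P_{min}^{(BR)}$ are all exactly the ingredients the paper has set up for this purpose.
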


\begin{cor}
For the case $N=2$, $Sec_r(\Sigma_{d_1,d_2})$ will coincide with the determinantal variety $\Sigma_{d_1,d_2}^{r+1}$. Hence $\dim Sec_r(\Sigma_{d_1,\cdots,d_N})=d_1d_2-(d_1-r-1)(d_2-r-1)-1$. One will recover Theorem~\ref{thm:Schmidt}.
\end{cor}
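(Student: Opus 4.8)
The plan is to reduce this corollary to the general theorem about $P_{min}^{(BR)}$ just proved, by computing the relevant dimensions explicitly in the bipartite case $N=2$. First I would recall that for two parties, the $r$-th secant variety of the Segre variety $\Sigma_{d_1,d_2}$ is a well-understood classical object: a general point of $Sec_r(\Sigma_{d_1,d_2})$ is a sum of $r+1$ rank-one $d_1\times d_2$ matrices, hence a matrix of rank at most $r+1$. In fact, because every matrix of rank $\le r+1$ is genuinely such a sum (no approximation needed), the secant variety is already closed and coincides exactly with the determinantal variety $\Sigma_{d_1,d_2}^{r+1}$. This is the one nontrivial input; it is the statement that border rank and tensor rank agree for bipartite tensors, i.e. that the set of matrices of bounded rank is Zariski closed. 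So the key identity to establish is
\begin{eqnarray}
Sec_r(\Sigma_{d_1,d_2}) = \Sigma_{d_1,d_2}^{r+1}.
\end{eqnarray}

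Second, with this identification in hand I would invoke the dimension formula for the projective determinantal variety stated earlier in the paper: the projective dimension of $\Sigma_{d_A,d_B}^r$ is $d_Ad_B-(d_A-r)(d_B-r)-1$. Substituting $r+1$ for $r$ and $(d_1,d_2)$ for $(d_A,d_B)$ gives immediately
\begin{eqnarray}
\dim Sec_r(\Sigma_{d_1,d_2}) = d_1d_2-(d_1-r-1)(d_2-r-1)-1,
\end{eqnarray}
which is precisely the asserted formula. Third, I would simply cite the preceding theorem (existence of a gate with $P_{min}^{(BR)}(U)\ge r+2$ iff $\dim Sec_r < \prod d_i - 1 - \sum(d_i-1)$) and its corollary on genericity; specializing to $N=2$, the condition $\dim Sec_r(\Sigma_{d_1,d_2}) < d_1d_2 - 1 - (d_1-1) - (d_2-1)$ becomes, after plugging in the dimension above, an inequality in $r$ whose largest admissible solution, together with the shift $P_{min}^{(SR)} = (\text{border rank}) $ in the bipartite case where Schmidt rank equals matrix rank equals tensor rank equals border rank, reproduces exactly the ceiling expression $\lceil \frac{d_A+d_B-\sqrt{(d_A-d_B)^2+4(d_A+d_B)-8}}{2}\rceil$ from Theorem~\ref{thm:Schmidt}. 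Unwinding the arithmetic: the threshold $s$ satisfying $\dim\Sigma_{d_1,d_2}^{s}$ just below $d_1d_2-1-(d_1-1)-(d_2-1)$ is governed by the quadratic $(d_1-s)(d_2-s) \ge d_1+d_2-1$, whose relevant root is $\frac{d_1+d_2-\sqrt{(d_1-d_2)^2+4(d_1+d_2)-8}}{2}$.

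The main obstacle — really the only one — is making the identification $Sec_r(\Sigma_{d_1,d_2}) = \Sigma_{d_1,d_2}^{r+1}$ airtight, i.e. checking that for matrices there is no gap between rank and border rank (equivalently, that the rank-$\le r+1$ locus is already Zariski closed, so taking closure does nothing). This is standard — it follows because the rank of a matrix is lower semicontinuous, so a limit of matrices of rank $\le r+1$ again has rank $\le r+1$, and conversely any such matrix is literally a sum of $r+1$ rank-one matrices — but it is the one place where one must be careful, since for $N\ge 3$ the analogous statement fails (as the $\ket{\Phi}$ example in the text shows). Once that is granted, the rest is a routine substitution into formulas already recorded in the paper, plus the observation that in the bipartite pure-state setting all the ranks under discussion collapse to the matrix rank, so the border-rank result and the Schmidt-rank result of Theorem~\ref{thm:Schmidt} are two phrasings of the same fact. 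I would therefore present the proof in two short moves: (i) the semicontinuity/closedness argument giving the secant-equals-determinantal identity and hence the dimension formula; (ii) the remark that specializing the preceding theorem recovers Theorem~\ref{thm:Schmidt} verbatim.
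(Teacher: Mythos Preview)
Your proposal is correct and is exactly the argument the paper has in mind; indeed the paper offers no proof at all for this corollary, so you are supplying the details that the authors leave implicit. The two moves you identify---(i) that for $N=2$ the secant variety $Sec_r(\Sigma_{d_1,d_2})$ equals the determinantal variety $\Sigma_{d_1,d_2}^{r+1}$ because matrix rank is lower semicontinuous (so the rank-$\le r{+}1$ locus is already Zariski closed and border rank equals rank), and (ii) substituting the known dimension of $\Sigma_{d_1,d_2}^{r+1}$ into the preceding theorem and Corollary~\ref{cor:tr}---are precisely what is needed.

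One small arithmetic slip to fix: in your last line the threshold inequality should read $(d_1-s)(d_2-s)\ge d_1+d_2-2$ (equivalently $>d_1+d_2-2$ over the integers), not $\ge d_1+d_2-1$. With the correct right-hand side the boundary quadratic $s^2-(d_1+d_2)s+d_1d_2-(d_1+d_2)+2=0$ has discriminant $(d_1-d_2)^2+4(d_1+d_2)-8$, matching the formula in Theorem~\ref{thm:Schmidt}; your version would give $-4$ in place of $-8$. This is just an off-by-one in tracking whether the running index is $r$, $r+1$, or $r_0=r+2$, and does not affect the structure of the argument.
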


So the only thing left is to calculate the dimension of secant variety of Segre variety. This mission can be accomplished by using Terracini's Lemma. We just describe some results here without proof.  We refer the reader to \cite{Catalisano:2011fu,Aladpoosh:2011ec,Catalisano:2003kx} for details.

The expected dimension of secant variety of Segre variety $Sec_r(\Sigma_{d_1,\cdots,d_N})$ is defined as $\min\left\{d_0-1, r\left(\sum\limits_{i=1}^N (d_i-1)+1\right)+\sum\limits_{i=1}^N (d_i-1)\right\}$. The expected dimension is an upper bound for $\dim Sec_r(\Sigma_{d_1,\cdots,d_N})$. If, in some cases, $\dim Sec_r(\Sigma_{d_1,\cdots,d_N})$ is strictly smaller than the expected dimension, then the Segre variety $\Sigma_{d_1,\cdots,d_N}$ is said to be defective.

\begin{theorem}[\cite{Catalisano:2011fu,Aladpoosh:2011ec,Catalisano:2003kx}]
For $N\geq 3$, $\dim Sec_r(\Sigma_{d_1,\cdots,d_N})=\min\left\{d_0-1, r\left(\sum\limits_{i=1}^N (d_i-1)+1\right)+\sum\limits_{i=1}^N (d_i-1)\right\}$ except for the following defective cases:
\begin{enumerate}
\item [a.] $3\otimes 3\otimes 3$;
\item [b.] $2\otimes 2\otimes d \otimes d$;
\item [c.] $3\otimes d \otimes d$ for odd $d$;
\item [d.] $3\otimes 4 \otimes 4$;
\item [e.] $d_1\otimes \cdots \otimes d_m \otimes d_{m+1}$ with $\mathop{\prod}\limits_{i=1}^m d_i+1-\sum\limits_{i=1}^m(d_i-1)< d_{m+1}$. 
\end{enumerate}

For other positive cases, a random quantum gate will almost surely have minimum entangling power $P_{min}^{(BR)}=\left\lceil \frac{d_0-\sum\limits_{i=1}^Nd_i+N}{\sum\limits_{i=1}^Nd_i-N+1}\right\rceil$ and $P_{min}^{(TR)}\geq \left\lceil \frac{d_0-\sum\limits_{i=1}^Nd_i+N}{\sum\limits_{i=1}^Nd_i-N+1}\right\rceil$.
\end{theorem}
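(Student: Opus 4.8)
The plan is to separate the statement into two layers. The first, and by far the deeper, layer is the dimension formula for $Sec_r(\Sigma_{d_1,\cdots,d_N})$ together with the exact list of defective cases; the second is the purely arithmetic passage from that formula to the stated values of $P_{min}^{(BR)}$ and $P_{min}^{(TR)}$ via Corollary~\ref{cor:tr}.

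For the first layer I would invoke Terracini's Lemma: for general points $q_0,\dots,q_r$ on the Segre variety and a general point $p$ of their linear span, $T_p Sec_r(\Sigma_{d_1,\cdots,d_N}) = \langle T_{q_0}\Sigma_{d_1,\cdots,d_N}, \dots, T_{q_r}\Sigma_{d_1,\cdots,d_N}\rangle$, so $\dim Sec_r(\Sigma_{d_1,\cdots,d_N})$ is the dimension of the linear span of $r+1$ general tangent spaces to $\Sigma_{d_1,\cdots,d_N}$. Each such tangent space has projective dimension $\sum_{i=1}^N (d_i-1)$, which yields the ``expected'' value $\min\{\prod_i d_i - 1,\ r(\sum_i (d_i-1)+1) + \sum_i(d_i-1)\}$, realized unless the $r+1$ tangent spaces are abnormally dependent. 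Deciding exactly when that fails is the content of the cited classification of Segre defectivity; I would take items (a)--(e) and the genericity of the complementary cases as an input from \cite{Catalisano:2011fu,Aladpoosh:2011ec,Catalisano:2003kx,Catalisano:2002ut}, since a self-contained proof needs the full inductive specialization (``m\'ethode d'Horace'') machinery and is outside the scope here.

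Granting the formula, the second layer is a short computation. Set $p = \prod_{i=1}^N d_i$ and $s = \sum_{i=1}^N d_i$, so $\sum_i(d_i-1) = s - N$. By Corollary~\ref{cor:tr}, for a random gate $P_{min}^{(BR)} = r_0 + 2$, where $r_0$ is the largest integer with $\dim Sec_{r_0}(\Sigma_{d_1,\cdots,d_N}) < p - 1 - (s-N)$. Since $s - N = \sum_i(d_i-1) > 0$, the target $p-1-(s-N)$ is strictly below the ambient dimension $p-1$, so at the critical $r$ the $\min$ in the dimension formula is never attained by the $p-1$ branch; in the non-defective case it therefore reads $\dim Sec_r = r(s-N+1) + (s-N)$, and the defining inequality becomes $r(s-N+1)+(s-N) < p-1-(s-N)$, i.e. $r < \frac{p-1-2(s-N)}{s-N+1}$. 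Hence $r_0 = \left\lceil \frac{p-1-2(s-N)}{s-N+1}\right\rceil - 1$, and using $\lceil x\rceil + 1 = \lceil x+1\rceil$,
\begin{equation}
P_{min}^{(BR)} = r_0 + 2 = \left\lceil \frac{p-1-2(s-N)}{s-N+1}\right\rceil + 1 = \left\lceil \frac{p-(s-N)}{s-N+1}\right\rceil = \left\lceil \frac{\prod_{i=1}^N d_i - \sum_{i=1}^N d_i + N}{\sum_{i=1}^N d_i - N + 1}\right\rceil .
\end{equation}
Finally $P_{min}^{(TR)} \geq P_{min}^{(BR)}$ is immediate from the Lemma asserting $P_{min}^{(TR)}(U) \geq P_{min}^{(BR)}(U)$ for all $U$.

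Apart from quoting the classification, the step I expect to demand the most care is the boundary bookkeeping: verifying that $r_0$ as computed actually lies in the range where the non-defective formula $\dim Sec_{r_0} = r_0(s-N+1)+(s-N)$ applies (rather than being truncated to $p-1$), and checking that none of the excluded tuples (a)--(e) re-enters the ``positive'' regime for the relevant $r_0$ --- in those tuples the true $\dim Sec_{r_0}$ is smaller, so $r_0$, and hence the displayed value of $P_{min}^{(BR)}$, genuinely changes. These are finite checks, but they are where an error would hide; the manipulation of ceilings itself is routine.
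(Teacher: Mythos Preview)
Your proposal is correct and matches the paper's approach exactly: the paper explicitly states ``We just describe some results here without proof'' and cites \cite{Catalisano:2011fu,Aladpoosh:2011ec,Catalisano:2003kx,Catalisano:2002ut} for the dimension formula and defectivity list, then intends the entangling-power conclusion to follow from Corollary~\ref{cor:tr}. In fact you supply more than the paper does --- the explicit arithmetic passage from the inequality $\dim Sec_r < \prod_i d_i - 1 - \sum_i(d_i-1)$ to the ceiling formula is not written out in the paper at all, and your computation is correct.
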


\begin{example}
Consider the $N$-qubit quantum system $\mathcal{H}^{\otimes N}$ where $\dim \mathcal{H}=2$ and $N\geq 5$, $\dim Sec_r(\mathcal{H}^{\otimes N})=\min\{2^N-1,r(N+1)+N\}$. By applying Corollary~\ref{cor:tr}, a random quantum gate acting on this system will almost surely have minimum entangling power $P_{min}^{(BR)}=\left\lceil \frac{2^N-N}{N+1}\right\rceil$ and $P_{min}^{(TR)}\geq \left\lceil \frac{2^N-N}{N+1} \right\rceil$.
\end{example}

\section{Conclusion and Open Problems}\label{sec:conclusion}

Given a quantum gate $U$ acting on a bipartite quantum system, its maximum (average, minimum) entangling power is defined as the maximum (average, minimum) entanglement generated when $U$ is restricted to be acting on product states. Such family of entangling powers will not only be practically crucial in quantifying the amount of entanglement that can be generated by quantum gates, but also theoretically important in capturing the nonlocality of quantum dynamic operations. In this paper, we mainly focus on the ``weakest" one among all these entangling powers. 

Let us summarize the various situations we know
\begin{enumerate}
  \item [1.] Given a bipartite quantum system $\mathbb{C}^{d_A}\otimes \mathbb{C}^{d_B}$, a quantum gate $U\in \mathcal{U}(d_A\cdots d_B)$ is chosen randomly according to the Haar measure. The minimum entangling power with respect to entanglement measure $f$ is defined as the following
  \begin{equation}
P_{min}^{(f)}(U)=\min\limits_{\ket{\alpha}\in \mathbb{C}^{d_A}, \ket{\beta}\in \mathbb{C}^{d_B}} f(U\ket{\alpha\beta}).
\end{equation}
  $U$ will almost surely map every product state to a near-maximally entangled state in the following three senses.
  \begin{enumerate}[label*=\arabic*.]
    \item [1.1] For any $\delta>0$, 
\begin{equation}
\mu\left(\left|P_{min}^{(S_v)}(U) -\int\limits_{\mathcal{SU}(d_Ad_B)}P_{min}^{(S_v)}(U)d\mu(U)\right|\geq \delta\right)\leq 2 e^{-\frac{d_Ad_B\delta^2}{32 (\log d_A)^2}}.
\end{equation}
Here, $\int\limits_{\mathcal{SU}(d_Ad_B)}P_{min}^{(S_v)}(U)d\mu(U)\geq \log d_A-\frac{d_A}{d_B \ln 2}-1$.
    \item [1.2] For any $\delta > 0$,
\begin{equation}
\mu\left(\left|P_{min}^{(f)}(U) -\int\limits_{\mathcal{SU}(d_Ad_B)}P_{min}^{(f)}(U)d\mu(U)\right|\geq \delta\right)< 2 e^{-\frac{d_Ad_B\delta^2}{4 |f|_L^2}}
\end{equation}
for any Lipschitz-continuous entanglement measure $f$.
    \item [1.3] $P_{min}^{(SR)}(U)=\left\lceil \frac{d_A+d_B-\sqrt{(d_A-d_B)^2+4(d_A+d_B)-8}}{2}\right\rceil$ for a generic quantum gate $U$.
  \end{enumerate}
  \item [2.] Given a multipartite quantum system $\mathcal{H}_1\otimes \mathcal{H}_2\otimes \cdots \otimes \mathcal{H}_N(N\geq 3)$ where $\mathcal{H}_1,\mathcal{H}_2,\cdots,\mathcal{H}_N$ are Hilbert spaces with dimensions $d_1\leq d_2\leq \cdots \leq d_N$ respectively, a random gate acting on this system will almost surely map every product state to a genuine entangled state if $\min\limits_{1\leq i\leq N}\dim \mathcal{H}_i\geq 3$. A quantum gate $U\in \mathcal{U}(d_1\cdots d_N)$ is chosen randomly according to the Haar measure. The minimum entangling power with respect to bipartite entanglement measure $f$ is defined as the following
\begin{equation}
P_{min}^{(f)}(U_{1:2:\cdots:N})=\min\limits_{\ket{\psi}_i\in \mathcal{H}_i, i=1,2,\cdots, N} \left (\min\limits_{\emptyset \subsetneq \Gamma\subsetneq \{1,2,\cdots,N\}}  f_{\Gamma:\Gamma^c}(U(\ket{\psi}_1\otimes \ket{\psi}_2\otimes \cdots \otimes \ket{\psi}_N))\right).
\end{equation}
If $f$ is a multipartite entanglement measure, then the minimum entangling power with respect to $f$ is defined as usual.
\begin{equation}
P_{min}^{(f)}(U_{1:2:\cdots:N})=\min\limits_{\ket{\alpha_1}\in \mathcal{H}_1, \ket{\alpha_2}\in \mathcal{H}_2, \cdots, \ket{\alpha_N}\in \mathcal{H}_N} f(U\ket{\alpha_1\alpha_2\cdots \alpha_N}).
\end{equation}

$U$ will almost surely map every product state to a near-maximally entangled state in the following three senses. 
  \begin{enumerate}[label*=\arabic*.]
  \item[2.1] For any $\delta>0$,   
\begin{equation}
\mu\left(P_{min}^{(S_v)}(U) >\log d_1-\frac{ d_1}{ \prod\limits_{i=2}^Nd_i \ln 2}-1-\delta \right)\geq 1- 2 \sum\limits_{\emptyset \subsetneq \Gamma\subsetneq \{1,2,\cdots,N\}\atop s.t. d_\Gamma\leq  d_{\Gamma^c}}e^{-\frac{d_0\delta^2}{32(\log d_\Gamma)^2}},
\end{equation}
where $d_0 = \prod\limits_{i=1}^N d_i$, and $d_{\Gamma} = \prod\limits_{i\in \Gamma}d_i$.
 \item [2.2] $P_{min}^{(SR)}(U)=\left\lceil \frac12\left(d_1+\prod\limits_{i=2}^N d_i-\sqrt{(d_1-\prod\limits_{i=2}^N d_i)^2+4(d_1+\prod\limits_{i=2}^N d_i)-8}\right)\right\rceil$.
 \item [2.3] $P_{min}^{(TR)}(U)\geq \left\lceil \frac{d_0-\sum\limits_{i=1}^Nd_i+N}{\sum\limits_{i=1}^Nd_i-N+1}\right\rceil$ except for some degenerate cases.
 \end{enumerate} 
\end{enumerate}

To summarize, we show that, for most quantum gates, even the ``weakest" entangling power is very close to its maximum possible value. In other words, the maximum, average and minimum entangling powers are close for almost every quantum gates. Our results provides a step towards a better understanding of the nonlocality of quantum dynamics.

As a straightforward application, a random quantum gate will almost surely be an intrinsically fault-tolerant entangling device which will always transform every product state to near-maximally entangled state. Furthermore, our methods can be partially modified to prove that a random quantum gate will also transform every low-entangled state to highly-entangled state. 

We showed that the minimum entangling power is generically large. One may pick up a quantum gate randomly and then verify whether it has non-vanishing minimum entangling power. However, the verification of nonvanishing minimum entangling power is equivalent to the computation of a Gr\"{o}bner basis which may require time that is exponential or even doubly exponential in the number of solutions of the polynomial system in the worst case. To construct an explicit family of such gates in a very simple form is still worth exploring.



\bigskip

\subsection*{Acknowledgements}
DWK is supported by NSERC Discovery Grant 400160, NSERC Discovery Accelerator Supplement 400233 and Ontario Early Researcher Award 048142. BZ is supported by NSERC Discovery Grant 400500 and CIFAR. 
Part of this work was done when JC was a PhD student with Prof. Mingsheng Ying in Tsinghua University. We thank Mingsheng Ying, Jun Yu, Vladim\'{i}r Bu\v{z}ek, Nathaniel Johnston, Ashwin Nayak and Debbie Leung for very delightful discussions.  We thank Andreas Winter for the delight discussion for the Schmidt rank case.

\bibliographystyle{abbrv}
\bibliography{UEX}

\appendix

\section{Proof of Corollary~\ref{cor:lower}}\label{appendix:medium}
Following from Theorem~\ref{thm:concentrate_1}, for any $\lambda>0$, we have
\begin{equation}
\mu\left(P_{min}^{(S_v)}(U)\geq \log d_A-\lambda-\frac{d_A}{d_B \ln 2}\right)\geq 1-  \left(\frac{20\sqrt{2}\log d_A}{\lambda}\right)^{2d_A+2d_B} \exp\left(-\frac{(d_Ad_B-1)\lambda^2}{32\pi^2\ln 2(\log d_A)^2}\right)
\end{equation}
where $U$ is uniformly chosen at random from $\mathcal{U}(d_Ad_B)$ according to the Haar measure and $d_B\geq d_A\geq 3$.

\begin{cor}\label{cor:lower}
$\int\limits_{\mathcal{SU}(d_Ad_B)}P_{min}^{(S_v)}(U)d\mu(U)=\int\limits_{\mathcal{SU}(d_Ad_B)}\min\limits_{\ket{\alpha}, \ket{\beta}} S(\tr_A(U(\ket{\alpha}\otimes\ket{\beta}))) d\mu(U)\geq \log d_A-\beta-1$. Here $\beta=\frac{d_A}{d_B \ln 2}$.
\end{cor}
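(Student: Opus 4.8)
The plan is to integrate the tail estimate of Theorem~\ref{thm:concentrate_1} against the distribution of $P_{min}^{(S_v)}(U)$. Write $X=P_{min}^{(S_v)}(U)=\min_{\ket{\alpha},\ket{\beta}}S(\tr_A(U(\ket{\alpha}\otimes\ket{\beta})))$; since the two reductions of a bipartite pure state share the same non-zero spectrum, this is the usual $P_{min}^{(S_v)}$, and one always has $0\le X\le\log d_A$. Because $P_{min}^{(S_v)}(e^{i\theta}U)=P_{min}^{(S_v)}(U)$, the law of $X$ is the same whether $U$ is Haar-distributed on $\mathcal{U}(d_Ad_B)$ or on $\mathcal{SU}(d_Ad_B)$, so Theorem~\ref{thm:concentrate_1} gives, for every $\lambda>0$,
\[
\mu\bigl(X<\log d_A-\lambda-\beta\bigr)<g(\lambda),\qquad
g(\lambda):=\Bigl(\tfrac{20\sqrt{2}\log d_A}{\lambda}\Bigr)^{2(d_A+d_B)}\exp\!\Bigl(-\tfrac{(d_Ad_B-1)\lambda^{2}}{32\pi^{2}\ln 2\,(\log d_A)^{2}}\Bigr).
\]
If $\log d_A-\beta-1\le 0$ the corollary is trivial, so I work in the regime $d_B\ge d_A\ge 3$ with $\min(d_A,d_B)$ large enough that $\log d_A-\beta-1>0$ and, as in Remark~\ref{remark:3933}, that $g(\lambda)<1$ for a suitable $\lambda$.

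First I would record two consequences of the tail bound: the crude one $\mathbb{E}[X]\ge(\log d_A-\lambda-\beta)(1-g(\lambda))$ for any admissible $\lambda$, and the sharper layer-cake bound. For the latter, $\mathbb{E}[X]=\int_0^\infty\mu(X\ge t)\,dt\ge\int_0^{\log d_A-\beta}\mu(X\ge t)\,dt$, and the substitution $\lambda=\log d_A-\beta-t$ together with $\mu(X\ge t)\ge(1-g(\lambda))_+$ turns this into $\int_0^{\log d_A-\beta}(1-g(\lambda))_+\,d\lambda$. A one-line computation gives $(\ln g)'(\lambda)=-2(d_A+d_B)/\lambda-(d_Ad_B-1)\lambda/(16\pi^{2}\ln 2\,(\log d_A)^{2})<0$, so $g$ decreases strictly from $+\infty$ to $0$ and has a unique root $\lambda^{\ast}$ of $g(\lambda^{\ast})=1$; consequently
\[
\mathbb{E}[X]\ \ge\ (\log d_A-\beta-\lambda^{\ast})-\int_{\lambda^{\ast}}^{\infty}g(\lambda)\,d\lambda
\]
once $\lambda^{\ast}<\log d_A-\beta$. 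Hence it suffices to prove $\lambda^{\ast}+\int_{\lambda^{\ast}}^{\infty}g(\lambda)\,d\lambda\le 1$.

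To finish I would estimate the two terms. For $\lambda\ge\lambda^{\ast}$ one has $(\lambda^{\ast}/\lambda)^{2(d_A+d_B)}\le1$ and $\lambda^{2}-(\lambda^{\ast})^{2}\ge(\lambda-\lambda^{\ast})^{2}$, so $g(\lambda)\le g(\lambda^{\ast})e^{-a(\lambda-\lambda^{\ast})^{2}}=e^{-a(\lambda-\lambda^{\ast})^{2}}$ with $a=(d_Ad_B-1)/(32\pi^{2}\ln 2\,(\log d_A)^{2})$, giving $\int_{\lambda^{\ast}}^{\infty}g\le\tfrac12\sqrt{\pi/a}=\Theta\bigl(\log d_A/\sqrt{d_Ad_B}\bigr)\to0$. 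For $\lambda^{\ast}$, the defining relation $a(\lambda^{\ast})^{2}=2(d_A+d_B)\ln(20\sqrt{2}\log d_A/\lambda^{\ast})$ combined with $d_Ad_B/(d_A+d_B)\ge d_A/2$ (from $d_B\ge d_A$) yields $(\lambda^{\ast})^{2}=O\bigl((\log d_A)^{2}\log\log d_A/d_A\bigr)\to0$, in particular $\lambda^{\ast}<\log d_A-\beta$ and $\lambda^{\ast}\le\tfrac12$ eventually. Thus for $\min(d_A,d_B)$ large enough $\lambda^{\ast}+\int_{\lambda^{\ast}}^{\infty}g\le1$, which is exactly $\mathbb{E}[X]\ge\log d_A-\beta-1$.

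The one genuine obstacle is the familiar super-exponential-versus-polynomial race already present in Theorem~\ref{thm:concentrate_1}: one must be certain that the Gaussian factor $\exp(-\Theta(d_Ad_B/(\log d_A)^{2}))$ overwhelms the net-cardinality factor $(\log d_A/\lambda)^{\Theta(d_A+d_B)}$ near $\lambda=\lambda^{\ast}$, which pins the required size of $\min(d_A,d_B)$ to the same threshold (of order $3933$) identified there; everything else is elementary calculus. If one prefers to avoid locating $\lambda^{\ast}$, the crude bound with a single fixed $\lambda=\tfrac12$ works identically: it reduces the claim to $\tfrac12+(\log d_A)\,g(\tfrac12)\le1$, handled by the same estimate.
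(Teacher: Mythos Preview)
Your proof is correct and follows essentially the same route as the paper: both use the layer-cake formula $\mathbb{E}[X]=\int_0^\infty\mu(X\ge t)\,dt$, substitute $\lambda=\log d_A-\beta-t$, and feed in the tail bound of Theorem~\ref{thm:concentrate_1}, concluding for $\min(d_A,d_B)$ large enough. The only cosmetic differences are that the paper fixes the cutoff $c=1$ and bounds the residual integral by $g(1)\cdot(\log d_A-\beta-1)$, whereas you locate the sharp threshold $\lambda^\ast$ and bound the tail by a Gaussian integral (your alternative with fixed $\lambda=\tfrac12$ is exactly the paper's argument with $c=\tfrac12$).
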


\begin{proof}
For the integral of $P_{min}^{(S_v)}(U)$ over $\mathcal{SU}(d_Ad_B)$, 

\begin{equation}
\begin{split}
&\int\limits_{\mathcal{SU}(d_Ad_B)}P_{min}^{(S_v)}(U) d\mu(U)\\
={}& \int_{0}^{\infty} \mu\left(P_{min}^{(S_v)}(U)\geq t\right)dt\\
\geq {}&\int_{c}^{\log d_A-\beta} \left(1-  \left(\frac{20\sqrt{2}\log d_A}{\log d_A-\beta-t}\right)^{2d_A+2d_B} \exp\left(-\frac{(d_Ad_B-1)(\log d_A-\beta-t)^2}{32\pi^2\ln 2(\log d_A)^2}\right)\right)dt\\
={}& \log d_A-\beta-c-\int_{c}^{\log d_A-\beta}\left(\frac{20\sqrt{2}\log d_A}{\log d_A-\beta-t}\right)^{2d_A+2d_B} \exp\left(-\frac{(d_Ad_B-1)(\log d_A-\beta-t)^2}{32\pi^2\ln 2(\log d_A)^2}\right)dt\\
={}& \log d_A-\beta-c-\left(20\sqrt{2}\log d_A\right)^{2(d_A+d_B)}\int_{c}^{\log d_A-\beta}t^{-2(d_A+d_B)} \exp\left(-\frac{(d_Ad_B-1)t^2}{32\pi^2\ln 2(\log d_A)^2}\right)dt.\label{eq:central}
\end{split}
\end{equation}

Here $c\in (0,\log d_A-\beta)$ is a feasible solution to $\left(\frac{20\sqrt{2}\log d_A}{c}\right)^{2d_A+2d_B} \exp\left(-\frac{(d_Ad_B-1)c^2}{32\pi^2\ln 2(\log d_A)^2}\right)\leq 1$.

Observe that $g(t)=t^{-2(d_A+d_B)} \exp\left(-\frac{(d_Ad_B-1)t^2}{32\pi^2\ln 2(\log d_A)^2}\right)$ is a monotonically decreasing function over $(0,\log d_A-\beta)$, therefore
\begin{equation}
\begin{split}
&\left|\left(20\sqrt{2}\log d_A\right)^{2(d_A+d_B)}\int_{c}^{\log d_A-\beta}t^{-2(d_A+d_B)} \exp\left(-\frac{(d_Ad_B-1)t^2}{32\pi^2\ln 2(\log d_A)^2}\right)dt\right|\\
\leq{}& \left(20\sqrt{2}\log d_A\right)^{2(d_A+d_B)}c^{-2(d_A+d_B)} \exp\left(-\frac{(d_Ad_B-1)c^2}{32\pi^2\ln 2(\log d_A)^2}\right)(\log d_A-\beta-c).
\end{split}
\end{equation}

One can easily choose $c=1$ when $\left(20\sqrt{2}\log d_A\right)^{2d_A+2d_B} \exp\left(-\frac{(d_Ad_B-1)}{32\pi^2\ln 2(\log d_A)^2}\right)\leq 1$. We can further ask $ \left(20\sqrt{2}\log d_A\right)^{2(d_A+d_B)} \exp\left(-\frac{(d_Ad_B-1)}{32\pi^2\ln 2(\log d_A)^2}\right)(\log d_A-\beta-1)\leq 1$. It is always possible since $\left(20\sqrt{2}\log d_A\right)^{2(d_A+d_B)} \exp\left(-\frac{(d_Ad_B-1)}{32\pi^2\ln 2(\log d_A)^2}\right)(\log d_A-\beta-1)$ tends to $0$ when $d_A$ goes to $\infty$.

As a consequence, when  $d_A$ tends to infinity, $\left(20\sqrt{2}\log d_A\right)^{2(d_A+d_B)} \exp\left(-\frac{(d_Ad_B-1)}{32\pi^2\ln 2(\log d_A)^2}\right)(\log d_A-\beta-1)\leq 1$. We will have
$\int\limits_{\mathcal{SU}(d_Ad_B)}\min\limits_{\ket{\alpha}, \ket{\beta}} S(\tr_A(U(\ket{\alpha}\otimes\ket{\beta}))) d\mu(U)\geq \log d_A-\beta-1$.
\end{proof}

\section{Proof of Lemma~\ref{lemma:closure}}\label{appendix:closure}
\begin{lemma}\label{lemma:closure}
$\dim(\overline{X_r})\leq d_A^2d_B^2-(d_A-r)(d_B-r)+2r-3$, where $\overline{X_r}$ is the Zariski closure of $X_r$.
\end{lemma}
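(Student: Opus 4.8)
The plan is to realize $X_r$ as the image of an incidence variety sitting over the pair (Segre variety, determinantal variety) and then to bound $\dim X_r$ by a fibre count. I would introduce
\begin{eqnarray}
\mathcal{I}=\{(\Phi,[v],[w]) : [v]\in\Sigma_{d_A,d_B},\ [w]\in\Sigma_{d_A,d_B}^{r-1},\ \Phi\in GL(d_Ad_B),\ \Phi v\in\mathbb{C}w\},
\end{eqnarray}
the set of triples for which the gate $\Phi$ carries the product state $[v]$ onto the Schmidt-rank-$\le r-1$ state $[w]$ inside $\mathbb{P}^{d_Ad_B-1}$. The condition $\Phi v\in\mathbb{C}w$ is the vanishing of the $2\times2$ minors of the matrix $[\,\Phi v\mid w\,]$, which is bihomogeneous in $v$ and in $w$, so it descends to a well-defined closed condition on $GL(d_Ad_B)\times\mathbb{P}^{d_Ad_B-1}\times\mathbb{P}^{d_Ad_B-1}$, and $\mathcal{I}$ is a closed subvariety of $GL(d_Ad_B)\times\Sigma_{d_A,d_B}\times\Sigma_{d_A,d_B}^{r-1}$. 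By the very definition of $X_r$, a gate $\Phi$ lies in $X_r$ iff $\Phi(\Sigma_{d_A,d_B})$ meets $\Sigma_{d_A,d_B}^{r-1}$, iff $(\Phi,[v],[w])\in\mathcal{I}$ for some $[v],[w]$; that is, $X_r=\pi_1(\mathcal{I})$ for the projection $\pi_1$ onto $GL(d_Ad_B)$. Since a morphism cannot raise dimension and Zariski closure preserves dimension, $\dim\overline{X_r}=\dim X_r\le\dim\mathcal{I}$.

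Next I would compute $\dim\mathcal{I}$ through the other projection $\pi_2:\mathcal{I}\to\Sigma_{d_A,d_B}\times\Sigma_{d_A,d_B}^{r-1}$. For fixed nonzero representatives $v,w$, the fibre is $\{\Phi\in GL(d_Ad_B):\Phi v\in\mathbb{C}w\}$. Inside $M_{d_Ad_B}(\mathbb{C})$ the map $\Phi\mapsto\Phi v$ is surjective onto $\mathbb{C}^{d_Ad_B}$, so the preimage of the line $\mathbb{C}w$ is a linear subspace of dimension $(d_Ad_B)^2-d_Ad_B+1$; completing $v$ and $w$ to bases produces an invertible $\Phi$ with $\Phi v=w$, so this subspace meets the open dense set $GL(d_Ad_B)$, and the fibre has dimension exactly $(d_Ad_B)^2-d_Ad_B+1$ for \emph{every} $([v],[w])$. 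Hence $\dim\mathcal{I}\le\dim\Sigma_{d_A,d_B}+\dim\Sigma_{d_A,d_B}^{r-1}+(d_Ad_B)^2-d_Ad_B+1$ (with equality, since the base $\Sigma_{d_A,d_B}\times\Sigma_{d_A,d_B}^{r-1}$ is irreducible).

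Finally I would substitute the projective dimensions recorded earlier in the paper, $\dim\Sigma_{d_A,d_B}=d_A+d_B-2$ and $\dim\Sigma_{d_A,d_B}^{r-1}=d_Ad_B-(d_A-r+1)(d_B-r+1)-1$, and simplify using $(d_A-r+1)(d_B-r+1)=(d_A-r)(d_B-r)+(d_A-r)+(d_B-r)+1$: the $d_Ad_B$ terms cancel and the expression collapses to $d_A^2d_B^2-(d_A-r)(d_B-r)+2r-3$, which is the asserted bound. There is no deep obstacle here; the only points that need a little care are the affine-versus-projective bookkeeping — one must check that the fibre contribution $(d_Ad_B)^2-d_Ad_B+1$ already absorbs the scalar freedom in $\Phi v=\lambda w$, so that the count is not off by one — and the verification that the generic fibre of $\pi_2$ genuinely meets $GL(d_Ad_B)$ rather than just $M_{d_Ad_B}(\mathbb{C})$, both of which are elementary linear algebra.
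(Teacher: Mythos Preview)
Your argument is correct and is essentially the paper's own proof: an incidence correspondence over $\Sigma_{d_A,d_B}\times\Sigma_{d_A,d_B}^{r-1}$ with constant fibre dimension $(d_Ad_B)^2-d_Ad_B+1$, projected to $GL(d_Ad_B)$. The paper reaches the same incidence set in two moves --- first restricting the action map $F:GL(d_Ad_B)\times\Sigma_{d_A,d_B}\to\mathbb{P}^{d_Ad_B-1}$ to $F^{-1}(\Sigma_{d_A,d_B}^{r-1})$, then mapping to $\Sigma_{d_A,d_B}\times\Sigma_{d_A,d_B}^{r-1}$ via $(g,[z])\mapsto([z],[gz])$ --- whereas you build the triple $(\Phi,[v],[w])$ directly; the fibre counts and the final arithmetic are identical.
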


The following technical lemmas will be needed.

\begin{lemma}[\cite{Tauvel:2005us}]\label{lemma:dim1}
$Z_1$ and $Z_2$ are both irreducible varieties over $\mathbb{C}$, and $\phi:Z_1\rightarrow Z_2$ is a dominant morphism, then $\dim(Z_2)\leq \dim(Z_1)$. Here, dominant means $\Phi(Z_1)$ is dense in $Z_2$.
\end{lemma}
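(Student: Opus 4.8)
The plan is to prove this through function fields, using the standard identification (for an irreducible variety $Z$ over $\mathbb{C}$) of $\dim Z$ with the transcendence degree of the function field $\mathbb{C}(Z)$ over $\mathbb{C}$. First I would reduce to the affine case. Pick a nonempty affine open $V_2\subseteq Z_2$ and a nonempty affine open $V_1\subseteq \phi^{-1}(V_2)\subseteq Z_1$; this is possible because $\phi$ being dominant forces $\phi(Z_1)$ to meet every nonempty open subset of the irreducible variety $Z_2$, so $\phi^{-1}(V_2)$ is a nonempty, hence dense, open subset of $Z_1$. The restriction $\phi|_{V_1}\colon V_1\to V_2$ is again dominant, and since dimension and function field are unchanged by passing to a nonempty open subset of an irreducible variety, it suffices to treat $\phi\colon V_1\to V_2$ with both $V_1,V_2$ affine and irreducible.

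Next, the morphism $\phi|_{V_1}$ corresponds to a $\mathbb{C}$-algebra homomorphism $\phi^{*}\colon \mathbb{C}[V_2]\to \mathbb{C}[V_1]$ between coordinate rings, and I claim $\phi^{*}$ is injective. Indeed, if $f\in\mathbb{C}[V_2]$ satisfies $\phi^{*}(f)=f\circ\phi=0$, then $f$ vanishes on $\phi(V_1)$; but $\phi(V_1)$ is dense in $V_2$ and $f$ is continuous, so $f\equiv 0$. Since $\mathbb{C}[V_1]$ and $\mathbb{C}[V_2]$ are integral domains, $\phi^{*}$ extends to an injection of fraction fields, giving an embedding $\mathbb{C}(Z_2)\hookrightarrow\mathbb{C}(Z_1)$ of field extensions of $\mathbb{C}$.

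Finally, I would invoke monotonicity of transcendence degree under inclusion of field extensions of $\mathbb{C}$: a transcendence basis of $\mathbb{C}(Z_2)$ over $\mathbb{C}$ consists of elements of $\mathbb{C}(Z_1)$ that are still algebraically independent over $\mathbb{C}$, so it extends to a transcendence basis of $\mathbb{C}(Z_1)$. Therefore
\[
\dim Z_2 \;=\; \operatorname{trdeg}_{\mathbb{C}}\mathbb{C}(Z_2) \;\le\; \operatorname{trdeg}_{\mathbb{C}}\mathbb{C}(Z_1) \;=\; \dim Z_1 .
\]

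The only nontrivial external inputs are the equality of dimension with transcendence degree and the behaviour of function fields under passage to affine opens, both of which are standard (and available in the cited reference \cite{Tauvel:2005us}); everything else is formal. The main thing to be careful about is the bookkeeping that guarantees every open set chosen is nonempty — and that is precisely where irreducibility of $Z_1$, $Z_2$ and the density of $\phi(Z_1)$ are used.
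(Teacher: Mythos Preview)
Your proof is correct and is the standard function-field/transcendence-degree argument. Note, however, that the paper does not actually prove this lemma: it is quoted as a known result from the reference \cite{Tauvel:2005us} and used as a black box, so there is no ``paper's own proof'' to compare against beyond observing that your argument is precisely the classical one such a citation points to.
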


\begin{lemma}[\cite{Tauvel:2005us}]\label{lemma:dim2}
$Z_1$ and $Z_2$ are both varieties over $\mathbb{C}$, and $\phi:Z_1\rightarrow Z_2$ is a morphism, then $\dim(Z_1)\leq \dim(Z_2)+\max\limits_{z\in Z_2}{\dim(\phi^{-1}(z))}$.
\end{lemma}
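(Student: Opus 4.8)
The statement is the fibre–dimension inequality for an arbitrary morphism, and the plan is to deduce it from the \emph{lower} bound $\dim\phi^{-1}(z)\ge\dim Z_1-\dim Z_2$ for a dominant morphism between irreducible varieties, applied at a single point of the image.

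\emph{Reductions.} Write $Z_1=\bigcup_i Y_i$ for the finitely many irreducible components, so $\dim Z_1=\max_i\dim Y_i$. Since $(\phi|_{Y_i})^{-1}(z)\subseteq\phi^{-1}(z)$ for every $z$, it suffices to prove the inequality for each restriction $\phi|_{Y_i}\colon Y_i\to Z_2$; so I may assume $Z_1$ is irreducible. Replacing $Z_2$ by the irreducible closed subvariety $Z_2':=\overline{\phi(Z_1)}$ does not increase the right-hand side, because $\dim Z_2'\le\dim Z_2$ and the fibres of $\phi$ over points of $Z_2'$ are unchanged; so I may also assume $\phi$ is dominant and $Z_2$ irreducible. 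Set $n=\dim Z_1$ and $m=\dim Z_2$. It now suffices to produce a single $z\in Z_2$ with $\dim\phi^{-1}(z)\ge n-m$.

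\emph{Lower bound on a fibre.} I would in fact show $\dim\phi^{-1}(z)\ge n-m$ for every $z\in\phi(Z_1)$. Pick an affine open neighbourhood $V=\operatorname{Spec} A$ of $z$ in $Z_2$, with $A$ a finitely generated $\mathbb{C}$-algebra and a domain; the maximal ideal $\mathfrak m$ of $z$ has height $m$, since every maximal ideal of such a ring has height equal to $\dim A=m$. By the converse to Krull's height theorem there are $g_1,\dots,g_m\in A$ with $\mathfrak m$ minimal over $(g_1,\dots,g_m)$; equivalently $\{z\}$ is an irreducible component of the common zero set $W_0:=Z_V(g_1,\dots,g_m)$ of $g_1,\dots,g_m$ in $V$, and $z$ lies in no other component of $W_0$ (if $z\in D$ for a component $D\ne\{z\}$ then $\{z\}\subsetneq D$, contradicting maximality of the component $\{z\}$). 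Now pull back: set $h_i=\phi^{\#}(g_i)\in\mathcal O(\phi^{-1}(V))$ and let $W:=Z(h_1,\dots,h_m)\subseteq\phi^{-1}(V)$ be their zero set, so $\phi(W)\subseteq W_0$. Since $\phi^{-1}(V)$ is a nonempty open subset of the irreducible $n$-dimensional variety $Z_1$, it is itself irreducible of dimension $n$, so $m$ successive applications of Krull's Hauptidealsatz give that every irreducible component of $W$ has dimension $\ge n-m$. Choose $x\in\phi^{-1}(z)$ (nonempty, since $z\in\phi(Z_1)$; note $\phi^{-1}(z)\subseteq W$) and let $C$ be an irreducible component of $W$ through $x$. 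Then $\overline{\phi(C)}$ is an irreducible closed subset of $W_0$ containing $z$, hence contained in some component of $W_0$; as $z$ meets only the component $\{z\}$, this forces $\overline{\phi(C)}=\{z\}$, i.e.\ $C\subseteq\phi^{-1}(z)$, whence $\dim\phi^{-1}(z)\ge\dim C\ge n-m$.

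\emph{Conclusion and main obstacle.} Taking any $z\in\phi(Z_1)$ now gives $\dim Z_1=n\le m+(n-m)\le\dim Z_2+\max_{z}\dim\phi^{-1}(z)$, which is the claim. The genuine content is commutative-algebraic and standard: that maximal ideals of a finitely generated domain over a field have height equal to its Krull dimension, the converse of Krull's height theorem, and Krull's Hauptidealsatz (together with the elementary fact that an irreducible subset lies in a component). Since this lemma is quoted verbatim from \cite{Tauvel:2005us}, one may alternatively just cite it there; the sketch above only records that a full proof amounts to assembling those three dimension-theoretic ingredients, so I would not grind through the details in the paper.
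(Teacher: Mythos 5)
Your argument is correct. The chain of reductions (to $Z_1$ irreducible via components, then to a dominant map onto the irreducible closure $\overline{\phi(Z_1)}$, which only decreases the right-hand side) is sound, and the core step is the standard lower bound $\dim\phi^{-1}(z)\ge \dim Z_1-\dim Z_2$ for every point $z$ of the image of a dominant morphism of irreducible varieties: your use of equicodimensionality of finitely generated domains over a field to get $\operatorname{ht}\mathfrak m=m$, the converse of Krull's height theorem to isolate $\{z\}$ as the unique component of $Z_V(g_1,\dots,g_m)$ through $z$, the pullback plus $m$ applications of the Hauptidealsatz to bound components of $W$ below by $n-m$, and the observation that a component of $W$ through a point of $\phi^{-1}(z)$ must map into $\{z\}$, are all correct as written (including the small but necessary checks that $\phi^{-1}(z)\subseteq W$ and that $\phi^{-1}(V)$ is nonempty, open and irreducible of dimension $n$). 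Note, however, that the paper does not prove this lemma at all: it is imported verbatim with a citation to Tauvel--Yu and used as a black box (alongside Lemma~\ref{lemma:dim1}) in the dimension count of Appendix~\ref{appendix:closure}. So there is no in-paper argument to compare against; what your write-up adds is a self-contained proof of the theorem on dimension of fibres, at the cost of rehearsing standard commutative-algebra dimension theory, and your closing remark that one may simply cite the reference is exactly the paper's own stance.
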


Lemma~$\ref{lemma:dim1}$ and Lemma~\ref{lemma:dim2} establish a connection between the dimensions of domain and codomain of a variety morphism.

\begin{proof}
We have a morphism $F:GL(d_Ad_B)\times \mathbb{P}^{d_Ad_B-1} \rightarrow \mathbb{P}^{d_Ad_B-1}$ which is just the left
action of $GL(d_Ad_B)$ on  $\mathbb{P}^{d_Ad_B-1}$, defined by $F(g,[w])=[g\cdot w]$.

We let $y_0=(1,0,\cdots,0)$ be a row vector with $d_Ad_B$ entries, and for any given $y_1$, $y_2\in \mathbb{P}^{d_Ad_B-1}$, we choose proper $g_1$ and $g_2\in GL(d_Ad_B)$, such that $[g_1\cdot y_0]=[y_1]$ and $[g_2\cdot y_0]=[y_2]$. Then we have 
\begin{equation}
[g\cdot y_2]=[y_1] \iff  [g g_2\cdot y_0]=[g g_1\cdot y_0] \iff [g_1^{-1}gg_2\cdot y_0]=[y_0].
\end{equation}

From above observations, F has the following property: for any $y_1$, $y_2\in \mathbb{P}^{d_Ad_B-1}$,
$F^{-1}(y_2)\cap \{GL(d_Ad_B)\times \{y_1\}\}\cong \left\{\left(\begin{array}{cc}
z_1& \alpha\\
0 & g'
\end{array}\right):z_1\in \mathbb{C}\backslash \{0\}, g'\in GL(d_Ad_B-1), \alpha\in \mathbb{C}^{k-1} \text{ is a row vector}\right\}$. Hence $\dim(F^{-1}(y_2)\cap {GL(d_Ad_B)\times
\{y_1\}})=d_A^2d_B^2-(d_Ad_B-1)$.

Let $P_1$, $P_2$ be projections of $GL(d_Ad_B)\times \mathbb{P}^{d_Ad_B-1}$ to $GL(d_Ad_B)$, $\mathbb{P}^{d_Ad_B-1}$ respectively.
Now we only look at $GL(d_Ad_B)\times \Sigma_{d_A,d_B}\subseteq GL(d_Ad_B)\times \mathbb{P}^{d_Ad_B-1}$, to get
$F:GL(d_Ad_B)\times \Sigma_{d_A,d_B}\rightarrow \mathbb{P}^{d_Ad_B-1}$. Then we have a characterization of $X_r$:
$X_r=P_1F^{-1}\left(\Sigma_{d_A,d_B}^{r-1}\right)$. In fact
\begin{alignat*}{2}
& g\in X_r \\
\iff {}& g(\Sigma_{d_A,d_B})\cap \Sigma_{d_A,d_B}^{r-1}\neq \emptyset  \\
\iff {}&\exists z_1\in \Sigma_{d_A,d_B}, z_2\in \Sigma_{d_A,d_B}^{r-1},&\qquad& \text{s.t.}\; g(z_1)=z_2 \\
\iff {}&\exists z_1\in \Sigma_{d_A,d_B}, z_2\in \Sigma_{d_A,d_B}^{r-1},&\qquad& \text{s.t.}\; (g, z_1)\in F^{-1}(z_2)\\
\iff {}&\exists z_2\in \Sigma_{d_A,d_B}^{r-1},&\qquad& \text{s.t.}\; g\in P_1F^{-1}(z_2)\\
\iff {}& g\in P_1 F^{-1}\left(\Sigma_{d_A,d_B}^{r-1}\right).
\end{alignat*}
So $\overline{X_r}\subseteq GL(d_Ad_B)$ is the Zariski closure of $X_r$, which is also an algebraic variety.

Next, we assert that 
$P_1: F^{-1}\left(\Sigma_{d_A,d_B}^{r-1}\right)\rightarrow \overline{X_r}$ is a dominant morphism.

Furthermore, consider $\Psi: F^{-1}\left(\Sigma_{d_A,d_B}^{r-1}\right)\rightarrow \Sigma_{d_A,d_B}\times \Sigma_{d_A,d_B}^{r-1}$ given by $\Psi(g,[z])=([z],[g\cdot z])$.

For $\forall z_1\in \Sigma_{d_A,d_B}$, $z_2\in \Sigma_{d_A,d_B}^{r-1}$, we have $\Psi^{-1}(z_1,z_2)= (g_2 T g_1^{-1}, z_1)$, where $T=\left\{\left(\begin{array}{cc}
z_0& \alpha\\
0 & g'
\end{array}\right):z_0\in \mathbb{C}\backslash \{0\}, g'\in GL(d_Ad_B-1), \alpha\in \mathbb{C}^{d_Ad_B-1} \text{ is a row vector}\right\}$, and $g_1, g_2\in GL(d_Ad_B)$, s.t. $g_1(y_0)=z_1$, $g_2(y_0)=z_2$.
So this is a dominant morphism. Then we obtain
\begin{align*}
\dim\left(F^{-1}\left(\Sigma_{d_A,d_B}^{r-1}\right)\right)
\leq {}& \dim(T)+\dim\left(\Sigma_{d_A,d_B}\times \Sigma_{d_A,d_B}^{r-1}\right)\\
={}& d_A^2d_B^2-d_Ad_B+1+\dim(\Sigma_{d_A,d_B})+\dim\left(\Sigma_{d_A,d_B}^{r-1}\right).
\end{align*}

It is required in Lemma~\ref{lemma:dim1} that varieties $Z_1$ and $Z_2$ are irreducible. Actually, this condition can be weakened. Lemma~\ref{lemma:dim1} is still true for the more general case that $Z_1$ and $Z_2$ are closed subsets of irreducible varieties~\cite{Hartshorne:1983we}. Through this approach, we can fill out the gap and apply this lemma without danger of confusion. Indeed, the irreducibility of $Z_1$ and $Z_2$ really holds, but the verification is not easy.

Then from Lemma~\ref{lemma:dim1} and Lemma~\ref{lemma:dim2}, we will have
\begin{align*}
\dim(\overline{X_r})\leq {}&\dim\left(F^{-1}\left(\Sigma_{d_A,d_B}^{r-1}\right)\right)\\
\leq {}&
\left(d_A^2d_B^2-(d_Ad_B-1)\right)+\dim(\Sigma_{d_A,d_B})+\dim\left(\Sigma_{d_A,d_B}^{r-1}\right)\\
={}& d_A^2d_B^2-(d_Ad_B-1)+(d_A+d_B-2)+d_Ad_B-(d_A-r+1)(d_B-r+1)-1\\
={}& d_A^2d_B^2-(d_A-r)(d_B-r)+2r-3.
\end{align*}
\end{proof}

\end{document}